\newtheorem{definition}{Definition}
\newtheorem{theorem}{Theorem}
\newtheorem{lemma}{Lemma}
\newtheorem{proposition}{Proposition}
\newtheorem{claim}{Claim}
\newtheorem{remark}{Remark}
\newtheorem{assumption}{Assumption}
\newtheorem*{assumption*}{Assumption}
\newcommand{\blankline}{\vspace{0.7cm}}
\newcommand{\tinyblank}{\vspace{0.3cm}}
\newcommand{\ml}{\mathcal{L}}
\newcommand{\me}{\mathcal{E}}
\newcommand{\eps}{{\epsilon}}
\newcommand{\cR}{{\mathcal{R}}}
\newcommand{\tcR}{{\tilde{\mathcal{R}}}}
\newcommand{\hcR}{{\widehat{\mathcal{R}}}}
\newcommand{\cK}{{\mathcal{K}}}
\newcommand{\cV}{{\mathcal{V}}}
\newcommand{\cP}{{\mathcal{P}}}
\newcommand{\cU}{{\mathcal{U}}}
\newcommand{\tcV}{{\tilde{\mathcal{V}}}}
\newcommand{\tV}{{\tilde{V}}}
\newcommand{\Ev}{{\mathcal{B}}}
\newcommand{\prob}{{\Pr}}
\newcommand{\ind}{\mathbb{I}}
\newcommand{\cN}{{\mathcal{N}}}
\newcommand{\EvD}{{\mathcal{D}}}
\newcommand{\EvF}{{\mathcal{F}}}
\newcommand{\gammabf}{\mathbf{\gamma}}
\newcommand{\alphabf}{\mathbf{\alpha}}
\newcommand{\argmax}{\textrm{argmax}}
\newcommand{\core}{\mathcal{C}}
\newcommand{\typelab}{\mathcal{T_{\ml}}}
\newcommand{\typeemp}{\mathcal{T_{\me}}}
\newcommand{\E}{\textrm{E}}
\newcommand{\Mid}{\left\vert\vphantom{\frac{1}{1}}\right.}
\newcommand{\real}{{\mathbb{R}}}
\def\D{\mathrm{d}}
\newcommand{\tn}{{\tilde{n}}}
\newcommand{\utheta}{{\underline{\theta}}}
\newcommand{\btheta}{{\overline{\theta}}}
\newcommand{\bk}{{\bar{k}}}
\newcommand{\uk}{{\underline{k}}}
\newcommand{\reals}{{\mathbb{R}}}
\newsavebox\myboxA
\newsavebox\myboxB
\newlength\mylenA
\newcommand*\xoverline[2][0.82]{%
    \sbox{\myboxA}{$\m@th#2$}%
    \setbox\myboxB\null
    \ht\myboxB=\ht\myboxA%
    \dp\myboxB=\dp\myboxA%
    \wd\myboxB=#1\wd\myboxA
    \sbox\myboxB{$\m@th\overline{\copy\myboxB}$}
    \setlength\mylenA{\the\wd\myboxA}
    \addtolength\mylenA{-\the\wd\myboxB}%
    \ifdim\wd\myboxB<\wd\myboxA%
       \rlap{\hskip 0.5\mylenA\usebox\myboxB}{\usebox\myboxA}%
    \else
        \hskip -0.5\mylenA\rlap{\usebox\myboxA}{\hskip 0.5\mylenA\usebox\myboxB}%
    \fi}
\newtheorem*{theorem*}{Theorem}
\begin{document}

\begin{titlepage}
\title{The size of the core in assignment markets}
\author{Yash Kanoria \and Daniela Saban \and Jay Sethuraman\thanks{
All the authors are at Columbia University. Emails: \texttt{\{ykanoria,dhs2131,js1353\}@columbia.edu}}}

\maketitle

\begin{abstract}
  Assignment markets involve matching with transfers, as in labor markets and housing markets. We consider a two-sided assignment market with agent types and stochastic structure similar to models used in empirical studies, and characterize the size of the core in such markets. Each agent has a randomly drawn productivity with respect to each type of agent on the other side. The value generated from a match between a pair of agents is the sum of the two productivity terms, each of which depends only on the type but not the identity of one of the agents, and a third deterministic term driven by the pair of types. We allow the number of agents to grow, keeping the number of agent types fixed. Let $n$ be the number of agents and $K$ be the number of types on the side of the market with more types.
  We find, under reasonable assumptions, that the relative variation in utility per agent over core outcomes  is bounded as $O^*(1/n^{1/K})$, where polylogarithmic factors have been suppressed. Further, we show that this bound is tight in worst case. We also provide a tighter bound under more restrictive assumptions.  Our results provide partial justification for the typical assumption of a unique core outcome in empirical studies.
\blankline

\noindent{\bf Keywords:} Assignment markets, matching, transferable utility, core, uniqueness of equilibrium, random market.
\end{abstract}

\end{titlepage}

\section{Introduction}
\label{sec:intro}

We study bilateral matching markets such as marriage markets, labor markets, and housing markets, that allow participants to form partnerships with each other for mutual benefit. The two classical
models of such matching markets are the {\em non-transferable} utility (NTU) model of Gale and Shapley~\cite{Gale1962}, where payments are not allowed between the agents; and the Shapley-Shubik-Becker {\em transferable} utility (TU)
model \cite{Shapley1971,becker1973theory}, where transfer payments are allowed between pairs of agents who form a match. For each of these models the natural solution concept is that of a {\em stable} outcome, in which there is no pair of agents who would be happier with each other than in their current outcome.
In fact, for TU matching markets, it is well known that the notion of a
stable outcome coincides with that of a competitive equilibrium. A stable outcome
is guaranteed to exist in any two-sided market, but is typically not unique.
The concept of stability is widely used as a starting point in theoretical and empirical studies
in the context of matching. A nearly unique
stable outcome is required in order to facilitate predictions, comparative statics and so on, but little is known about when this occurs in the TU
setting.\footnote{\label{note1}A small core has been found in special cases of the TU setting
as in \cite{goz1,goz2,HR2014}, which we discuss below. In the case of the NTU setting,
real markets have almost always been found to contain
a nearly unique stable outcome, e.g. \cite{RothPeranson}, and a body of theory explains this, e.g. \cite{ImmorlicaMahdian2005,KojimaPathak2009,ALK2013,Samet,LeshnoAzavedo}.}
In this work, we seek to characterize the size of the set of stable matches
as a function of market characteristics in TU matching markets.



The motivation for our work is twofold. First, uniqueness of the stable outcome is typically assumed in empirical investigations, though there is insufficient theoretical basis to justify such an assumption. We ask when such an assumption is justified. Second, it is of interest to know whether basic market primitives, i.e., the number of agents and the values of possible matches, are sufficient to determine the outcome of the market, or whether there is significant ambiguity arising from which equilibrium the market is in. Can a labor market support higher wages for labor without adding jobs or improving productivity, just by moving to a different equilibrium?
In TU matching markets, market primitives like the value generated by a pair/match, and even transfers occurring in outcomes are difficult to observe, which has hindered empirical studies of features like core size (NTU markets are much easier to study empirically\footnote{ See footnote~\ref{note1}.}). This further increases the importance of generating theoretical predictions of core size, which can also potentially guide future empirical work.

We consider the assignment game model of Shapley and Shubik~\cite{Shapley1971}, consisting of
``workers" and ``firms" each of whom can match with at most one agent on the other side.  
To model the different skills of the workers and the different requirements of the firms, we assume that there are $K$ types of workers and $Q$ types of firms.
Matching worker $i$ with firm $j$ generates a value $\Phi_{ij}$ (this can be divided between $i$ and $j$ in an arbitrary manner since transfers are allowed), which we model as a sum of two terms: a term $u(\cdot, \cdot)$ that depends only on the {\em types} of $i$ and $j$, and a term $\psi_{i,j}$ that represents the ``idiosyncratic'' contributions of worker $i$ to firm $j$. 
In our model the $u(\cdot,\cdot)$ is assumed to be fixed, but the $\psi_{ij}$ is the sum of two random variables,
the ``productivity" of worker $i$ with respect to the \emph{type} of firm $j$ and, symmetrically,
the ``productivity" of firm $j$ with respect to the \emph{type} of worker $i$. These
productivities are assumed to be independently drawn from a bounded distribution (satisfying certain assumptions) for each (agent,type) pair.
In addition to being normatively attractive, such a generative model for the value of a match
has been used in empirical studies of marriage markets, starting with Choo and Siow \cite{Choo2006,chiappori2011partner,Galichon2010}.

We study the size of the set of stable outcomes for a random market constructed in this way.
Shapley and Shubik~\cite{Shapley1971} showed that the set of stable outcomes (which is the same as the core)
has a lattice structure, and thus has two extreme stable matchings: the worker optimal stable match, where each worker earns
the maximum possible and each firm the minimum possible in any stable matching; and
the firm optimal stable matching which is the symmetric counterpart. Also,
all stable outcomes live on a maximum weight matching, which is generically unique.
Given these structural properties, our metric for the size of the core is quite natural:
we consider the difference between the maximum and minimum utility of a worker (equivalently, a firm) in the core, averaged over matched workers (or firms).
Our main result is that the size of the set of stable matchings,
as measured by this metric, is small under some reasonable assumptions
on market structure: specifically, the expected core size is
$O^*(1/\sqrt[\ell]{n})$ in a problem with $n$ agents, and at most $\ell$ types
of agents on each side (with $\ell$ fixed). We show that this bound is essentially tight by
constructing a sequence of markets such that the core size is $\Omega(1/\sqrt[\ell]{n})$.
Thus the core shrinks with market size, and this shrinking is faster when there are fewer
types of agents. Additionally, we obtain a tighter upper bound in the special case with just one type of employer and more employers than workers. Our upper bound in this case improves sharply as the number of additional employers $m$ increases; we establish a bound of $O^*(1/(n^{1/\ell}m^{1-1/\ell}))$, where $\ell$ is the number of worker types.

Our model has the following property (here, think of $u(\, \cdot\, , \, \cdot \,)$
as being formally incorporated in the worker productivity): For every (worker type, firm type) pair,
there is a ``price" associated with this type-pair, such that
for every matched pair of agents of these types, the utility of each agent is her
productivity (with respect to the type on the other side),
``corrected" additively (in opposite directions) by the price. We show
that variation in these type-pair prices is uniformly bounded as
$O^*(1/\sqrt[\ell]{n})$ across core allocations, in expectation, implying the bound on core size.
A key component of our analysis is to relate the combinatorial structure of the core to
 order statistics of certain independent identically distributed (i.i.d.) random variables (r.v.s).
These r.v.s are one-dimensional projections of point processes in (particular subregions of) the unit hypercube,
where the point processes correspond to the market realization.
An analytical challenge that we face is that the relevant projections as well as the
relevant order statistics are themselves a random function of the market realization.
We overcome this via appropriate union bounds. Our analysis throws light on which
aspects of market structure affect the core and its size.

Most of the related literature focuses on the NTU model of Gale and Shapley~\cite{Gale1962}. For that model, a number of papers establish a small core under various assumptions such as short preference lists~\cite{ImmorlicaMahdian2005,  KojimaPathak2009, KojimaPathakRoth2013},
strongly correlated preferences~\cite{Samet,LeshnoAzavedo}.
In a recent paper Ashlagi et al.~\cite{ALK2013} show
that in a random NTU matching market with long lists and uncorrelated preferences,
even a slight imbalance results in a significant advantage for the short side of the market
and that there is approximately a unique stable matching. Further, the near uniqueness of the stable matching
is found to be robust to varying correlations in preferences and other features, suggesting
that a small core may be generic in NTU matching markets.
There is an extensive literature on large
assignment games that  extends the many structural properties established by Shapley and Shubik
for finite assignment games to a setting in which the agents form a continuum, see for example
Gretzky, Ostroy and Zame~\cite{goz1,goz2}. Those papers also show convergence of large finite
markets to the continuum limit, including that the core shrinks to a point.
However, unlike in our model, they model the productivity of each partnership
as a deterministic function of the pair of types, with the only randomness being in the number of agents of each type. 
The work on assignment games that is most closely related to our work
is a recent preprint of Hassidim and Romm~\cite{HR2014}: 
in their model, all workers (firms) are a priori identical,
and the value of matching worker $i$ to firm $j$ is a random draw from a bounded
distribution, independently for every pair $(i,j)$.
For such a model, they establish an approximate ``law of one price,''
i.e., that workers are paid approximately identical
salaries in any core allocation, and that the long side gets
almost none of the surplus in unbalanced markets. In contrast, we work
with multiple types of workers and firms, and the value of
a match depends on the types of each agent, and
random variables that depend on the identity of
one of the agents and the {\em type} (but not the identity) of the other agent.

The rest of the paper is organized as follows. We present our model in Section \ref{sec:model}, our results in Section \ref{sec:results}, and an overview of the proof of our main result in Section \ref{sec:proofs}. We conclude with a discussion in
Section \ref{sec:discussion}. Several proofs are deferred to appendices.

\section{Model Formulation}
\label{sec:model}

We consider a two-sided, transferable utility matching market with a finite number of agents. The sides of the market are represented by the labor ($\ml$) and the employers ($\me$). Let $n_{\ml}$ be the number of agents in $\ml$ and $n_{\me}$ be the number of agents in $\me$; we let $n := |\ml| + |\me|$ denote the size of the market, i.e., the total
number of agents in the problem.
We assume that the underlying graph is complete, that is, all pairs of agents can potentially be matched.
Each side of the market is partitioned into a finite number of {\em types}
and we let $K$ and $Q$ denote the number of different types of agents in $\ml$ and $\me$ respectively. We define $\mathcal{T}_{\ml} := \{1, \ldots, K\}$ and $\mathcal{T}_{\me} := \{1, \ldots, Q\}$ to be the set of types in the labor and employer side respectively. Let $\mathcal{T}=\mathcal{T}_{\ml}\times\mathcal{T}_{\me}$ denote the set of pairs of types.
If $n_{\ml}=n_{\me}$ we say that the problem is \emph{balanced}. Otherwise, we say that the problem is \emph{unbalanced}. In addition, for a given type $t \in \typelab\cup \typeemp$, we denote by $n_t$ the number of agents of type $t$.
Finally, let $\tau(a)$ denote the type of agent $a \in \ml \cup \me$; given a type $t$ and an agent $a$,  we say that $a\in t$ if $\tau(a)=t$. In what follows we typically use $i$ to denote an individual agent in $\ml$, and $j$ to denote an individual agent in $\me$.

The value of the match between $i$ and $j$ is denoted $\Phi(i,j)$.
An \emph{outcome} is a pair $(M,\gammabf)$, where $M$ is a matching between agents in $\ml$ and $\me$, and $\gammabf$ is a payoff vector such that $\gamma_i + \gamma_j \; = \; \Phi(i,j)$ for every pair of matched agents $i\in \ml$, $j \in \me$, $(i,j) \in M$. That is, the vector $\gammabf$ indicates how the value of a match is divided among the agents involved in the match.
In this paper we shall be concerned with outcomes that are in the {\em core}, i.e., outcomes such that no coalition of players can produce greater value among themselves than the sum of their utilities. Shapley and Shubik~\cite{Shapley1971} show that for this matching market model, an outcome $(M,\gammabf)$ is in the core if and only if it is satisfies stability.
The stability condition requires
$\gamma_i + \gamma_j \geq \Phi(i,j)$ for all $i \in \ml$ and $j \in \me$, and further requires the $\gammabf$ vector to
be non-negative.\footnote{Note that in any unstable outcome, there must either be an individual agent who would prefer to not participate in the matching (because of a negative payoff) or a \emph{blocking pair} of agents who can both do better by matching with
each other (because the value they generate by matching with each other exceeds their current payoffs).} 
The set of stable outcome utilities turns out to be the set of optima of the dual to the maximum weight matching linear program, implying in particular that the matching $M$ in a stable outcome must be a maximum weight matching.

\subsection{Structure of $\Phi(i,j)$}

We assume that $\Phi(i,j)$ is additively separable as follows.

\begin{assumption*}[Separability]
$\Phi(i,j) = u(\tau(i),\tau(j)) + \epsilon^{\tau(i)}_j + \eta^{\tau(j)}_i.$
\end{assumption*}
\noindent

It is natural to think that the value of matching $i$ and $j$ can be broken down into a sum of two components: a utility $u(\tau(i),\tau(j))$ that depends only on the agents' types, and a term $\psi^{\tau(i),\tau(j)}_{i,j}$ which is match specific and potentially depends on both the identity of the agents as well as their types. The separability assumption states that the match-specific component is further additively separable into two terms that each depend on the identity of one of the agents and only the type of the other agent. In particular, for any fixed employer $j$ and two distinct workers $i,i' \in \ml$ we have $\epsilon^{\tau(i)}_j=\epsilon^{\tau(i')}_j$ whenever $\tau(i)=\tau(i')$, as the term $\epsilon$ only depends on the \emph{type} of the agents in $\ml$. Analogously, the term $\eta$ depends on the \emph{individual} worker $i\in \ml$ but only the \emph{type} of the firm $j\in \me$.

We model the term $u(\tau(i), \tau(j))$ as a fixed constant, whereas the $\epsilon$ and $\eta$ terms are modelled as random variables, independent across agent type pairs. The continuum limit of such a model was introduced by Choo and Siow \cite{Choo2006}, who used the model to empirically estimate certain structural features of marriage markets. Such a model is attractive in allowing for reasonable heterogeneity and idiosyncratic variation via the random variables, while still remaining structured due to a fixed number of types. While these features have been important in facilitating estimation~\cite{Choo2006,chiappori2011partner}, they simultaneously also make this a plausible model of real markets.

We further assume that the terms $\epsilon^{\tau(i)}_j, \eta^{\tau(j)}_i$ are independent random draws from the
uniform $[0,1]$ distribution. While the assumption of i.i.d. $U[0,1]$ r.v.s appears quite restrictive,
our results and proofs extend to arbitrary non-atomic bounded distributions supported on a closed interval,
with positive density everywhere in the support.

\subsection{Preliminaries}

We now state some preliminary observations on the structure of the core under the separability assumption. 
We start by showing that the payoffs can be expressed more conveniently.
For each $i\in \ml$ and each type $q\in \typeemp$, let $\tilde{\eta}^q_i = u(\tau(i), q) + \eta^q_i$.

In our market model with probability 1 the maximum weight matching is unique, 
so we assume a unique maximum weight matching $M$ to simplify the exposition.
We denote by $M(t)$ the set of agents who are matched to an agent of type $t$ under $M$.
In addition, we use $U$ to denote the set of unmatched agents under matching $M$. 

\begin{proposition}\label{prop:introducing_alphas}
Let $M$ be the unique maximum weight matching. Any core solution $(M,\gammabf)$, corresponds to a vector $\alphabf \in \real^{K \times Q}$ such that the payoffs can be expressed as:
\begin{itemize}
\item $\gamma_i = \tilde{\eta}^{q}_i - \alpha_{kq}$, for all $i\in \ml$ such that $ \tau(i)=k$ and $i \in M(q)$.
\item $\gamma_j = \epsilon^k_j + \alpha_{kq}$, for all $j \in \me$ such that $\tau(j)=q$ and $j \in M(k)$.
\end{itemize}
\end{proposition}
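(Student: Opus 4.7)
The plan is to show that a single scalar ``price'' $\alpha_{kq}$ can be attached to every type pair $(k,q)$ for which a matched pair of those types exists in $M$, and that this price simultaneously describes the payoff of every matched worker of type $k$ assigned to a firm of type $q$ (and symmetrically every such firm). For type pairs that are not realized in the matching, the value of $\alpha_{kq}$ is irrelevant for the statement and can be chosen arbitrarily, so the content lies entirely in type pairs $(k,q)$ such that some $(i,j) \in M$ has $\tau(i)=k$, $\tau(j)=q$.

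First I would fix such a realized matched pair $(i,j)$ and invoke the feasibility condition $\gamma_i + \gamma_j = \Phi(i,j)$. Substituting the separability assumption and the definition $\tilde{\eta}^q_i = u(k,q) + \eta^q_i$ gives $\gamma_i + \gamma_j = \tilde{\eta}^q_i + \epsilon^k_j$. This motivates the tentative definition $\alpha_{kq}(i,j) := \tilde{\eta}^q_i - \gamma_i = \gamma_j - \epsilon^k_j$, which automatically yields both formulas in the proposition for the particular pair $(i,j)$. The real task is then to prove that $\alpha_{kq}(i,j)$ is independent of the choice of matched pair of types $(k,q)$.

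For this I would use the stability (blocking pair) inequalities. Take any two matched pairs $(i_1,j_1)$ and $(i_2,j_2)$ in $M$ both of types $(k,q)$. Stability across the ``swapped'' pair $(i_1,j_2)$ gives
\[
\gamma_{i_1} + \gamma_{j_2} \;\geq\; \Phi(i_1,j_2) \;=\; \tilde{\eta}^q_{i_1} + \epsilon^k_{j_2},
\]
which rearranges to $\alpha_{kq}(i_1,j_1) \leq \alpha_{kq}(i_2,j_2)$; the symmetric inequality for $(i_2,j_1)$ yields the reverse, so the two prices are equal. By transitivity this extends to all realized matched pairs of types $(k,q)$, so a single $\alpha_{kq}$ is well defined and the two identities in the proposition hold for every relevant worker and firm.

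The only subtle point in this argument is keeping the exchange well defined when there is just a single matched pair of types $(k,q)$, in which case the statement is automatic and no swap is needed; and handling type pairs $(k,q)$ with no realized match in $M$, where $\alpha_{kq}$ is simply set to any value (its choice affects neither the displayed formulas nor the outcome). No genuine obstacle arises, since separability converts the value of every potential blocking pair of types $(k,q)$ into a quantity that depends on $i$ and $j$ only through $\tilde{\eta}^q_i$ and $\epsilon^k_j$, exactly matching the form of $\gamma_i$ and $\gamma_j$ under the proposed parametrization.
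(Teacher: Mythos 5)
Your proof is correct, and it fills in exactly the argument the paper leaves implicit when it says Proposition~\ref{prop:introducing_alphas} ``follows from stability.'' Defining the candidate price from a single matched pair via feasibility, then using the two crossed stability inequalities $\gamma_{i_1}+\gamma_{j_2}\geq \Phi(i_1,j_2)$ and $\gamma_{i_2}+\gamma_{j_1}\geq \Phi(i_2,j_1)$ to pin down that the candidate is pair-independent, is the standard argument (the same one invoked in the cited \cite[Proposition 1]{chiappori2011partner}).
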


Proposition~\ref{prop:introducing_alphas} follows from stability, and formalizes the existence of a single ``price" for every type-pair $(k,q)$ that is common across all matched pairs of agents with those types.
Based on Proposition~\ref{prop:introducing_alphas}, any core solution can be expressed in terms of the maximum weight matching $M$ and the vector $\alphabf$.

The following proposition states necessary and sufficient conditions for $(M,\alphabf)$ to be a core outcome. (The maximum over an empty set is defined as $-\infty$.)
\begin{proposition}\label{prop:general_nec_cond}
Let $M$ be the unique maximum weight matching. The following conditions are necessary and sufficient for $(M,\alphabf)$ to be a core solution:
\begin{enumerate}
\item[(ST)] For every pair of types $(k,q),~(k',q') \in \mathcal{T}$:
$$\min_{i \in k'\cap M(q')} \tilde{\eta}^{q'}_i-\tilde{\eta}^{q}_i + \min_{j \in q\cap M(k)}\epsilon^k_j - \epsilon^{k'}_j \geq \alpha_{k'q'} - \alpha_{kq} \geq \max_{i \in k\cap M(q)} \tilde{\eta}^{q'}_i-\tilde{\eta}^{q}_i + \max_{j \in q'\cap M(k')}\epsilon^k_j - \epsilon^{k'}_j.$$
\item[(IM)]For every pair of types $(k,q) \in \mathcal{T}$:
\begin{align*}
\phantom{\textup{and} \quad} \min_{j \in q\cap M(k)}\epsilon^k_j  &\geq  - \alpha_{kq} \geq  \max_{j \in q \cap U}\epsilon^k_j\,, \qquad\\
\textup{and} \quad \min_{i \in k\cap M(q)}\tilde{\eta}^q_i  &\geq  \; \, \alpha_{kq}\ \geq  \max_{i \in k \cap U}\tilde{\eta}^q_i\,.
\end{align*}
\end{enumerate}

\end{proposition}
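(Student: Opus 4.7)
The plan is to establish the claimed equivalence by showing (ST) and (IM) are each derivable from, and together imply, the defining conditions of the core: pairwise stability $\gamma_i+\gamma_j\geq\Phi(i,j)$ for all $(i,j)\in\ml\times\me$ and non-negativity $\gamma\geq 0$ (with $\gamma=0$ for unmatched agents, as forced by efficiency of $M$). All manipulations substitute the payoff expressions from Proposition~\ref{prop:introducing_alphas}: $\gamma_i=\tilde{\eta}^q_i-\alpha_{kq}$ for matched $i\in k\cap M(q)$ and $\gamma_j=\epsilon^k_j+\alpha_{kq}$ for matched $j\in q\cap M(k)$, with $\Phi(i,j)=\tilde{\eta}^{\tau(j)}_i+\epsilon^{\tau(i)}_j$.

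For necessity, I would fix type pairs $(k,q),(k',q')$ and consider arbitrary matched $i\in k\cap M(q)$ and $j\in q'\cap M(k')$. Substituting the payoff formulas into $\gamma_i+\gamma_j\geq\Phi(i,j)$ and rearranging yields $\alpha_{k'q'}-\alpha_{kq}\geq(\tilde{\eta}^{q'}_i-\tilde{\eta}^q_i)+(\epsilon^k_j-\epsilon^{k'}_j)$. Taking the maximum over the admissible $i,j$ produces the right inequality of (ST); interchanging the roles of the two type pairs and taking minima produces the left inequality. For (IM), the ``$\min\geq\pm\alpha_{kq}$'' halves follow by substituting the payoffs into $\gamma\geq 0$ on matched agents, while the ``$\pm\alpha_{kq}\geq\max$'' halves follow from stability applied to pairs with one matched and one unmatched agent of the relevant type pair; for instance, for unmatched $i\in k$ paired with $j\in q\cap M(k)$, the inequality $\gamma_i+\gamma_j\geq\Phi(i,j)$ simplifies to $\alpha_{kq}\geq\tilde{\eta}^q_i$.

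For sufficiency, assuming (ST) and (IM) hold, non-negativity of every matched agent's payoff is immediate from the ``$\min\geq\pm\alpha_{kq}$'' halves of (IM), and unmatched agents have $\gamma=0$. For pairwise stability of $(i,j)$, I would case-split on matching status. If both are matched with types $(k,q)$ and $(k',q')$ respectively, (ST) gives exactly the inequality obtained by rearranging $\gamma_i+\gamma_j\geq\Phi(i,j)$, reversing the step from necessity. If both are unmatched, the required inequality is $0\geq\Phi(i,j)$, which holds since $M$ is maximum weight---otherwise appending $(i,j)$ would strictly improve it. For the mixed case, say $i\in k\cap U$ and $j\in q'\cap M(k')$, I would combine the (IM) bound $\alpha_{kq'}\geq\tilde{\eta}^{q'}_i$ with the right (ST) inequality for $(k,q'),(k',q')$, whose $\tilde{\eta}^{q'}$-difference collapses to zero, leaving $\alpha_{k'q'}-\alpha_{kq'}\geq\epsilon^k_j-\epsilon^{k'}_j$; summing yields $\alpha_{k'q'}+\epsilon^{k'}_j\geq\tilde{\eta}^{q'}_i+\epsilon^k_j$, which is exactly $\gamma_j\geq\Phi(i,j)$.

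The main obstacle I anticipate is the bookkeeping for type pairs with no matches under $M$, where the max/min expressions range over empty sets. Under the convention that max over $\emptyset$ equals $-\infty$ and min equals $+\infty$, some (ST) constraints become vacuous and the corresponding $\alpha$ values are pinned only by the surviving one-sided inequalities of (IM); verifying that these nevertheless certify stability for every (matched, unmatched) and (unmatched, unmatched) pair will require a careful appeal to the maximum-weight property of $M$ to rule out profitable augmenting swaps whenever the chaining argument used in the mixed case above is vacuous.
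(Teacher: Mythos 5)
The paper does not actually prove this proposition: the text gives a two-sentence intuition (which is just your necessity argument in words) and then cites Chiappori, Salani\'e, and Weiss, Proposition~1, for a full proof. So there is no in-paper proof to compare against; what matters is whether your argument is complete.

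Your \emph{necessity} direction is sound for the type pairs that actually occur in $M$, and it matches the paper's sketch: substitute the formulas from Proposition~\ref{prop:introducing_alphas} into $\gamma_i+\gamma_j\geq\Phi(i,j)$, take $\max$/$\min$ over the right index sets to get (ST), and use $\gamma\geq 0$ plus blocking pairs with unmatched agents to get (IM).

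The \emph{sufficiency} direction has a genuine gap, and you have located it correctly but not closed it. In the mixed case with $i\in k\cap U$ and $j\in q'\cap M(k')$, $k\neq k'$, you derive $\gamma_j\geq\Phi(i,j)$ by combining the (IM) bound $\alpha_{kq'}\geq\tilde{\eta}^{q'}_i$ with the right-hand (ST) inequality for the pair $\big((k,q'),(k',q')\big)$. But that (ST) inequality has $\max_{i\in k\cap M(q')}(\cdot)$ on its right side and $\min_{j\in q'\cap M(k)}(\cdot)$ on its left side, both of which range over the \emph{same} empty set whenever $N(k,q')=0$. Under the paper's convention ($\max_{\emptyset}=-\infty$, so $\min_{\emptyset}=+\infty$), the entire two-sided (ST) constraint for this pair of type pairs is vacuous, not just one side. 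Moreover, in this situation \emph{every} (ST) constraint in which $\alpha_{kq'}$ appears is vacuous for the same reason, so there is no chain through any intermediate type pair, and $\alpha_{kq'}$ is tied to the rest of the $\alpha$-vector only through the one-sided (IM) bound. Your closing paragraph proposes to ``appeal to the maximum-weight property of $M$ to rule out profitable augmenting swaps,'' but maximality of $M$ only gives $\Phi(i,j')\leq\Phi(i'',j')$ where $j'$ is the \emph{partner} of the matched $i''$ one swaps out; it does not bound $\gamma_j$ for an arbitrary matched $j$, because a core payoff $\gamma_{i''}$ may be strictly positive and so $\gamma_j=\Phi(i'',j)-\gamma_{i''}$ can fall strictly below $\Phi(i'',j)$. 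The inequality you actually need is an upper bound on $\gamma_{i''}$, which is a stability constraint, not a consequence of efficiency. Closing the gap requires more structure than you have invoked: either pass to the LP-duality characterization of the core (the $\alpha$'s are then differences of optimal dual variables, and all the relevant feasibility constraints come from the dual of the assignment LP directly, with no need to chain through vacuous type pairs), or establish a graph-theoretic chaining argument through type pairs with $N>0$ analogous to the connectivity structure in Lemma~\ref{lemma:type_adjacency_graph}, together with an explicit treatment of types all of whose agents are unmatched.
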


The first set of conditions follow from the non-existence of a blocking pair of matched agents. The second conditions follow from the fact that utilities are non-negative (implying the left inequalities) and the non-existence of a blocking pair involving an unmatched agent. See \cite[Proposition 1]{chiappori2011partner} for a proof.

We conclude with a definition of the size of the core, denoted by $\core$.
We define $\core$ as the difference between the maximum and minimum utility of a worker (or firm) in the core, averaged over workers matched under $M$. This can be equivalently stated in terms of the vector $\alphabf$. 
For each pair of types $(k,q) \in \mathcal{T}$, let $\alpha^{\max}_{kq}$ and $\alpha^{\min}_{kq}$ be the maximum and minimum possible values of $\alpha_{kq}$ among core $\alpha$ vectors.

\begin{definition}[Size of the core]
Let $M$ be the unique maximum weight matching. For each pair of types $(k,q)\in \mathcal{T}$, let $N(k,q)$ denote the number of matches between agents of type $k$ and agents of type $q$. Then, the size of the core is denoted by $\core$ and is defined as:
$$\core = \frac{\sum_{k} \sum_{q} N(k,q)|\alpha^{\max}_{kq} - \alpha^{\min}_{kq}|}{\sum_{k} \sum_{q} N(k,q)}.$$
\end{definition} 

\section{Results}\label{sec:results}

We keep the number of agent types 
fixed and allow the number of agents to grow, focusing on how the size of the core
scales as the market grows.

Given the stochastic nature of the our problem, the size of the core $\core$
is itself a random variable. Therefore, the main focus of our work is to study
how the expected value of $\core$ depends on the characteristics of the market.
In finite markets it is generically possible to marginally
modify some payoffs in a core solution without violating stability and, therefore,
the size of the core is strictly positive \cite{Shapley1971}. However, as the size of the market
increases (the agent types stay the same), the set of core vectors $\alpha$ should shrink as
an increase in the number of
stability constraints limits the possible perturbations to the payoffs, cf.
Proposition \ref{prop:general_nec_cond}. 

We start by considering the simple case of markets with one type on each side,
that is $K=Q=1$. Given that there is only one type of agent on each side, the deterministic
utility term $u=u(\tau(i),\tau(j))$ will be the same for all possible matches,
regardless the identity of the agents.
The value of a match between agents $i \in \ml$ and $j\in \me$ is
$\Phi(i,j) = u + \eta_i + \epsilon_j$. Suppose $u>0$.

\begin{remark}
In the case of a balanced market, i.e., $n_{\ml} = n_{\me}$, the above market
has $\core \geq u$ with probability 1. In particular, $\E[\core] = \Omega(1)$.
\end{remark}

%
%
%
%
%
The idea is the following: all agents will be matched in a stable solution and by
Proposition~\ref{prop:introducing_alphas}, we can describe the size of the core in
terms of a single parameter $\alpha$; by Proposition~\ref{prop:general_nec_cond},
the core consists of all $\alpha \in  [- \min_{j} \epsilon_j, u+\min_{i}{\eta}_i]$.
In other words, the value $u$ that is part of $\Phi(i,j)$ for each $(i,j)$ can
be split in an arbitrary fashion between employers and workers.
On the other hand, in case of any imbalance, i.e., $n_{\ml} \neq n_{\me}$,
it turns out that $u$ must go entirely to the short side of the market, and
the size of the core is $O(1/n)$ (the distance between consecutive order statistics
of the $\eps_j$'s or the $\eta_i$'s). Thus, the core is small and rapidly shrinking
in any unbalanced market in the case of $K=Q=1$.


%


We now consider the general case of $K$ types of labor and $Q$ types of employers.
The following condition generalizes the imbalance condition to the case of multiple types.
The idea is to get rid of the cases that, for certain values of deterministic utilities $u(\, \cdot \, , \, \cdot \,)$, may resemble a balanced problem.
\begin{assumption}\label{assumpt:avoid_symmetry}
For ever pair of subsets of types $\mathcal{S}\subseteq \typelab$ and $\mathcal{S'}\subseteq \typeemp$ we must have $\sum_{t\in \mathcal{S}} n_t \neq \sum_{t \in\mathcal{S'}} n_t$. In words, this means that there is no subset of types such that the submarket formed by agents of those types is balanced.
\end{assumption}
We highlight that in our setting with fixed $K$ and $Q$ and growing $n$, ``most" markets
satisfy Assumption \ref{assumpt:avoid_symmetry}.\footnote{Consider possible vectors
$\cN = \{(n_{t})_{t \in \typelab \cup \typeemp }: \sum_t n_t =n\}$ describing the number of agents of each type.
Then $O(1/n)$ fraction of these vectors violate Assumption \ref{assumpt:avoid_symmetry}.}
We make a further regularity assumption, namely that the number of agents of
each type grows linearly in the size of the market.

\begin{assumption}\label{assump:essential_assumpts}
There exists $C > 0$ such that
for all types $t \in \typelab \cup \typeemp$,  we have $n_t \geq Cn$.
\end{assumption}

We now present our main theorem.
\begin{theorem}\label{thm:general_k_q}
Consider $K\geq 1$ types of labor, and $Q\geq 1$ types of employers. There exists
$f(n) = O^*\left(\frac{1}{\sqrt[\max(K,Q)]{n}}\right)$ such that under
Assumption~\ref{assumpt:avoid_symmetry} and Assumption~\ref{assump:essential_assumpts}, for
a market with $n$ agents
we have $\E[\core] \leq f(n)$. Further, there
exists a sequence of markets with $K$ types of labor and $Q$ types of employers
such that $\E[\core] = \Omega\left(\frac{1}{\sqrt[\max(K,Q)]{n}}\right)$.
\end{theorem}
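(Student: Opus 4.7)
The plan is to prove the upper and lower bounds separately.

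For the upper bound, I would begin from Proposition~\ref{prop:general_nec_cond}: the set of core $\alpha$ vectors is the polyhedron defined by the ST and IM inequalities, so for each type-pair $(k,q)$ the range $|\alpha^{\max}_{kq} - \alpha^{\min}_{kq}|$ is controlled by the tightest ``sandwich'' obtained by combining pairs of ST/IM inequalities (possibly chained through other type-pairs). Each such sandwich takes the form $\min_{i \in A} X_i - \max_{i \in B} X_i$, where $X_i$ is a one-dimensional linear combination (typically a coordinate difference) of the productivity vectors $(\tilde{\eta}^1_i,\ldots,\tilde{\eta}^Q_i)$ or $(\epsilon^1_j,\ldots,\epsilon^K_j)$, and the subsets $A, B$ of agents are induced by the maximum-weight matching $M$. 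Thus bounding the core size reduces to bounding expected gaps between order statistics of 1D projections of the i.i.d.\ worker and firm point processes in the respective hypercubes $[0,1]^Q$ and $[0,1]^K$.

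Without loss of generality let $\ell = \max(K,Q) = K$. I would then carry out three steps. First, for a fixed choice of partition structure and projection direction, bound each order-statistic gap by a density argument on uniform points in $[0,1]^K$; the worst-case expected gap over admissible projection directions scales as $O^*(n^{-1/K})$, because the relevant boundary between matched subsets can intersect the hypercube in a region whose effective measure shrinks in certain orientations, reducing the number of points close to the boundary. Second, invoke Assumption~\ref{assumpt:avoid_symmetry} to rule out generic degeneracies (empty $A$ or $B$, which would make constraints vacuous) and Assumption~\ref{assump:essential_assumpts} to ensure that each relevant partition cell contains $\Theta(n)$ points. Third --- and this is the main obstacle --- both the subsets $A,B$ and the projection directions are random functions of the same realization that generates the productivities, so one cannot simply apply standard order-statistic concentration ``for free.'' I would handle this by discretizing the feasible set of $\alpha$ vectors at resolution $n^{-1/K}\,\mathrm{polylog}(n)$, giving only $\mathrm{poly}(n)$ grid points, and then taking a union bound over the grid. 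Combining the three steps gives the uniform bound $|\alpha^{\max}_{kq} - \alpha^{\min}_{kq}| = O^*(n^{-1/K})$ in expectation for every $(k,q)$, and hence the claimed bound on $\E[\core]$.

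For the lower bound, I would construct an explicit family of markets achieving $\E[\core] = \Omega(n^{-1/\max(K,Q)})$. A natural choice takes $Q=1$ and $K = \ell$, with the deterministic utilities $u(\cdot,\cdot)$ chosen symmetric across worker types and the agent counts $(n_t)$ satisfying Assumption~\ref{assumpt:avoid_symmetry} while keeping all types of $\Theta(n)$ size. The firms' $K$-dimensional productivity vectors form a uniform point process in $[0,1]^K$, which the maximum-weight matching partitions into $K$ cells (one per worker type) separated by hyperplanes determined by the optimal dual. For a particular $\alpha_{k,1}$, the sandwich from Proposition~\ref{prop:general_nec_cond} reduces to a projected gap across a boundary hyperplane whose worst-case orientation makes the order-statistic spacing scale as $\Omega(n^{-1/K})$ with constant probability. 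The main technical difficulty is to show that this gap is attained by the \emph{random} maximum-weight matching rather than a fixed one; this would be done by analyzing the stability of the dual optimum under small perturbations of the realization and showing that the pathological configuration persists with constant probability.
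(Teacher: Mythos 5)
Your high-level reduction (core size controlled by gaps between order statistics of one-dimensional projections of uniform hypercube points) matches the paper's, and you correctly identify the central obstacle that the relevant subsets and projection directions are themselves realization-dependent. However, both the upper- and lower-bound sketches diverge from what actually works, and both contain gaps that are not merely ``details.''

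\textbf{Upper bound.} Your route is to discretize the $\alpha$-space at resolution $n^{-1/K}\mathrm{polylog}(n)$ and union bound over $\mathrm{poly}(n)$ grid points. The paper does something structurally different: it builds the type-adjacency graph $G(M)$ (types are vertices, edge $(k,q)$ iff $N(k,q)>0$), proves (Lemma~\ref{lemma:type_adjacency_graph}, from Assumption~\ref{assumpt:avoid_symmetry}) that every connected component contains a ``marked'' type with unmatched agents, and then inducts on distance in $G(M)$ from a marked vertex. This structure is essential: condition (IM) pins down the \emph{absolute} value of $\alpha_{kq}$ only at types that have unmatched agents, while (ST) only controls \emph{relative} variation $\alpha_{kq}-\alpha_{k'q'}$. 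Without the graph induction one has no anchor. Your description of Assumption~\ref{assumpt:avoid_symmetry} as merely ``ruling out generic degeneracies (empty $A$ or $B$)'' does not capture this role --- the assumption's real purpose is that no submarket is balanced, hence every component has a marked type, hence the anchoring works. Separately, the paper avoids discretizing $\alpha$ entirely: it instead constructs fixed, deterministic regions $\cR^k$, $\cR^{k_1,k_2}(\delta)$, $\tcR^k(\delta)$ whose contents can be related to the (random) sets $M(k)$ and $U$ regardless of which $\alpha$ in the core one is looking at (e.g., $\cR^{k^*}\subseteq Z(k^*)$ when $k^*=\argmax_k\alpha_{kt}$). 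Your discretization plan leaves unaddressed which constraint one would show to be tight at a grid point that is, say, in the interior of the core, and how the set of active constraints (which depends on the realized matching) interacts with the grid. As written, there is no argument that the discretization step actually localizes the core.

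\textbf{Lower bound.} You propose $Q=1$, $K=\ell$, with $u(\cdot,\cdot)$ \emph{symmetric} across worker types. The paper's construction (Proposition~\ref{prop:lower_bd}) is deliberately \emph{asymmetric}: one special type $k_*$ has $u(k_*,1)=0$ while all other types have $u(k,1)=3$, plus a carefully chosen imbalance $(K-1)\tn+1$ firms versus $K\tn$ workers. This asymmetry forces exactly one worker of type $k_*$ to be matched, so the relevant gap is between the first and second order statistics of $X_j=\max_{k\neq k_*}\eps_j^k-\eps_j^{k_*}$, which scales as $\Theta(n^{-1/K})$ because of the shape of the tail of this distribution near $-1$. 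A symmetric construction would not isolate such a one-agent boundary, and it is far from clear the core would then be this large. You also flag ``showing the gap is attained by the random maximum-weight matching'' as the main technical difficulty but offer only ``analyze stability of the dual optimum under small perturbations,'' with no concrete mechanism; the paper handles this by showing (via the event $\Ev_3$ in Lemma~\ref{lemma:Ev3} and Claim~\ref{cl:alphabkuk_diff_bd}) that the $\alpha_k$'s for $k\neq k_*$ cluster within $\delta=n^{-0.51}$ of each other and then explicitly constructs a $\Theta(n^{-1/K})$-length interval of $\theta$ for which $(\alpha_{k_*},(\alpha_k+\theta)_{k\neq k_*})$ stays in the core. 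That concrete mechanism is missing from your sketch.
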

In words, our main result says that under reasonable conditions,
$\E[\core]$ is vanishing as $n \rightarrow \infty$,
at a rate $O^*\left(\frac{1}{\sqrt[\max(K,Q)]{n}}\right)$ and that this bound is tight in worst case.
Thus, the core size shrinks to zero as the market grows larger, at a rate that
is faster (in worst case) if there are fewer types of agents. 
We give a proof of our main theorem in Section \ref{sec:overview_pf}, along with Appendices \ref{app:proofKQ} and \ref{app:lower_bound}.

The upper bound in Theorem~\ref{thm:general_k_q} can be improved
if further constraints are imposed on the number of types and the imbalance.
As an illuminating example, we show that in the setting in which $K \geq 2$,
$Q=1$ and $n_{\me} > n_{\ml}$, the size of the core can be bounded above by
a function that depends on both the size of the market and on the size of
the imbalance in the market.
\begin{theorem}\label{thm:K-1-emp}
Consider the setting in which $K\geq 2$, $Q=1$, $n_{\me} > n_{\ml}$ and let $m = n_{\me} - n_{\ml}$. 
Under Assumption~\ref{assump:essential_assumpts}, we have $\E[\core] \leq O^*\left(\frac{1}{n^{\frac{1}{K}}m^{\frac{K-1}{K}}}\right)$.
\end{theorem}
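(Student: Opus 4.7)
The plan is to adapt the order-statistic strategy used for Theorem~\ref{thm:general_k_q} while exploiting the $m$ unmatched employers to sharpen the projection densities. Since $Q=1$, the core is parameterized by a vector $\alpha \in \real^K$; set $w_k := -\alpha_{k,1}$. Because $n_\me>n_\ml$, every worker is matched, and the (IM) conditions of Proposition~\ref{prop:general_nec_cond} collapse to $\max_{j\in U}\epsilon^k_j \leq w_k \leq \min_{j\in M(k)}\epsilon^k_j$. The lattice property yields worker- and firm-optimal outcomes that realize the coordinate-wise extremes of $w$ simultaneously, so $\core \leq (1/n_\ml)\sum_k n_k G_k$ where $G_k := \min_{j\in M(k)}\epsilon^k_j - \max_{j\in U}\epsilon^k_j$.

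To bound $G_k$, view each employer as a uniform point in $[0,1]^K$. For any core $w$, the hyperrectangle $R_U(w) := \prod_k [0,w_k]$ contains the $m$ unmatched employers, while $R_k(w) := \{\epsilon:\epsilon^k-w_k \geq (\epsilon^{k'}-w_{k'})^+\text{ for all }k'\}$ contains those matched to type $k$. A Chernoff plus union-bound argument over a grid of candidate $w$'s shows $\prod_k w_k \leq O^*(m/n_\me)$ with high probability. Projecting to the $\epsilon^k$-axis, the $m$ unmatched points are conditionally uniform in $[0,w_k]$, so $w_k - \max_{j\in U}\epsilon^k_j = O^*(w_k/m)$; and near the face $\epsilon^k = w_k$ the cross-section of $R_k$ has $(K{-}1)$-volume $\prod_{k'\neq k}w_{k'}$, producing matched-point linear density $n_\me \prod_{k'\neq k}w_{k'} \approx m/w_k$, hence $\min_{j\in M(k)}\epsilon^k_j - w_k = O^*(w_k/m)$. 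Therefore $G_k \leq O^*(w_k/m)$ with high probability.

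Summing, $\sum_k n_k G_k \leq O^*(1/m) \sum_k n_k w_k$. The key remaining estimate is the aggregation lemma
\[
\sum_k n_k w_k^* \leq C\, n_\ml\Bigl(\frac{m}{n_\me}\Bigr)^{1/K},
\]
which I would prove by exploiting convexity of the dual objective $f(w) = \sum_k n_k w_k + n_\me\int_{[0,1]^K}(\max_k(\epsilon^k-w_k))^+ d\epsilon$: comparing $f(w^*) \leq f(w')$ at the symmetric test point $w' = ((m/n_\me)^{1/K},\ldots,(m/n_\me)^{1/K})$, evaluating both sides using the distribution of $\max_k U_k$, and using $\prod_k w_k^* \approx m/n_\me$ to control the error term, yields the claim. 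Combining,
\[
\core \leq \frac{O^*(1)}{n_\ml\, m} \cdot n_\ml \Bigl(\frac{m}{n_\me}\Bigr)^{1/K} = O^*\!\left(\frac{1}{n^{1/K} m^{(K-1)/K}}\right),
\]
using $n_\me = \Theta(n)$ from Assumption~\ref{assump:essential_assumpts}; passing to expectations absorbs low-probability concentration failures via the trivial bound $\core \leq 1$.

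The aggregation lemma is the main obstacle: under the product constraint $\prod_k w_k = O^*(m/n_\me)$ alone, the maximum of $\sum_k n_k w_k$ is of order $n_\ml$ (attained near corners of $[0,1]^K$), so one genuinely needs the $K$ area constraints $|R_k(w^*)| = n_k/n_\me$ to pin $w^*$ to an essentially interior configuration. A secondary difficulty is rigorously propagating the density argument through the random, coupled threshold $w_k$, which I would handle by discretizing $w$-space and applying a union bound.
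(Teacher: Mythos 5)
Your proposal is a genuinely different route from the paper's, but it contains a gap that is, as you yourself anticipate, fatal as written: the ``aggregation lemma'' $\sum_k n_k w_k^* \leq C\, n_{\ml}(m/n_{\me})^{1/K}$. The paper avoids needing any such global estimate. Its argument is: (i) bound only $\min_k (\alpha_k^{\max}-\alpha_k^{\min})$ — via an orthotope-expansion/Chernoff argument (Lemmas~\ref{lem:bd_of_diff_alpha_small_m_gen_k}, \ref{lem:bounds_on_size_of_orthotope}, \ref{lem:bound_on_D(R)_gen_k}, \ref{lem:bd_min_diff_alpha_gen_k}) that only needs the orthotope $\prod_k[0,w_k]$ to have volume $\geq m/(4n_\me)$ — and then (ii) transfer to all the other $k$'s through the stability constraints (ST), which give $\alpha_k^{\max}-\alpha_k^{\min} \leq 2\cdot O^*(1/n)+(\alpha_{k^*}^{\max}-\alpha_{k^*}^{\min})$ for every $k$ (Lemmas~\ref{lem:no_unmatched_agent_below_delta_gen_k}--\ref{lem:bound_on_the_differences_gen_k}). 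The relative slack of $O^*(1/n)$ is cheap because each pairwise stability constraint is backed by $\Theta(n)$ points; this is far stronger than anything the product constraint on $\prod_k w_k$ can deliver.

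The aggregation lemma itself is doubtful for general $u$. As you note, the product constraint $\prod_k w_k \approx m/n_\me$ alone allows $\sum_k n_k w_k = \Theta(n_\ml)$ near corners. You hope the $K$ area constraints $N(k)=n_k$ force an interior $w^*$, but Assumption~\ref{assump:essential_assumpts} only requires $n_k \geq Cn$; it does not force the $n_k$'s (hence the $w_k$'s) to be nearly equal. For instance with $K=2$, a configuration with $w_1 \approx 1/2$ and $w_2 \approx 2m/n_\me$ has $\prod_k w_k \approx m/n_\me$ and gives wedge volumes $\textup{vol}(R_1) \approx 1/8$, $\textup{vol}(R_2) \approx 7/8 - m/n_\me$, both compatible with Assumption~\ref{assump:essential_assumpts} for small enough $C$; but then $\sum_k n_k w_k \approx n_1/2 = \Theta(n)$, which is not $O\bigl(n_\ml\sqrt{m/n_\me}\bigr)$ when $m=o(n)$. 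Such asymmetric $w^*$ can arise from the maximum-weight matching when the type utilities $u(k,1)$ are sufficiently skewed, which the theorem does not exclude. Your proposed derivation via convexity of a continuum dual $f(w)$ and a symmetric test point would need to be redone with the actual $n_k$-weighted objective, and the comparison $f(w^*)\leq f(w')$ does not by itself bound the individual term $\sum_k n_k w_k^*$ — the integral term can absorb the difference.

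To salvage your route you would essentially have to reproduce the paper's step (ii): show that the $w_k$'s all lie within $O^*(1/n)$ of each other up to deterministic shifts governed by the (ST) constraints, which is exactly the content of Lemma~\ref{lem:bound_on_the_differences_gen_k}. Once you have that, you no longer need the aggregation lemma, since bounding a single $G_{k^*}$ suffices. I would also caution that your per-$k$ bound $G_k = O^*(w_k/m)$ involves conditioning on the random threshold $w_k$ and on the random identity of the unmatched set $U$; you flag discretization as the fix, and indeed the paper's Lemma~\ref{lem:bd_min_diff_alpha_gen_k} shows that this can be done, but it carries it out only for the minimum over $k$, not uniformly over all $k$ simultaneously, which is what your route requires.
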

For $m=O^*(1)$, the bound in Theorem \ref{thm:K-1-emp} matches that in Theorem \ref{thm:general_k_q}. 
However, the bound here becomes tighter as the imbalance $m$ grows. In fact, for $m = \Theta(n)$,
the core size is bounded as $O^*(1/n)$. It is noteworthy that the scaling behavior here does not depend on the number of
worker types. We also mention here that, using symmetry, an analogous result can be stated
with $Q$ types of employers, only one type of worker, and more workers than employers.

We prove Theorem \ref{thm:K-1-emp} in Appendix \ref{app:K-1}. The idea is to use the unmatched agents 
and condition (IM) in Proposition \ref{prop:general_nec_cond} (for the employers)
to control absolute variation in one of
the $\alpha$'s. We separately control the relative variation of the $\alpha$'s in the core
using condition (ST) in Proposition \ref{prop:general_nec_cond} under Assumption~\ref{assump:essential_assumpts}.
Combining these we obtain the stated bound on $\core$.

\section{Overview of the proof of the main result}
\label{sec:proofs}

We now present an overview of our proof of Theorem~\ref{thm:general_k_q}.
We first discuss the key steps in establishing the upper bound (the complete proof can be found in Appendix~\ref{app:proofKQ}), and then sketch the proof of the lower bound in Section \ref{subsec:lb_proof_overview} (completed in Appendix~\ref{app:lower_bound}).

Throughout this section, we assume that there is a unique maximum weight matching and we refer to it as $M$. Given $M$, recall that $N(k,q)$ is defined as the number of matches between agents of type $k$ and agents of type $q$ in $M$. %
%

We start by constructing a graph associated with matching $M$ as follows. Let $G(M)$ be the bipartite graph whose vertex sets are the types in $\ml$ and $\me$, and such that there is an edge between types $k\in \typelab$, $q\in \typeemp$ if and only if there is an agent of type $k$ matched to an agent of type $q$ in $M$, i.e., $N(k,q)>0$. The following lemma states a key fact regarding the structure of $G(M)$.

\begin{lemma}\label{lemma:type_adjacency_graph}
Let $M$ be the unique maximum weight matching and let $G(M)$ be the associated type-adjacency graph. Suppose we \emph{mark} the vertex in $G(M)$ corresponding to type $t$ if and only if at least one agent of type $t$ is unmatched under $M$. Then, under Assumption~\ref{assumpt:avoid_symmetry}, with probability $1$, every connected component in $G(M)$ must contain a marked vertex.
\end{lemma}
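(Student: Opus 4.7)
The plan is to argue by contradiction: I suppose that some connected component $H$ of $G(M)$ contains no marked vertex and derive a violation of Assumption~\ref{assumpt:avoid_symmetry}. Throughout, I work on the probability-one event that the maximum weight matching $M$ is unique, which is all that is needed to invoke the deterministic structural argument below.

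First I would handle the degenerate case in which $H$ is an isolated vertex, say corresponding to a type $t \in \typelab \cup \typeemp$. By definition of $G(M)$, no agent of type $t$ is matched under $M$, which means every agent of type $t$ is unmatched. Then $t$ is marked, contradicting the assumption that $H$ has no marked vertex. So from here on I may assume $H$ has vertices on both sides; let $\mathcal{S} = V(H) \cap \typelab$ and $\mathcal{S}' = V(H) \cap \typeemp$, both non-empty.

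Next I would show that the matching $M$, when restricted to agents of types in $\mathcal{S} \cup \mathcal{S}'$, is a perfect matching between the two sides. Since no vertex of $H$ is marked, every agent of every type in $\mathcal{S} \cup \mathcal{S}'$ is matched under $M$. Take any $k \in \mathcal{S}$ and any agent $i$ of type $k$; its partner $j = M(i)$ has some employer type $q$, and by definition of $G(M)$ we have an edge $(k,q)$, so $q$ lies in the same connected component as $k$, i.e.\ $q \in \mathcal{S}'$. Symmetrically, every matched agent of a type in $\mathcal{S}'$ is paired with an agent of a type in $\mathcal{S}$. Hence $M$ induces a bijection between the agents of types in $\mathcal{S}$ and those of types in $\mathcal{S}'$, which forces
$$\sum_{t \in \mathcal{S}} n_t \;=\; \sum_{t \in \mathcal{S}'} n_t.$$
This directly contradicts Assumption~\ref{assumpt:avoid_symmetry}, completing the proof.

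The argument is essentially combinatorial and I do not foresee a serious obstacle; the only subtlety worth being careful about is the direction of implication when reading off connected components from $G(M)$, namely that every match incident to a type in $H$ must stay inside $H$ (which is where the ``connected component,'' as opposed to merely ``connected subgraph,'' hypothesis is used). The probability-one qualifier enters only through the uniqueness of $M$, under which the marked/unmarked labels and the edge set of $G(M)$ are well-defined functions of the realization.
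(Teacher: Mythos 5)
Your proof is correct and takes essentially the same approach as the paper: suppose a component is entirely unmarked, observe that all agents of those types are matched and only to each other, conclude the two sides of the component have equal agent counts, and contradict Assumption~\ref{assumpt:avoid_symmetry}. The only difference is that you explicitly split off the isolated-vertex case, whereas the paper's argument folds it in (taking one of $\mathcal{S},\mathcal{S}'$ to be empty in the assumption yields $n_t \neq 0$).
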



%
%
%
%

\subsection{Overview of the upper bound proof}
\label{sec:overview_pf}

Roughly, the idea of the upper bound proof of Theorem~\ref{thm:general_k_q} is as follows. We consider some suitably defined events (which are discussed later), which occur in typical markets. Under these events, we show that the variation in the type-pair prices is uniformly bounded as follows,
$$ \max_{(k,q) \in \typelab \times \typeemp, N(k,q)>0}|\alpha^{\max}_{kq} - \alpha^{\min}_{kq}| \leq f(n),$$ for some $f(n)=O^*\left( \frac{1}{n^{1/\max(K,Q)}}\right)$. To prove this bound, we use the graph $G(M)$ as defined above. Given a type $t\in \typelab \cup \typeemp$, let the \emph{distance} $d(t)$ be defined as the minimum distance in $G(M)$ from $t$ to \emph{any} marked vertex. By Lemma~\ref{lemma:type_adjacency_graph}, every unmarked vertex $t$ must be at a finite distance from a marked one. Furthermore, $\max_{t\in \typelab \cup \typeemp} d(t) \leq K+Q$ regardless the realization of the graph.

Our argument to control the variation in the $\alpha$'s is by induction on $d(t)$. To establish our induction base, we show that the variation in \emph{all} the relevant $\alpha$'s associated with marked types (these types have distance zero) is bounded. 
In particular, for each marked type $t$, we show that $\max_{t':~N(t,t')>0} \left( \alpha^{\max}_{t,t'} - \alpha^{\min}_{t,t'}\right)\leq O^*\left( \frac{1}{n^{1/\max(K,Q)}}\right)$. This is done in Lemma~\ref{lemma:artificial_unmatched}.

In the inductive step, we assume the bound holds for every $\alpha$ associated with a type whose distance is $d$ or less, i.e, for every $(t,t') \in \typelab \times \typeemp$ such that $\min(d(t), d(t'))\leq d$, we have  $\alpha^{\max}_{t,t'} - \alpha^{\min}_{t,t'} \leq O^*\left( \frac{1}{n^{1/\max(K,Q)}}\right)$.  Then, we use the inductive hypothesis to show that the result must also hold for all types whose distance is $d+1$. By the definition of distance, for every type $t$ such that $d(t)=d+1$, there must exist a type $t^*$ such that $d(t^*)=d$ and $N(t,t^*)>0$. Therefore, by our inductive hypothesis, we must have  $\alpha^{\max}_{t,t^*} - \alpha^{\min}_{t,t^*} \leq O^*\left( \frac{1}{n^{1/\max(K,Q)}}\right)$. Using this bound, we further bound the variation in all $\alpha$'s associated with type $t$, by controlling the relative variation of the $\alpha$'s in the core, i.e., by showing that $\alpha_{t,t_1}-\alpha_{t,t_2}$ for types $t_1, t_2$ with matches to type $t$ can vary only within a range bounded by $O^*\left( \frac{1}{n^{1/\max(K,Q)}}\right)$. This is formally achieved in Lemma~\ref{lemma:all_matched}.

%

To conclude, we briefly describe the nature of the events that we argue must hold with high probability. These events are related to the distance between order statistics of the projections of points distributed independently in (sub-regions of) a hypercube. Note that, once we focus on a single type $t$, the random productivities associated to an agent of type $t$ can be described by a $D(t)$-dimensional vector within the $[0,1]^{D(t)}$-hypercube, where $D(t)$ is dimension of the productivity vector of agents of type $t$ (i.e., $D(t)=K$ if $t\in \typeemp$ and $D(t)=Q$ otherwise). Furthermore, the location of these points can be described by a point process in $[0,1]^{D(t)}$. Hence, all the conditions in Proposition~\ref{prop:general_nec_cond} can be interpreted as geometric conditions in the unitary hypercube. We use this geometric interpretation and relate Proposition~\ref{prop:general_nec_cond} to the regions, random sets and random variables defined below in Section \ref{subsubsec:hypercubemain} to prove our main theorem.

\subsection{Hypercube definitions and key lemmas}
\label{subsec:hypercube_and_keylemmas}

As mentioned in Section \ref{sec:intro}, a key component of our analysis is to relate the combinatorial structure of the core to order statistics of certain independent identically distributed (i.i.d.) random variables. These random variables are one-dimensional projections of point processes in (particular subregions of) the unit hypercube, where the point processes correspond to the market realization. Next, we formally define the regions, random sets and random variables that will be useful in our analysis.

\subsubsection{Hypercube definitions}
\label{subsubsec:hypercubemain}

Consider a type $t \in \typeemp$. For each employer $j: \tau(j) =t$,
there is a vector of productivities $\eps_j$ distributed uniformly in $[0,1]^K$,
independently across employers. In this subsection we consider these productivities
for a given $t$. We suppress $t$ in the definitions to simplify notation (so $n$ here
corresponds to $n_t$, and so on). Analogous definitions can be made for $t \in \typelab$.

Consider $n$ i.i.d. points $(\eps_j)_{j=1}^n$, distributed uniformly in the $[0,1]^K$-hypercube. Here $\eps_j=(\eps_j^1, \eps_j^2, \ldots, \eps_j^K)$. Let $\cK = \{1,2, \ldots, K\}$ denote the set of dimension indices. Define the region
\vspace{-3pt}
\begin{align}\label{def:R_k}
  \cR^k = \{ x \in [0,1]^K: x^k \geq x^{k'} \,\quad \forall k' \neq k, k' \in \cK\}
\end{align}

For $k_1, k_2 \in \cK, k_1\neq k_2$ and for $\delta \in [0,1/2]$, define the region
\begin{align}\label{def:R_k1_k2}
  \cR^{k_1,k_2}(\delta) = \{ x \in [0,1]^K: x^{k_1} \geq x^{k} \, \quad  \forall k \notin \{k_1, k_2\}, k \in \cK, x^{k_1}\geq \delta\}\, .
\end{align}
\vspace{-4pt}
Let
\vspace{-3pt}
\begin{align}
  \cV^k \,&= \{x:x=\eps_j^k \textup{ for } \{j: \eps_j \in \cR^k \}\} \, , \label{eq:cVkdef}\\
  \text{and} \qquad V^k \,&= \max \big (\textup{Difference between consecutive values in
  $ \cV^k \cup \{0,1\}$} \big ) \, .\label{eq:Vkdef}
\end{align}
Thus, $\cV^k \subset [0,1]$ is the set of values of the $k$-th coordinate of the points lying in $\cR^k$, and $V^k \in \real$ is
the maximum difference between consecutive values in $\cV^k \cup \{0,1\}$. (As an example, if $\cV^k=\{0.3, 0.4,0.8\}$, the differences between consecutive values in $\cV^k \cup \{0,1\}$ are $0.3, 0.1, 0.4, 0.2$, resulting in $V^k = 0.4$.) Note that $\cV^k$ is a random and finite set, and $V^k$ is a random variable.
Let
\begin{align}
  \cV^{k_1, k_2}(\delta)\,&=\{x:x=\eps_j^{k_1} - \eps_j^{k_2} \textup{ for } \{j: \eps_j \in \cR^{k_1, k_2}\} \}\, , \label{eq:cVk1k2def}\\
\text{and} \qquad    V^{k_1, k_2}(\delta) \,&= \max \big (\textup{Difference between consecutive values in
   $\cV^{k_1,k_2}(\delta)\cup \{-1+\delta,1\}$} \big )   \label{eq:Vk1k2def}
\end{align}
Thus, $\cV^{k_1,k_2} \subset [-1+\delta,1]$ is the set of values of the difference between the $k_1$-th and $k_2$-th coordinate of points lying in $\cR^{k_1,k_2}$, and $V^{k_1,k_2} \in \real$
is the maximum difference between consecutive values in $\cV^{k_1,k_2} \cup \{-1+\delta,1\}$.

In addition, for $\delta \in (0,1/2]$ and $k \in \cK$, define
\begin{align}\label{def:cuboid}
  \tcR^k(\delta) = \{x\in [0,1]^K: x^{k'} \leq \delta  \, \quad \forall k' \in \cK, k' \neq k\}\, .
\end{align}
\vspace{-4pt}
Let
\vspace{-3pt}
\begin{align}
  \tcV^k(\delta) \,&= \{x:x=\eps_j^k \textup{ for } \{j: \eps_j \in \tcR^k \}\} \label{eq:cVcubdef}\\
\text{and} \qquad  \tV^k(\delta) \,&= \max \big (\textup{Difference between consecutive values in
  $ \tcV^k(\delta) \cup \{0,1\}$} \big ) \, .\label{eq:Vcubdef}
\end{align}

We now relate the above definitions to the combinatorial structure of our problem. We now include the type $t$ explicitly in the names of the associated regions, sets and random variables, e.g., region $\cR^k(\delta)$ when defined for type $t$ is referred to as $\cR^k(t,\delta)$.

The definition of these regions, sets and random variables might seem arbitrary at first sight. However, it is closely related to the geometric interpretation of the stability conditions. Intuitively, for a fixed type $t\in \typeemp$ with unmatched agents, one can bound  $\alpha_{kt}$ by using condition (IM) in Proposition~\ref{prop:general_nec_cond}: $\min_{j \in t\cap M(k)}\epsilon^k_j  \geq  - \alpha_{kt} \geq  \max_{j \in t\cap U}\epsilon^k_j.$ 
To apply this bound, we just care about the projection onto the $k$-th coordinate of the points $\epsilon_j$ with $j\in M(k)\cup U$. The main analytical challenge we face is that the these relevant subregions are themselves a random function of the market realization, as both $M(k)$ and $U$ are themselves random sets. We overcome this by appropriately defining the region $\tcR^k(t,\delta)$ so that it only contains points corresponding to agents in $M(k)\cup U$. 
Once we have done that, it should be easy to see that $\min_{j \in t\cap M(k)}\epsilon^k_j  - \max_{j \in t \cap U}\epsilon^k_j$ is upper bounded by the maximum distance between two consecutive points in $\tcR^k(t,\delta)$, when projected onto their $k$-th coordinate (the corner cases of all points being in $M(k)$, or in $U$, turn out to be easy to handle). This becomes precise once we introduce the set $\cV^k(t)$ and the random variable $V^k(t,\delta)$. Analogously, the regions $\cR^{k}(t)$ (for appropriate $k$) and  $\cR^{k_1,k_2}(t,\delta)$ (for appropriate $k_1,k_2$) allow us to apply the conditions (IM) and (ST) respectively, to bound the variation of $\alpha$'s associated with a type $t$. These relationships are more involved, so the explanation is delayed to the proofs.

Using the above notation, we now define the two events that will help us prove the results:
\vspace{-3pt}
\begin{align}\label{def:event_triangle}
    \Ev_1(t,\delta) = \Big \{ \max \Big ( \max_{k \in \cK} V^k(t), \max_{(k_1, k_2)\in \cK^{(2)}} V^{k_1, k_2}(t,\delta) \Big ) \leq f_1(n_t,K)\, \Big \},
\end{align}
\noindent for some $f_1(n_t,K) = O^*(1/n_t^{1/K})$ defined in Lemma~\ref{lemma:max_separation_hypercube}, $\delta \in [0,1/2]$ and where $\cK^{(2)}=\{(k_1,k_2):k_1, k_2 \in \cK, k_1\neq k_2\}$. (If $K=1$, then $\cK^{(2)}$ is the empty set $\emptyset$ in which case we follow the convention that $\max_{\emptyset } [\, \cdot \,]= -\infty$.). In addition,
\vspace{-3pt}
\begin{align}\label{def:event_cuboid}
 \Ev_2(t,\delta)= \Big \{ \max_{k \in \typelab} \tV^k(t,\delta) \leq f_2(n_t)/\delta^{K-1} \Big \}\
\end{align} 
\noindent for some $f_2(n_t) = O^*(1/n_t)$ defined in Lemma~\ref{lemma:cuboid_orderstat} and $\delta  \in (0,1]$.

The proof of all lemmas auxiliary to the proof of Theorem~\ref{thm:general_k_q} assume that these events (or some subset of them) occur. As shown by the next result (proved in Appendix \ref{app:hypercube_results}), that assumption does not pose a problem as these events simultaneously occur with high probability.

\begin{lemma}\label{lemma:all_events_prob}
There exists $\hat{C}=\hat{C}(K,Q)< \infty$ such that, for any $\delta = \delta(n) \in (0,1/2]$, the event $\bigcap_{t\in  \typelab \cup \typeemp} \left( \Ev_1(t,\delta)\cap \Ev_2(t,\delta) \right)$ occurs with probability at least $1-\hat{C}/n$.
\end{lemma}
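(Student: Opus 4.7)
The plan is to treat Lemma~\ref{lemma:all_events_prob} as a finite union bound over a bounded collection of sub-events, each of which is already controlled by the max-gap lemmas Lemma~\ref{lemma:max_separation_hypercube} and Lemma~\ref{lemma:cuboid_orderstat} that define $f_1$ and $f_2$. Since the number of types $K+Q$ and the number of coordinate indices and ordered coordinate pairs per type are all constants in $n$ (depending only on $K,Q$), the total collection of sub-events to control has cardinality $O(1)$, so a union bound preserves the $1/n$ scaling of individual failure probabilities. The real probabilistic content is encapsulated in the two referenced hypercube lemmas; what remains here is essentially bookkeeping.

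Concretely, I would first decompose the complement of $\bigcap_{t}(\Ev_1(t,\delta)\cap \Ev_2(t,\delta))$ as a union, over $t\in \typelab\cup \typeemp$, over coordinate indices $k$, and over ordered coordinate pairs $(k_1,k_2)$, of events of three forms: $\{V^k(t) > f_1(n_t,D(t))\}$, $\{V^{k_1,k_2}(t,\delta) > f_1(n_t,D(t))\}$, and $\{\tV^k(t,\delta) > f_2(n_t)/\delta^{D(t)-1}\}$, where $D(t)$ is the opposite-side dimension ($K$ for $t\in\typeemp$, $Q$ for $t\in\typelab$). I would then invoke Lemma~\ref{lemma:max_separation_hypercube} on the first two families and Lemma~\ref{lemma:cuboid_orderstat} on the third, each of which by construction yields an individual failure probability at most $C(K,Q)/n_t$ uniformly in $\delta$ in the relevant range. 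Using Assumption~\ref{assump:essential_assumpts} (which is in force throughout the analysis of Theorem~\ref{thm:general_k_q}) in the form $n_t\geq Cn$, these become $C'(K,Q)/n$, and summing over the $O(K^2+Q^2)$ sub-events gives $\hat{C}(K,Q)/n$ as claimed.

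The main point of care, and what I expect to be the only nontrivial step, is uniformity in $\delta$. For the difference statistics $V^{k_1,k_2}(t,\delta)$, shrinking $\delta$ enlarges $\cR^{k_1,k_2}(\delta)$ and also moves the reference interval to $[-1+\delta,1]$, so the $\delta$-free bound $f_1$ must dominate simultaneously for every $\delta\in[0,1/2]$; the worst case is $\delta=0$, which is what Lemma~\ref{lemma:max_separation_hypercube} is stated for. For the cuboid statistics $\tV^k(t,\delta)$, the region $\tcR^k(\delta)$ has volume $\delta^{D(t)-1}$ and so contains $\Theta(n_t \delta^{D(t)-1})$ points in expectation, which is precisely why the threshold scales like $f_2(n_t)/\delta^{D(t)-1}$; in the degenerate regime $n_t\delta^{D(t)-1}=O(1)$ the threshold exceeds the trivial upper bound $\tV^k\leq 1$ and the event holds with probability one, while in the complementary regime a standard interval-partition plus Chernoff argument (carried out inside Lemma~\ref{lemma:cuboid_orderstat}) delivers the $O(1/n_t)$ failure probability uniformly in $\delta$.
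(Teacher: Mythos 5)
Your proposal is correct and follows the same route the paper takes: a union bound over the $K+Q$ types, with each factor $\Ev_1(t,\delta)\cap\Ev_2(t,\delta)$ controlled by the hypercube max-gap lemmas (Lemma~\ref{lemma:max_separation_hypercube} and Lemma~\ref{lemma:cuboid_orderstat}) to give failure probability $O(1/n_t)$, then $n_t=\Theta(n)$ via Assumption~\ref{assump:essential_assumpts}. Your further decomposition over coordinates $k$ and ordered pairs $(k_1,k_2)$ is harmless but unnecessary, since those unions are already baked into the definitions of $\Ev_1$ and $\Ev_2$ and into the two referenced lemmas.

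One small omission: Lemmas~\ref{lemma:max_separation_hypercube} and \ref{lemma:cuboid_orderstat} are stated and proved for the rate-$n$ Poisson process on $[0,1]^K$, whereas the $\eps_j$ defining $\Ev_1(t,\delta)$ and $\Ev_2(t,\delta)$ are $n_t$ i.i.d.\ uniform points. Transporting the Poisson-process bounds to the i.i.d.\ model requires the coupling/monotonicity argument of Lemma~\ref{lemma:coupling}, which the paper explicitly invokes here and your writeup skips. It is a short step, but without citing it the invocation of the two max-gap lemmas does not apply directly to the objects in the event.
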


\subsubsection{Statements of the key lemmas}
For every type $t\in \typelab \cup \typeemp$, we define $\vartheta(t)$ as $\vartheta(t) = \{k \in \typelab:~ N(k,t)>0 \}$ when $t \in \typeemp$ and  $\vartheta(t) = \{q \in \typeemp:~ N(t,q)>0 \}$ when  $t \in \typelab$. That is, $\vartheta(t)$ is the set of neighbours of $t$ in the graph $G(M)$.
Recall that, given a type $t\in \typelab \cup \typeemp$ we denote by $D(t)$ the \emph{dimension} of the productivity vector of agents of type $t$. That is, $D(t) = K$ if $t\in \typeemp$ and $D(t) = K$ if $t\in \typelab$.

\begin{lemma}\label{lemma:artificial_unmatched}
Consider the unique maximum weight matching $M$ and a type $t\in \typelab \cap \typeemp$. Let $\EvF_1(t)$ be the event
\vspace{-3pt}
\begin{align}\label{def:event_marked}
\EvF_1(t)=\{t\text{ is marked in $G(M)$ and at least one agent in $t$ is matched}\},
\end{align}
\noindent that is, $t$ has at least one unmatched and one matched agent. Let the events $\Ev_1(t,\delta)$ and $\Ev_2(t,\delta)$ be as defined by Eqs.~\eqref{def:event_triangle} and \eqref{def:event_cuboid} respectively. Under $\EvF_1(t) \cap \Ev_1(t,\delta) \cap \Ev_2(t,\delta)$, we have $$\max_{t'\in \vartheta(t)} \left( \alpha^{\max}_{t,t'} - \alpha^{\min}_{t,t'}\right) \leq \max\left(f_1(n_t,D(t))+ \delta, f_2(n_t)/\delta^{D(t)-1} \right),$$ 
\noindent where $f_1$ and $f_2$ agree with those in the definitions of events $\Ev_1(t,\delta)$ and $\Ev_2(t,\delta)$ respectively.
\end{lemma}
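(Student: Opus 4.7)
Without loss of generality take $t\in\typeemp$ so $D(t)=K$ (the case $t\in\typelab$ is entirely symmetric, using the worker version of (IM) and the analogous hypercube objects on the $\tilde\eta$ side). My plan is to bound $\alpha^{\max}_{t,t'}-\alpha^{\min}_{t,t'}$ for each $t'\in\vartheta(t)$ individually; the stated max then follows. Fix $k\in\vartheta(t)$. Condition (IM) of Proposition~\ref{prop:general_nec_cond} applied to the type pair $(k,t)$, combined with $\EvF_1(t)$ (which ensures both $t\cap M(k)$ and $t\cap U$ are non-empty), yields for every core allocation
\[
-\min_{j\in t\cap M(k)}\epsilon^k_j \;\leq\; \alpha_{kt} \;\leq\; -\max_{j\in t\cap U}\epsilon^k_j,
\]
and hence $\alpha^{\max}_{kt}-\alpha^{\min}_{kt}\leq G_k:=\min_{j\in t\cap M(k)}\epsilon^k_j-\max_{j\in t\cap U}\epsilon^k_j$. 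All that remains is a deterministic upper bound on $G_k$.

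I would dichotomize on whether the (fixed) matching $M$ assigns any type-$t$ firm whose productivity vector lies in $\tcR^k(t,\delta)$ to some labor type $k^*\neq k$. In the easy case no such firm exists, so every type-$t$ firm in $\tcR^k(t,\delta)$ lies in $M(k)\cup U$; since (IM) separates these two subsets in the $k$-th coordinate by the threshold $-\alpha_{kt}$, no value of $\tcV^k(t,\delta)$ can lie strictly between $\max_{j\in t\cap U}\epsilon^k_j$ and $\min_{j\in t\cap M(k)}\epsilon^k_j$, which forces $G_k$ to be at most the largest consecutive gap in $\tcV^k(t,\delta)\cup\{0,1\}$, namely $\tV^k(t,\delta)$. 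The endpoints $\{0,1\}$ absorb the sub-cases in which no $M(k)$- or $U$-firm falls inside $\tcR^k(t,\delta)$. Event $\Ev_2(t,\delta)$ then gives $G_k\leq f_2(n_t)/\delta^{K-1}$.

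In the other case some $j^\sharp\in t\cap M(k^*)$ has $\epsilon_{j^\sharp}\in\tcR^k(t,\delta)$ for some $k^*\in\vartheta(t)\setminus\{k\}$. Then $\epsilon^{k^*}_{j^\sharp}\leq\delta$ together with the lower (IM) bound $\epsilon^{k^*}_{j^\sharp}\geq-\alpha_{k^*t}$ forces $\alpha_{k^*t}\geq-\delta$ in \emph{every} core allocation; combined with the upper (IM) bound $\alpha_{k^*t}\leq-\max_{j\in t\cap U}\epsilon^{k^*}_j\leq 0$, this gives the absolute control $\alpha^{\max}_{k^*t}-\alpha^{\min}_{k^*t}\leq\delta$. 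I would then invoke (ST) of Proposition~\ref{prop:general_nec_cond} for the pair of type-pairs $(k,t),(k^*,t)$ (taking $q=q'=t$ so the $\tilde\eta$ terms cancel): its two sides become the extrema of $\epsilon^k_j-\epsilon^{k^*}_j$ over $j\in t\cap M(k)$ and $j\in t\cap M(k^*)$, which by identifying the extremizing firms with points in $\cR^{k,k^*}(t,\delta)$ and invoking $\Ev_1(t,\delta)$ bounds the variation of the \emph{difference} $\alpha_{kt}-\alpha_{k^*t}$ across the core by $V^{k,k^*}(t,\delta)\leq f_1(n_t,K)$. A triangle inequality $\alpha^{\max}_{kt}-\alpha^{\min}_{kt}\leq[(\alpha_{kt}-\alpha_{k^*t})^{\max}-(\alpha_{kt}-\alpha_{k^*t})^{\min}]+[\alpha^{\max}_{k^*t}-\alpha^{\min}_{k^*t}]$ then yields $\alpha^{\max}_{kt}-\alpha^{\min}_{kt}\leq f_1(n_t,K)+\delta$. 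Taking the worse of the two cases and maximizing over $k\in\vartheta(t)$ completes the proof.

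The main technical obstacle is the geometric identification used in the second case: one must verify carefully that the firms achieving the min of $\epsilon^k_j-\epsilon^{k^*}_j$ over $t\cap M(k)$ and the max over $t\cap M(k^*)$ effectively lie inside $\cR^{k,k^*}(t,\delta)$, so that $V^{k,k^*}(t,\delta)$ genuinely controls the variation of the difference. The endpoints $\{-1+\delta,1\}$ built into $V^{k,k^*}$ (just like $\{0,1\}$ in $\tV^k$) are what absorb the corner cases where an extremizer happens to lie outside the region, and an analogous bookkeeping handles the $t\in\typelab$ case by symmetry.
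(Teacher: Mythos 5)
Your Case 1 is correct and matches the paper's Lemma~B.3 (the paper's ``$\EvD$'' case): if $\tcR^k(t,\delta)$ contains only $M(k)\cup U$ firms, then the interval $\bigl(\max_{j\in t\cap U}\epsilon^k_j,\ \min_{j\in t\cap M(k)}\epsilon^k_j\bigr)$ contains no value of $\tcV^k(t,\delta)\cup\{0,1\}$, so $G_k\le\tV^k(t,\delta)\le f_2(n_t)/\delta^{K-1}$ under $\Ev_2$. Your base-case observation in Case 2 is also sound: the existence of $j^\sharp\in t\cap M(k^*)$ with $\epsilon_{j^\sharp}\in\tcR^k(t,\delta)$ forces $-\alpha_{k^*t}\le\epsilon^{k^*}_{j^\sharp}\le\delta$ in every core allocation, so $\alpha^{\max}_{k^*t}-\alpha^{\min}_{k^*t}\le\delta$.

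The gap is in the (ST) step of Case 2. You would bound the core-variation of $\alpha_{kt}-\alpha_{k^*t}$ by $V^{k,k^*}(t,\delta)$, which requires that the open interval between $\max_{j\in t\cap M(k^*)}(\epsilon^k_j-\epsilon^{k^*}_j)$ and $\min_{j\in t\cap M(k)}(\epsilon^k_j-\epsilon^{k^*}_j)$ contain no value of $\cV^{k,k^*}(t,\delta)\cup\{-1+\delta,1\}$. That fails because your $k^*$ is \emph{any} type with a matched firm in $\tcR^k(t,\delta)$, not the type maximizing $\alpha_{zt}$. With an arbitrary $k^*$, the region $\cR^{k,k^*}(t,\delta)$ (whose defining constraint $x^k\geq\delta$, $x^k\geq x^{k''}$ is anchored on $k$, not $k^*$) can contain unmatched firms (if $-\alpha_{kt}\geq\delta$, they can have $\epsilon^k_j\geq\delta$) and, more importantly, firms matched to a third type $k''$ with $\alpha_{k''t}>\alpha_{kt}$. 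The stability inequalities do not place such firms' values of $\epsilon^k_j-\epsilon^{k^*}_j$ outside the critical interval, so $V^{k,k^*}(t,\delta)$ does not control the gap. The problem you flag (whether the extremizers themselves land in $\cR^{k,k^*}$) is actually the innocuous part — the endpoints handle it — whereas the presence of \emph{other} firms' values inside the gap is the real issue.

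The paper (Lemma~B.4) avoids this by taking the anchor $\hat k=\arg\max_{z\in\vartheta(t)}\alpha_{zt}$, which guarantees $\cR^{\hat k}\subseteq Z(\hat k)$ and hence that $\cR^{\hat k,k}(t,\delta)$ (with $\hat k$ as the \emph{first} index, carrying the $\geq\delta$ constraint) contains only $M(\hat k)\cup M(k)$ firms: unmatched firms are excluded because $\epsilon^{\hat k}_j\leq-\alpha_{\hat kt}<\delta$, and $M(k'')$-firms are excluded because $\alpha_{\hat k t}\geq\alpha_{k''t}$ forces $\epsilon^{k''}_j\geq\epsilon^{\hat k}_j$ by (ST), contradicting membership in $\cR^{\hat k,k}$. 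Your argument can be repaired along the same lines: your observation $-\alpha_{k^*t}\leq\delta$ already implies $-\alpha_{\hat kt}\leq\delta$ (since $\alpha_{\hat kt}\geq\alpha_{k^*t}$), so switching the anchor to $\hat k$ and using $\cR^{\hat k,k}$ rather than $\cR^{k,k^*}$ recovers the paper's argument.
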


\begin{lemma}\label{lemma:all_matched}
Consider the unique maximum weight matching $M$ and a type $t\in \typelab \cap \typeemp$. Let $\EvF_2(t)$ be the event $$\EvF_2(t)=\{\text{all agents in $t$ are matched}\}.$$
Let the event $\Ev_1(t,\delta)$ be as defined by Eq.~\eqref{def:event_triangle}. Under $\EvF_2(t) \cap \Ev_1(t,\delta)$, for every $t^*\in \vartheta(t)$ we have $\max_{t'\in \vartheta(t)} \left( \alpha^{\max}_{t,t'} - \alpha^{\min}_{t,t'}\right) \leq \left(\alpha^{\max}_{t,t*}- \alpha^{\min}_{t,t*}\right) + 2f_1(n_t,D(t))+2\delta$, where $f_1$ agrees with the one in the definition of  $\Ev_1(t,\delta)$.
\end{lemma}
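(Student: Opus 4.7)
The plan is to reduce the claim, via condition (ST) in Proposition~\ref{prop:general_nec_cond}, to controlling a single gap on the projected axis $\epsilon^{t^*}-\epsilon^{t'}$, and then to bound that gap using $\Ev_1(t,\delta)$ together with $\EvF_2(t)$. Without loss of generality take $t\in\typeemp$ (the case $t\in\typelab$ is symmetric), so $D(t)=K$ and $\vartheta(t)\subseteq\typelab$. Fix $t^*,t'\in\vartheta(t)$ and set $\pi(j):=\epsilon^{t^*}_j-\epsilon^{t'}_j$. Applying (ST) to the pairs $(t^*,t)$ and $(t',t)$ causes the $\tilde{\eta}^t_i-\tilde{\eta}^t_i$ terms to cancel, leaving
\[
B\;:=\;\max_{j\in t\cap M(t')}\pi(j)\;\leq\;\alpha_{t',t}-\alpha_{t^*,t}\;\leq\;\min_{j\in t\cap M(t^*)}\pi(j)\;=:\;A
\]
for every core $\alpha$. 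Picking $\alpha^{(1)},\alpha^{(2)}$ achieving $\alpha^{\max}_{t,t'}$ and $\alpha^{\min}_{t,t'}$ respectively and writing
\[
\alpha^{(1)}_{t,t'}-\alpha^{(2)}_{t,t'}\;=\;\bigl[(\alpha^{(1)}_{t',t}-\alpha^{(1)}_{t^*,t})-(\alpha^{(2)}_{t',t}-\alpha^{(2)}_{t^*,t})\bigr]\;+\;(\alpha^{(1)}_{t^*,t}-\alpha^{(2)}_{t^*,t}),
\]
the bracketed term has absolute value at most $L:=A-B$, so the lemma reduces to showing $L\leq 2f_1(n_t,D(t))+2\delta$.

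To establish the gap bound I would argue by contradiction. Suppose $L>2f_1+2\delta$. Since $\pi(j)\in[-1,1]$ we have $A\leq 1$ and $B\geq -1$, so the interval $(B+\delta,A-\delta)$ lies inside $(-1+\delta,1)$ and has length exceeding $2f_1>f_1$. By $\Ev_1(t,\delta)$ the set $\cV^{t^*,t'}(t,\delta)\cup\{-1+\delta,1\}$ has maximum consecutive gap at most $f_1$, so the interval must contain some $v=\pi(j_0)$ with $j_0\in t$ and $\epsilon_{j_0}\in\cR^{t^*,t'}(t,\delta)$. Invoking $\EvF_2(t)$, $j_0$ is matched under $M$ to some agent of a worker type $k'\in\vartheta(t)$. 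If $k'=t^*$ then $\pi(j_0)\geq A$, and if $k'=t'$ then $\pi(j_0)\leq B$; each contradicts $\pi(j_0)\in(B+\delta,A-\delta)$.

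The subtle case is $k'\notin\{t^*,t'\}$. Here the region membership $\epsilon_{j_0}\in\cR^{t^*,t'}(\delta)$ gives $\epsilon^{t^*}_{j_0}\geq\epsilon^{k'}_{j_0}$ and $\epsilon^{t^*}_{j_0}\geq\delta$, while LP-duality for the unique maximum-weight matching $M$ gives both $\epsilon^{k'}_{j_0}+\alpha_{k',t}\geq\epsilon^{t^*}_{j_0}+\alpha_{t^*,t}$ and $\epsilon^{k'}_{j_0}+\alpha_{k',t}\geq\epsilon^{t'}_{j_0}+\alpha_{t',t}$. I would combine these with the two-exchange optimality inequalities $\epsilon^{t^*}_{j_0}-\epsilon^{k'}_{j_0}\leq\epsilon^{t^*}_{j^*}-\epsilon^{k'}_{j^*}$ for the extremiser $j^*\in M(t^*)\cap t$ achieving $A$, and $\epsilon^{t'}_{j_0}-\epsilon^{k'}_{j_0}\leq\epsilon^{t'}_{j'}-\epsilon^{k'}_{j'}$ for the extremiser $j'\in M(t')\cap t$ achieving $B$ (both of which follow from $M$ being the unique max-weight matching), and then use the region constraint $\epsilon^{t^*}_{j_0}\geq\delta$ to force $\pi(j_0)$ to within $\delta$ of either $A$ or $B$, contradicting $\pi(j_0)\in(B+\delta,A-\delta)$. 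The sub-case where the dense set is $\cV^{t',t^*}(t,\delta)$ instead of $\cV^{t^*,t'}(t,\delta)$ is handled symmetrically using $V^{t',t^*}(t,\delta)\leq f_1$.

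The main obstacle I expect is closing this third case cleanly: naive subtraction of the two LP-duality inequalities only bounds each of $\epsilon^{t^*}_{j_0}-\epsilon^{k'}_{j_0}$ and $\epsilon^{t'}_{j_0}-\epsilon^{k'}_{j_0}$ from above, which by itself is not enough to control the difference $\pi(j_0)=(\epsilon^{t^*}_{j_0}-\epsilon^{k'}_{j_0})-(\epsilon^{t'}_{j_0}-\epsilon^{k'}_{j_0})$. It is the exchange inequalities against $j^*$ and $j'$, together with the $\delta$-slack built into the definition of $\cR^{t^*,t'}(\delta)$, that allow the argument to go through; tracking the role of $\delta$ carefully is the crux of the proof.
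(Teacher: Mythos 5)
Your reduction is fine: using condition (ST) for the fixed pair $(t^*,t')$ to write $B \leq \alpha_{t',t}-\alpha_{t^*,t} \leq A$ for every core vector, and then decomposing the variation in $\alpha_{t,t'}$ into the relative variation (bounded by $A-B$) plus the variation in $\alpha_{t,t^*}$, correctly reduces the lemma to showing $A-B \leq 2f_1 + 2\delta$. The first two cases of your contradiction argument ($j_0$ matched to $t^*$ or to $t'$) are also sound.

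The gap is the third case, and it is a real one, not just a loose end. When $j_0 \in \cR^{t^*,t'}(t,\delta)$ is matched to some $k' \notin \{t^*,t'\}$, the constraints you list bound $\epsilon^{t^*}_{j_0}-\epsilon^{k'}_{j_0}$ and $\epsilon^{t'}_{j_0}-\epsilon^{k'}_{j_0}$ from above (by (ST)) and the first of them from below by $0$ (by the region); their difference $\pi(j_0)=\epsilon^{t^*}_{j_0}-\epsilon^{t'}_{j_0}$ is genuinely not localized by these facts. Your proposed exchange inequalities against $j^*$ and $j'$ compare projections along the $k'$-axis ($\epsilon^{t^*}_{j^*}-\epsilon^{k'}_{j^*}$, $\epsilon^{t'}_{j'}-\epsilon^{k'}_{j'}$), which are unrelated to $A$ and $B$ (which are projections along the $t'$-axis); they do not force $\pi(j_0)$ to within $\delta$ of $A$ or $B$. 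Concretely, if $\beta_{t,k'} > \beta_{t,t^*}$, many employers $j$ of type $t$ with $\epsilon^{k'}_j$ large but $\epsilon^{t^*}_j$ only slightly larger (so $\epsilon_j \in \cR^{t^*,t'}$) are matched to $k'$ while $\epsilon^{t'}_j$ is essentially free, so $\pi(j)$ for $j\in M(k')\cap \cR^{t^*,t'}$ can land anywhere in $(B+\delta,A-\delta)$ without contradiction. The $\delta$-slack in $\cR^{t^*,t'}(\delta)$ only constrains the $t^*$-coordinate, and the extra constraint $\beta_{t,t^*}\geq\beta_{t,k'}$ that you would like to deduce is precisely the thing that can fail here.

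The paper sidesteps this case entirely by a structural choice you do not make: it sets $k^* = \arg\max_{k\in\vartheta(t)} \beta_{tk}$ and only ever works with regions of the form $\cR^{k^*,\cdot}(t,\delta)$. Because $\beta_{tk^*}$ is maximal, a point $j_0 \in \cR^{k^*,k}$ matched to $k''\notin\{k^*,k\}$ would require (by (ST)) $\epsilon^{k''}_{j_0} - \epsilon^{k^*}_{j_0} \geq \beta_{tk^*}-\beta_{tk''} \geq 0$, which contradicts the region inequality $\epsilon^{k^*}_{j_0}\geq\epsilon^{k''}_{j_0}$ except on a measure-zero event. So under $\EvF_2(t)$, $\cR^{k^*,k}$ contains only agents matched to $k^*$ or to $k$, and your problematic third case never occurs. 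The paper then composes two applications of this fact: first $\cR^{k^*,t^*}$ to bound $\alpha^{\max}_{tk^*}-\alpha^{\min}_{tk^*}$ against $\alpha^{\max}_{tt^*}-\alpha^{\min}_{tt^*}$, and then $\cR^{k^*,k}$ to bound each $\alpha^{\max}_{kt}-\alpha^{\min}_{kt}$ against $\alpha^{\max}_{k^*t}-\alpha^{\min}_{k^*t}$. Introducing this pivot type $k^*$ is the missing idea; without it, the region you use can be "contaminated" by agents matched to third types in a way your estimates cannot control.
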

\tinyblank

Using the simple lemmas defined above, we provide a sketch of proof that, together with the explanation in Section~\ref{sec:overview_pf}, should suffice to roughly convey the idea while avoiding the technical details. As a reminder, the complete proof of the upper bound in Theorem~\ref{thm:general_k_q} can be found in Appendix~\ref{app:proofKQ}.

Let $n^*=\min_{t \in \typelab \cup \typeemp} n_t$ and let $\delta=1/{(n^*)}^{1/\max(K,Q)}$. Under Assumption~\ref{assump:essential_assumpts}, we have that $n^*=\Theta(n)$ and therefore $\delta=\Theta\big (1/{n}^{1/\max(K,Q)}\big )$. Furthermore, now $f_1(n_t,D(t))+ \delta$ and $f_2(n_t)/\delta^{D(t)-1}$, $2f_1(n_t,D(t))+ 2\delta$ as defined in the statements of Lemmas~\ref{lemma:artificial_unmatched} and~\ref{lemma:all_matched} are all $O^*\left(  \frac{1}{n^{1/\max(K,Q)}}\right)$. Using this choice of $\delta$ together with the inductive argument outlined in Section~\ref{sec:overview_pf}, we show that under the event 
$\bigcap_{t\in  \typelab \cup \typeemp} \left( \Ev_1(t,\delta)\cap \Ev_2(t,\delta) \right)$ we must have 
\vspace{-5pt}
$$ \max_{(k,q) \in \typelab \times \typeemp, N(k,q)>0}\left(\alpha^{\max}_{kq} - \alpha^{\min}_{kq}\right) \leq O^*\left( 1/n^{1/\max(K,Q)}\right).$$

\subsection{Proof of the lower bound}
\label{subsec:lb_proof_overview}

Our lower bound follows from the following proposition, proved in Appendix \ref{app:lower_bound}.

\begin{proposition}\label{prop:lower_bd}
Consider a sequence of markets (indexed by $\tn$) with $|\typelab|=K$ types of labor,
with $\tn$ workers of each type, and a single
type ``1" of employers, with $(K-1)\tn+1$ employers of this type.
(Assumptions \ref{assump:essential_assumpts} and \ref{assumpt:avoid_symmetry} are satisfied.)
Set $u(k_*,1)=0$ for
some $k_* \in \ml$, and $u(k,1)=3$ for all $k \in \ml \backslash k_*$.
For this market, we have $\E[\core]=\Omega^*(1/(n^{1/K}))$.
\end{proposition}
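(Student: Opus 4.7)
The plan is to establish the lower bound by identifying the structure of the unique maximum-weight matching $M$ for this market and exhibiting a one-parameter family of distinct core allocations whose non-$k_*$ prices vary by $\Omega^*(n^{-1/K})$.

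First, I would characterize $M$. Since $u(k_*,1)=0$ and $u(k,1)=3$ for $k\ne k_*$, while $\epsilon,\eta\in[0,1]$, every employer prefers any non-$k_*$ worker over any $k_*$ worker by a margin of at least $1$. Combined with the cardinalities ($(K-1)\tn$ non-$k_*$ workers, $(K-1)\tn+1$ employers), $M$ matches every non-$k_*$ worker and exactly one $k_*$ worker $i^*=\arg\max_{i\in k_*}\eta^1_i$ to a single leftover employer $j^*$. Decomposing the total matching value, $j^*$ is identified (with probability one) as the employer maximizing $Y_j:=\epsilon^{k_*}_j-\max_{k'\ne k_*}\epsilon^{k'}_j$. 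In particular $|M(k')|=\tn$ for every $k'\ne k_*$, the unmatched set $U$ consists of $\tn-1$ type-$k_*$ workers, and no employer is unmatched.

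Next, I apply condition (ST) of Proposition~\ref{prop:general_nec_cond} to the type pairs $(k_*,1)$ and $(k',1)$ for each $k'\ne k_*$:
\[
L_{k'}:=\max_{j\in M(k')}\bigl[\epsilon^{k_*}_j-\epsilon^{k'}_j\bigr]\;\le\;\alpha_{k',1}-\alpha_{k_*,1}\;\le\;\epsilon^{k_*}_{j^*}-\epsilon^{k'}_{j^*}=:U_{k'}.
\]
Two key observations follow from the matching structure: (i) for $j\in M(k')$ we have $\epsilon^{k'}_j=\max_{k''\ne k_*}\epsilon^{k''}_j$, so $L_{k'}=\max_{j\in M(k')} Y_j$, and since $j^*\notin M(k')$ this is at most $Y_{(2)}$; (ii) $\epsilon^{k'}_{j^*}\le \max_{k''\ne k_*}\epsilon^{k''}_{j^*}$, giving $U_{k'}\ge Y_{j^*}=Y_{(1)}$. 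Hence the interval width satisfies $W_{k'}:=U_{k'}-L_{k'}\ge Y_{(1)}-Y_{(2)}$. A direct integration against the density of $\max$ of $K-1$ uniforms gives $\Pr[Y>1-\delta]=\delta^K/K$, and standard extreme-value arguments then yield $\E[Y_{(1)}-Y_{(2)}]=\Theta(n^{-1/K})$.

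I translate this interval width into a lower bound on $\core$ via a shift argument along the direction $v:=\sum_{k'\ne k_*}e_{k'}$ in price-space: adding $s$ to every non-$k_*$ price while keeping $\alpha_{k_*,1}$ fixed. Such a shift preserves every (ST) constraint between pairs of non-$k_*$ types (which only restrict differences) and every (IM) constraint on $\alpha_{k_*,1}$; the (IM) upper bounds on $\alpha_{k',1}$ (which are near $3+\min\eta$) have ample slack relative to any realized $\alpha_{k',1}\le 2$. Only the (ST) $(k_*,1)$-$(k',1)$ constraint limits the shift, each giving interval length $W_{k'}$. Starting from a core allocation close to the midpoint of each $(k_*,k')$ interval, the admissible range of $s$ has length at least $\min_{k'}W_{k'}\ge Y_{(1)}-Y_{(2)}$ (up to absolute constants). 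Thus each $\alpha_{k',1}$ with $k'\ne k_*$ has range $\Omega(Y_{(1)}-Y_{(2)})$ in the core, and averaging the per-match $\alpha$-variation over the $(K-1)\tn+1$ matched pairs (of which a $1-O(1/n)$ fraction is non-$k_*$) gives $\E[\core]=\Omega(\E[Y_{(1)}-Y_{(2)}])=\Omega^*(n^{-1/K})$.

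The main obstacle is verifying the existence of a core allocation simultaneously close to the midpoint of each (ST) $(k_*,k')$ interval: the cross-type (ST) constraints among non-$k_*$ types are narrow (width $\sim 1/n$) and could preclude exact centering for every $k'$ at once. Since these cross-type constraints are invariant under the shift direction $v$, however, they do not directly limit the admissible $s$-range; one can make the argument rigorous via a polyhedral calculation showing that the projection of the core onto the $v$-axis has length $\Omega(\min_{k'}W_{k'})$, or by a lattice-theoretic comparison between worker-optimal and employer-optimal allocations, exploiting that the (IM) widths for $k_*$ are $O(1/\tn)$ and therefore negligible at the $n^{-1/K}$ scale.
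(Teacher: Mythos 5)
Your overall strategy matches the paper's: identify the marginal employer $j^*$ matched to a type-$k_*$ worker, reduce the core width to the gap between the top two order statistics of the i.i.d.\ variables $Y_j = \eps^{k_*}_j - \max_{k'\ne k_*}\eps^{k'}_j$ (the paper uses $X_j=-Y_j$), and translate that gap into a one-parameter shift of the non-$k_*$ prices. The computation $\Pr[Y>1-\delta]=\delta^K/K$ and the $\Theta(n^{-1/K})$ scaling of the top gap are also correct.

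However, your observation (i) is false as stated: for $j\in M(k')$ it is \emph{not} true that $\eps^{k'}_j = \max_{k''\ne k_*}\eps^{k''}_j$. Membership $j\in M(k')$ in a maximum-weight matching only implies $\eps^{k'}_j + \alpha_{k'} \geq \eps^{k''}_j + \alpha_{k''}$, so $\eps^{k'}_j$ can fall short of $\max_{k''}\eps^{k''}_j$ by as much as the spread $\max_{k',k''\in\cK}|\alpha_{k'}-\alpha_{k''}|$. Without a bound on this spread, your chain $L_{k'} = \max_{j\in M(k')} Y_j \leq Y_{(2)}$ breaks, and consequently so does the claim $W_{k'} \geq Y_{(1)}-Y_{(2)}$. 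Controlling the spread is the real content that is missing: the paper proves it as Claim~\ref{cl:alphabkuk_diff_bd}, showing that $\alpha_{\bk}-\alpha_{\uk}\leq \delta = n^{-0.51}$ with high probability. That claim uses a separate concentration event (Lemma~\ref{lemma:Ev3}, on the volume of $\hcR^{k_1,k_2}(\delta)$ and the balanced sizes $n_{k'}=\tn$) and is then propagated into the exact form you need as Claim~\ref{cl:maxkeps_diff_bd}. Since $n^{-0.51} = o(n^{-1/K})$ for $K\geq 2$, this spread is negligible at the $n^{-1/K}$ scale, and the shift range remains $\Omega(n^{-1/K})$.

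Your ``main obstacle'' paragraph is aimed at a different (and not the binding) issue --- namely, whether one can center all the $(k_*,k')$ intervals simultaneously --- and the proposed resolutions (``polyhedral calculation,'' ``lattice-theoretic comparison'') are not arguments but placeholders. What is actually required is the pigeonhole/volume argument behind Claim~\ref{cl:alphabkuk_diff_bd}: if two non-$k_*$ prices differed by more than $\delta$, then too many employers would have to be matched to the cheaper type (more than $\tn$), contradicting the balanced design. You should also be explicit about passing from $\E[Y_{(1)}-Y_{(2)}] = \Theta(n^{-1/K})$ to $\E[\core] = \Omega(n^{-1/K})$: you need a deterministic lower bound $\core \geq (Y_{(1)}-Y_{(2)}) - 2\delta$ on a high-probability event before taking expectations, and the paper instead constructs a constant-probability event $\Ev$ forcing $Y_{(1)}-Y_{(2)}\geq 1/\tn^{1/K}$, which is cleaner. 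Finally, the identification $j^* = \arg\max_j Y_j$ is itself not automatic --- it also relies on the spread bound --- so that step needs the same Claim~\ref{cl:alphabkuk_diff_bd} machinery.
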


Note that the sequence of markets described can easily be ``dressed up" to fill in the
gaps in market sizes\footnote{Here $n=(2K-1)\tn+1$ for $\tn=1,2, \ldots$ but intermediate values of
$n$ can be handled by having slightly fewer workers of type $k_*$, which leaves
our analysis essentially unaffected.} and to accommodate
$Q \leq K$ types of firms\footnote{Let each worker type $q\neq 1$ have $\tn$ agents each
and $u(\, \cdot \, , q)=-2$. These workers are always unmatched, leaving the
core unaffected.}. If $Q> K$, we simply swap the roles of workers
and firms in our construction, leading to  $\E[\core]=\Omega^*(1/(n^{1/Q}))$ as needed.
Thus, the lower bound in Theorem \ref{thm:general_k_q} follows from Proposition \ref{prop:lower_bd}.

The rough intuition for our construction in Proposition \ref{prop:lower_bd} is as follows: For our choice of $u$'s it
is not hard to see that all workers of types different from $k_*$ are always matched in the core. One employer $j_*$ is matched to
a worker of type $k_*$.
Suppose vector $(\alpha_k)_{k \in \typelab}$ is in the core. Given that all types $k \neq k_*$ are a priori symmetric, we would expect that the $\alpha_k$'s for $k \neq k_*$ are close to each other (we formalize using Lemma \ref{lem:bd_min_diff_alpha_gen_k} that they are usually no more than $\delta \sim 1/\sqrt{\tn}$ apart). Assuming this is the case, we can order employers based on $X_j= \max_{k \neq k_*}\eps_j^k - \eps_j^{k_*}$, and $j_*$ should usually be the employer with smallest $X_j$, since this employer has the largest productivity with respect to $k_*$ relative to the other types. Now, the $X_j$'s are i.i.d., and a short calculation establishes that the distance between the first and second order statistics of $(X_j)_{j \in \me}$ is $\Theta(1/n^{1/K})$.
This ``large" gap between the first two order statistics allows for $(\alpha_{k_*}, (\alpha_k+\theta)_{k \neq k_*})$ to remain within the core for a range of values of $\theta \in \reals$ that has expected length $\Theta(1/n^{1/2})$ for $K=2$ and $\Theta(1/n^{1/K}) - \Theta(\delta) = \Theta(1/n^{1/K})$ for $K > 2$, leading to the stated lower bound on $\core$.

We remark that the key quantity here, the gap between the first two order statistics of $(X_j)_{j \in \me}$, is determined by the tail behavior (both the left and right tails) of the $\eps$'s, along with the number of types $K$. See Section \ref{sec:discussion} for further discussion. 

\section{Discussion}
\label{sec:discussion}

This paper quantifies the size of the core in matching markets with transfers,
as a function of market characteristics. We considered a model of an
assignment market with a fixed number of types of workers and firms.
We modelled the value
of a match between a pair of agents as a sum of a deterministic term determined by the
pair of types, and a random component which is the sum of two terms, each depending
on the identity of one of the agents and the type of the other.
Under reasonable assumptions, we showed that the size of the core
is bounded as $O^*(1/n^{1/\ell})$, where each side of the market
contains no more than $\ell$ types.

Our work answers some questions but raises several others. One
question is what happens if the random productivity
terms are drawn from unbounded distributions. For the market we construct
for our lower bound, the core size is determined by the tail behavior of the
random productivities, cf. Section \ref{subsec:lb_proof_overview}, suggesting that the
core could be larger in worst case if the productivities have an unbounded distribution.

On the other hand, it is of interest to understand the core in
typical/average case markets, as opposed to worst case markets.
Our bound of $O^*(1/n)$ for the special case of only one type of employer and
$\Theta(n)$ more employers than workers (a corollary of Theorem \ref{thm:K-1-emp})
does not depend on the number of worker types, in contrast to our general bound,
which implies that a relatively larger core can result in worst case from having more types.
How does the core size depend on the number of types in typical/average case markets?

It would be interesting to extend our results to many-to-one markets, where
employers can each have more than one opening. We expect that our results regarding
the core (also our proofs) extend to the case where 
each employer has capacity bounded by a constant, and 
employer utility is additive across matches. 

\newpage

\bibliographystyle{abbrv}
\bibliography{SSCore}{}

\appendix
    \setcounter{lemma}{0}
    \setcounter{claim}{0}
    \setcounter{remark}{0}
    \renewcommand{\thelemma}{\Alph{section}.\arabic{lemma}}
    \renewcommand{\theclaim}{\Alph{section}.\arabic{claim}}
    \renewcommand{\theremark}{\Alph{section}.\arabic{remark}}

\setcounter{lemma}{0}
\setcounter{claim}{0}
\setcounter{remark}{0}

\section{Results on point processes in the unit hypercube}
\label{app:hypercube_results}

Consider the $K$ dimensional unit hypercube $[0,1]^K$, and the Poisson process of uniform rate $n$
in this hypercube, leading to $N$ points $(\eps_i)_{i=1}^N$. (Note that $\E[N]=n$.)
Here $\eps_i=(\eps_i^1, \eps_i^2, \ldots, \eps_i^K)$. Let $\cK = \{1,2, \ldots, K\}$ denote the set
of dimension indices.

Let $\cR^k$ be the region defined by Eq.~\eqref{def:R_k}, and let $\cV^k$ and $V^k$ be as defined by Eqs.~\eqref{eq:cVkdef} and \eqref{eq:Vkdef} respectively. Similarly, let $\cR^{k_1,k_2}(\delta)$ be the region defined by Eq.~\eqref{def:R_k1_k2}, and let $\cV^{k_1, k_2}(\delta)$ and $V^{k_1, k_2}(\delta)$ be as defined by Eqs.~\eqref{eq:cVk1k2def} and ~\eqref{eq:Vk1k2def} respectively.

The following lemma, key to our proof of Theorem~\ref{thm:general_k_q}, says that with high probability, all the $(V^k)$'s and the $(V^{k_1, k_2})$'s are no larger than a (deterministic) function\footnote{In fact, our proof of Lemma \ref{lemma:max_separation_hypercube} identifies a bound of $(C\log n /n)^{1/K}$ where $C=6K(K-1)$, for sufficiently large $n$.} of $n$ that scales as $O^*(1/n^{1/K})$.

\begin{lemma}\label{lemma:max_separation_hypercube} Let $\cR^k$ be the region defined by Eq.~\eqref{def:R_k}, and let $\cV^k$ and $V^k$ be as defined by Eqs.~\eqref{eq:cVkdef} and \eqref{eq:Vkdef} respectively. Similarly, let $\cR^{k_1,k_2}(\delta)$ be the region defined by Eq.~\eqref{def:R_k1_k2}, and let $\cV^{k_1, k_2}(\delta)$ and $V^{k_1, k_2}(\delta)$ be as defined by Eqs.~\eqref{eq:cVk1k2def} and ~\eqref{eq:Vk1k2def} respectively.
  Fix $K \geq 1$. Then there exists $f(n,K) = O^*(1/n^{1/K})$ such that for any $\delta=\delta(n) \in [0,1/2]$ the following holds: Let
  \begin{align}\label{def:ev_1}
    \Ev_1 = \left \{ \max \left ( \max_{k \in \cK} V^k, \max_{(k_1, k_2)\in \cK^{(2)}} V^{k_1, k_2}(\delta) \right ) \leq f(n,K)\, \right \},
  \end{align}
  where $\cK^{(2)}=\{(k_1,k_2):k_1, k_2 \in \cK, k_1\neq k_2\}$. (If $K=1$, then $\cK^{(2)}$ is the empty set $\emptyset$ in which case we follow the convention that $\max_{\emptyset } [\, \cdot \,]= -\infty$.)
  We have
  $$
  \prob(\Ev_1) \geq 1-1/n \, .
  $$
\end{lemma}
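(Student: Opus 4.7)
The plan is to express the random sets $\cV^k$ and $\cV^{k_1,k_2}(\delta)$ as one-dimensional non-homogeneous Poisson processes on $[0,1]$ and $[-1+\delta,1]$ respectively, and then apply a standard covering / union-bound argument to control the maximum consecutive gap in each. By the restriction and mapping properties of Poisson processes, for each $k \in \cK$, $\cV^k$ is a non-homogeneous Poisson process whose intensity at $s$ equals $n$ times the $(K-1)$-dimensional volume of the slice $\{x \in \cR^k : x^k = s\}$, giving $\lambda^k(s) = n s^{K-1}$. An analogous calculation---fixing $x^{k_1} = a \geq \delta$ and $x^{k_2} = b$ and integrating out the remaining $K-2$ coordinates, each constrained to $[0,a]$---shows that $\cV^{k_1,k_2}(\delta)$ is a Poisson process with piecewise intensity equal to $n[(1+y)^{K-1}-\delta^{K-1}]/(K-1)$ on $[-1+\delta, 0]$, the constant $n[1-\delta^{K-1}]/(K-1)$ on $[0,\delta]$, and $n[1-y^{K-1}]/(K-1)$ on $[\delta,1]$.

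Next, I would set $f(n,K) = C_K(\log n/n)^{1/K}$ for a sufficiently large constant $C_K = C_K(K)$, partition each relevant interval into consecutive subintervals of length $f/2$, and estimate the probability that some subinterval contains no Poisson points. For a subinterval $J$, this probability is exactly $\exp(-\int_J \lambda)$; hence if every subinterval has expected count at least $2\log n$, a union bound over the $O(K^2 n^{1/K})$ subintervals arising from the $K$ choices of $k$ and $K(K-1)$ choices of $(k_1,k_2)$ bounds the probability that any is empty by $O(n^{1/K - 2}) \leq 1/n$ for $n$ large. On the complement event, every pair of consecutive elements in $\cV^k \cup \{0,1\}$ and in $\cV^{k_1,k_2}(\delta) \cup \{-1+\delta, 1\}$ differ by at most $f(n,K)$, which yields $\Ev_1$.

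The main obstacle is that the intensities vanish at the endpoints, so the binding constraint on $C_K$ comes from the boundary subintervals (bulk subintervals have expected count $\Theta(nf) = \Theta(n^{1-1/K}(\log n)^{1/K})$, which comfortably exceeds $2\log n$), and the bound must hold \emph{uniformly} over $\delta \in [0,1/2]$. Three boundary cases must be checked. For $\cV^k$ near $s=0$, the count in $[0,f/2]$ equals $n(f/2)^K/K$, which exceeds $2\log n$ once $C_K^K \geq 2^{K+1}K$. For $\cV^{k_1,k_2}(\delta)$ near $y=1$, a Taylor expansion gives $1-(1-\eta)^{K-1} \geq 2^{-(K-2)}(K-1)\eta$ on $\eta \in [0,1/2]$, so the count in $[1-f/2,1]$ is $\Omega(n(f/2)^2)$, which is fine for $K\geq 2$ since $f \geq 1/\sqrt{n}$. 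The delicate case is the boundary at $y=-1+\delta$, which must be handled uniformly in $\delta$; this is resolved by the binomial identity
\[
(\delta+\xi)^{K-1}-\delta^{K-1} \;=\; \sum_{j=1}^{K-1}\binom{K-1}{j}\delta^{K-1-j}\xi^j \;\geq\; \xi^{K-1},
\]
which bounds the expected count in the first subinterval from below by $n(f/2)^K/(K(K-1))$, independent of $\delta$. Choosing $C_K$ large enough to satisfy all of these finitely many constraints simultaneously yields $f(n,K) = O^*(1/n^{1/K})$ and completes the proof.
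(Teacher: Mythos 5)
Your proposal is correct and follows essentially the same route as the paper's proof: compute the one-dimensional intensity of the projected Poisson processes (yielding $ns^{K-1}$ for $\cV^k$ and the same piecewise density on $[-1+\delta,1]$ for $\cV^{k_1,k_2}(\delta)$), cover the interval by subintervals of length $\Theta((\log n/n)^{1/K})$, and union-bound the probability that any subinterval is empty, with the boundary-at-$-1+\delta$ case handled uniformly in $\delta$ via the binomial lower bound $(\delta+\xi)^{K-1}-\delta^{K-1}\ge\xi^{K-1}$ (the paper uses the equivalent inequality $(\delta+\Delta)^K\ge\Delta^K+K\Delta\delta^{K-1}+\delta^K$). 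You also correctly identify the two other boundary cases near $s=0$ and $y=1$ that the paper checks, so the two arguments match in all essentials.
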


\begin{proof}

Let $m=\lfloor 1/(C\log n/n)^{1/K}\rfloor$ for some $C< \infty$ that we will choose later, and let $\Delta = 1/m$. Note that
\begin{align}
\Delta \geq (C\log n/n)^{1/K} \, .
\label{eq:Delta_lb}
\end{align}
In our analysis of $V^k$ (resp. $V^{k_1,k_2}$), we will divide the interval $[0,1]$ (resp. $[-1+\delta,1]$) into subintervals of size $\Delta$ each, and show that with large probability, each subinterval contains at least one value of $\eps_i \in \cR^k$ (resp. $\eps_i^{k_1} - \eps_i^{k_2} \textup{ for } \{i: \eps_i \in \cR^{k_1, k_2}\}$). We will find that the density of points in $\cV^k$ (resp. $\cV^{k_1, k_2}$) is smallest near 0 (resp. $-1+\delta$), but even for the interval $[0,\Delta]$ (resp. $[-1+\delta, -1+\delta+\Delta]$), the number of points is Poisson with parameter $\Theta(n\Delta^K) = \Theta(\log n)$, allowing us to obtain the desired result for appropriately chosen $C$.

We first present our formal argument leading to a bound on $V^k$, followed by a similar argument leading to a bound on $V^{k_1, k_2}$.
Let
\begin{align}
  \Ev^k \equiv \bigcap_{i=0}^{m-1}\,\{\, [i\Delta, (i+1)\Delta]\cap \cV^k \neq \emptyset \, \}\, ,
  \label{eq:Evk_def}
\end{align}
where $\emptyset$ is the empty set.
Clearly, $\Ev^k \Rightarrow V^k \leq 2\Delta$. We now show that for any $k \in \cK$, we have $\prob\big(\xoverline{\Ev^k} \big) \leq 1/n^{K+2}$, for appropriately chosen $C$.
Define
\begin{align}
  h^j(x, \theta) =
   \left \{\begin{array}{ll}
  x^j & \textup{for } x\in [\theta,1]\\
  0 & \textup{otherwise}\, .
  \end{array} \right .
  \label{eq:fjdef}
\end{align}
It is easy to see that $\cV^k$ follows a Poisson process with density $nh^{K-1}(\, \cdot \, , 0)$. The number of points in interval $[i\Delta, (i+1)\Delta]$ is hence Poisson with parameter
\begin{align*}
 n\int_{i\Delta}^{(i+1)\Delta} h^{K-1}(x) \, \D x = ((i+1)^K -i^K)n\Delta^K/K \geq n\Delta^K/K \geq C\log n/K \, ,
\end{align*}
where we used the lower bound on $\Delta$ in \eqref{eq:Delta_lb}.
It follows that
\begin{align*}
  \prob([i\Delta, (i+1)\Delta]\cap \cV^k = \emptyset ) \leq \exp(-C\log n/K) =1/n^{C/K} \leq 1/n^3 \, ,
\end{align*}
for $C \geq 3K$.
We deduce by union bound over $i=0, 1, \ldots, m-1$ and De Morgan's law on \eqref{eq:Evk_def} that
\begin{align*}
  \prob\big(\xoverline{\Ev^k}\big) \leq m/n^{3} \leq n^{1/K}/n^{3} \leq 1/n^{2} \, .
\end{align*}
Using union bound over $k$ we deduce that
\begin{align}
\prob\Big (\cup_{k}\xoverline{\Ev^k}\Big ) \leq K/n^{2}
\label{eq:Ekc_bound}
\end{align}

We now present a similar argument to control $V^{k_1, k_2}$ when $K\geq 2$. Let $m'=(1-\delta)/\Delta$. (To simplify notation
we assume $m'$  is an integer. The case when it is not an integer can be easily handled as well.)
Let
\begin{align}
  \Ev^{k_1,k_2} \equiv \bigcap_{i=-m'}^{m-1}\,\{\, [i\Delta, (i+1)\Delta]\cap \cV^{k_1,k_2} \neq \emptyset \, \}\, ,
  \label{eq:Evk1k2_def}
\end{align}
where $\emptyset$ is the empty set.
Clearly, $\Ev^{k_1,k_2} \Rightarrow V^{k_1,k_2} \leq 2\Delta$. We now show that for any $k_1 \neq k_2$, we have 
$\prob\big(\xoverline{\Ev^{k_1,k_2}}\big) \leq K(K-1)/n^{2}$, for appropriately chosen $C$.
 It is easy to see that the two-dimensional projection $(x,y)=(\eps_i^{k_1}, \eps_i^{k_2})$ of points in
 $\cR^{k_1,k_2}$ follows a two-dimensional Poisson process with density $h^{K-2}(x)\ind(y \in [0,1])$, cf. \eqref{eq:fjdef}. We deduce that values in $\cV^k$ follow a one-dimensional Poisson process with density $ng$ for $g=h^{K-2}(\, \cdot\, ,\delta)*\ind(\in[-1,0])$, where $*$ is the convolution operator.
 A short calculation yields
 \begin{align*}
   g(x) =
   \left \{
   \begin{array}{ll}
   \big[ (x+1)^{K-1}-\delta^{K-1}\big ]/(K-1) & \textup{ for } x\in [-1+\delta,0)\\
   \big[ 1-\delta^{K-1}\big ]/(K-1) & \textup{ for } x\in [0, \delta)\\
   (1-x^{K-1})/(K-1) & \textup{ for } x \in [\delta,1]\\
   0 & \textup{ otherwise.}
   \end{array}\right .
 \end{align*}

The number of points in interval $[i\Delta, (i+1)\Delta]$ is Poisson with parameter
\begin{align*}
 n\int_{i\Delta}^{(i+1)\Delta} g(x) \, \D x  \, .
\end{align*}
Below we bound the value of this parameter for different cases on $i$, obtaining a bound
of $(K+3)\log n$ in each case, for large enough $C$.

For $-m'\leq i<0$, the smallest parameter occurs for $i=-m'$, since
$g(x)$ is monotone increasing in $[-1+\delta, 0]$.
Thus, the Poisson parameter is lower bounded by its value for $i=-m'$, which is
\begin{align*}
  &n \big [ \big( (\delta+\Delta)^K - \delta^K \big )/K - \delta^{K-1}\Delta \big ]/(K-1) \\
  \geq \,&n\Delta^K/(K(K-1)) \geq C\log n/(K(K-1)) \geq 3\log n \, ,
\end{align*}
for $C \geq 3K(K-1)$, using \eqref{eq:Delta_lb}, and $(\delta+\Delta)^K \geq \Delta^K +K\Delta \delta^{K-1}+\delta^K$.

For $0\leq i<m-m'$,  the Poisson parameter is
\begin{align*}
  n \big[ 1-\delta^{K-1}\big ] \Delta/(K-1) \geq n \Delta/(2(K-1)) \geq n\Delta^K/(K(K-1))\geq 3\log n\, ,
\end{align*}
using $\delta \geq 1/2$ and $K \geq 2$.

For $(m-m')\leq i<m$,  the Poisson parameter is
\begin{align*}
  &n (\Delta -\Delta^K((1+i)^K - i^K)/K )/(K-1) \, .
\end{align*}
A short calculation allows us to again bound this below by $(K+3)\log n$ (the bound is slack for $K>2$):
Note that
\begin{align*}
  \Delta^K((1+i)^K - i^K) \leq \Delta^K(m^K - (m-1)^K) = 1-(1-\Delta)^K\\
  \leq K\Delta - K(K-1)\Delta^2/2 + K(K-1)(K-2)\Delta^3/6\, ,
\end{align*}
where we used that $(1+i)^K - i^K$ is monotone increasing in $i$ for $i \geq 0$.
Substituting back, we obtain that the Poisson parameter is bounded by
\begin{align*}
  n (1-(K-2)\Delta/3)\Delta^2/2 \geq n\Delta^2/4
\end{align*}
for $(K-2)\Delta/3 \leq 1/2$, which occurs for sufficiently large $n$. Finally,
$\Delta^2 \geq \Delta^K$, hence $n\Delta^2/4 \geq n\Delta^K/4 \geq 3\log n$ for
$C\geq 12$.

Choosing $C=6K(K-1)$, in all cases the Poisson parameter is bounded below by $3\log n$.
It follows that
\begin{align*}
  \prob([i\Delta, (i+1)\Delta]\cap \cV^{k_1, k_2} = \emptyset ) \leq \exp(-3\log n) =1/n^{3} \, .
\end{align*}
We deduce by union bound over $i$ and De Morgan's law on \eqref{eq:Evk1k2_def} that
\begin{align}
  \prob\big(\xoverline{\Ev^{k_1,k_2}}\big) \leq 2m/n^{3} \leq n^{1/K}/n^{3} \leq 1/n^{2} \, ,
  \label{eq:Ekc_bound}
\end{align}
for large enough $n$.
Using union bound over $(k_1, k_2)$ we deduce that
\begin{align}
\prob\Big (\cup_{(k_1,k_2)}\xoverline{\Ev^{k_1,k_2}}\Big ) \leq K(K-1)/n^{2}
\label{eq:Ek1k2c_bound}
\end{align}
Combining \eqref{eq:Ekc_bound} and \eqref{eq:Ek1k2c_bound} by union bound and using De Morgan's law,
we deduce that
\begin{align*}
  \prob\Big[ \Big(\cap_k \Ev^k\Big ) \cap \Big(\cap_{(k_1,k_2)} \Ev^{k_1,k_2}\Big)\Big ] \geq 1-K^2/n^{2} 
\end{align*}
for large enough $n$. This implies that for large enough $n$, with probability at least $1-K^2/n^{2}$ we have
  \begin{align*}
    \max \left ( \max_k V^k, \max_{k_1, k_2} V^{k_1, k_2} \right ) \leq 2\Delta \leq 3 (C\log n/n)^{1/K} = O^*(1/n^{1/K}) \, ,
  \end{align*}
  implying the main result for large enough $n$ (note that $K^2/n^2 < 1/n$ for large enough $n$).
  For small values of $n$, we can simply choose $f(n,k)$ large enough to ensure that the bound holds
  with sufficient probability.
\end{proof}

\blankline.


\begin{lemma}  \label{lemma:cuboid_orderstat}
  For 
  $k \in \cK$, let $\tcR^k(\delta)$, $\tcV^k(\delta)$ and $\tV^k(\delta)$ be as defined by Eqs.~\eqref{def:cuboid}, ~\eqref{eq:cVcubdef} and \eqref{eq:Vcubdef} respectively.
  Fix $K \geq 1$. There exists $f(n) = O^*(1/n)$ such that for any $\delta \in (0,1]$, the following occurs: Let
  \begin{align}\label{def:Ev_2}
    \Ev_2 \equiv \left \{ \max_{k \in \cK} \tV^k(\delta) \leq f(n)/\delta^{K-1} \right \}\, .
  \end{align}
  Then
  $$
  \prob(\Ev_2) \geq 1-1/n \, .
  $$
\end{lemma}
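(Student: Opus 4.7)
The plan is to mirror the template used in the proof of Lemma~\ref{lemma:max_separation_hypercube}: for each dimension $k \in \cK$, partition $[0,1]$ (the range of the $k$-th coordinate) into equal subintervals of width $\Delta$, show that each subinterval contains at least one value of $\tcV^k(\delta)$ with very high probability, and then union-bound over $k$ and over subintervals to conclude that $\tV^k(\delta) \leq 2\Delta$ for every $k$.

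The key computational observation is that the projection onto the $k$-th coordinate of those points of the Poisson process lying in $\tcR^k(\delta)$ forms a one-dimensional Poisson process on $[0,1]$ with rate $n\delta^{K-1}$. Indeed, $\tcR^k(\delta)$ is a box whose cross-section orthogonal to the $k$-axis is a $(K-1)$-dimensional cube of volume $\delta^{K-1}$. Consequently, for any interval of length $\Delta \subseteq [0,1]$, the number of points of $\tcV^k(\delta)$ it contains is Poisson with parameter exactly $n\delta^{K-1}\Delta$. This is a much simpler situation than in Lemma~\ref{lemma:max_separation_hypercube}, where the density of the one-dimensional projected process was non-uniform (polynomial in the coordinate).

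First suppose $n\delta^{K-1} \geq C\log n$ for a constant $C$ to be fixed (say $C=3$). Choose $\Delta = C\log n/(n\delta^{K-1}) \leq 1$, so that each of the $m = \lceil 1/\Delta \rceil \leq n$ subintervals of $[0,1]$ is empty with probability at most $\exp(-n\delta^{K-1}\Delta) = 1/n^C$. A union bound over the $m \leq n$ subintervals and the $K$ dimensions shows that the event ``some subinterval, for some $k$, is empty'' has probability at most $Kn/n^C \leq 1/n$ for $n$ large enough. On the complementary event, consecutive elements of $\tcV^k(\delta) \cup \{0,1\}$ differ by at most $2\Delta$, so $\tV^k(\delta) \leq 2\Delta = 2C\log n/(n\delta^{K-1})$ simultaneously for all $k$. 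Setting $f(n) = 2C\log n/n = O^*(1/n)$ then gives the desired bound on $\Ev_2$.

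The remaining regime is $n\delta^{K-1} < C\log n$, where the subinterval argument degenerates since $\Delta$ would exceed $1$. This is handled trivially: in that case $f(n)/\delta^{K-1} = 2C\log n/(n\delta^{K-1}) > 2 > 1 \geq \tV^k(\delta)$ deterministically, because $\tV^k(\delta)$ is at most the length of $[0,1]$. For small $n$, $f(n,K)$ can be enlarged by a constant (absorbed into $O^*$) to ensure the bound holds with the required probability. The main (minor) subtlety is coordinating the choice of $\Delta$ across the full range of admissible $\delta$; no genuinely hard step arises because the cuboid geometry cleanly reduces the claim to a uniform one-dimensional Poisson maximum-gap estimate, which is tight up to the logarithmic factor hidden in $O^*(\,\cdot\,)$.
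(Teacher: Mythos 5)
Your proof is correct and follows essentially the same route as the paper's: both observe that the $k$-th-coordinate projection of points in $\tcR^k(\delta)$ is a one-dimensional Poisson process of rate $n\delta^{K-1}$, partition $[0,1]$ into subintervals of width $\Delta\sim\log n/(n\delta^{K-1})$, union-bound over subintervals and over $k\in\cK$, and dispose of the regime $n\delta^{K-1}\lesssim\log n$ trivially because $\tV^k(\delta)\leq 1$. The only cosmetic difference is that you make the union bound over $k$ explicit, which the paper leaves implicit.
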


\begin{proof}
  The values in the set $\tcV^k\subset [0,1]$ follow a one-dimensional
  Poisson process with rate $n\delta^{K-1}$.
  Choose $f(n) = 6\log n/n$. If $6\log n/(n\delta^{K-1}) \geq 1$ there is nothing
  to prove, since $\max_{k \in \cK} \tV^k(\delta) \leq 1$ by definition.  Hence
  assume $6\log n/(n\delta^{K-1}) < 1$.
  Divide $[0,1]$ into intervals of length $\Delta = f(n)/(3\delta^{K-1})=3\log n/(n\delta^{K-1})$
  (to simplify notation, we assume $1/\Delta \geq 2$ is an integer.
  The argument can easily be adapted to handle $n\delta^{K-1}/(3\log n)$ not an integer).
  The probability that any particular interval of length $\Delta$
  does not contain a point is no more than $\exp(-3 \log n)= 1/n^3$.
  The number of intervals of length $\Delta$ is $1/\Delta = n\delta^{K-1}/(3\log n) \leq n$
  for large enough $n$. By union bound, with probability at least $1-1/n^2$, each $\Delta$-interval
  contains at least one point, implying that $\tV^k(\delta) \leq 2 \Delta = f(n)/\delta^{K-1}$ with
  probability at least $1-1/n^2$, as required.
\end{proof}

\blankline

In this section so far we considered the rate $n$ Poisson process in $[0,1]^K$ for convenience. However, the
results we proved can easily be transported to the closely related model of $n$ points distributed i.i.d. uniformly in $[0,1]^K$.

\begin{lemma}\label{lemma:coupling}
Consider $n$ points distributed i.i.d. uniformly in $[0,1]^K$. Lemmas \ref{lemma:max_separation_hypercube} and
\ref{lemma:cuboid_orderstat} hold for this model as well.
\end{lemma}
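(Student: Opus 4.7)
The plan is a standard Poissonization-style coupling. The key monotonicity observation is that the random variables $V^k$, $V^{k_1,k_2}(\delta)$, and $\tV^k(\delta)$ are all non-increasing when additional points are added to the hypercube: inserting a point into $[0,1]^K$ either leaves the relevant set ($\cV^k$, $\cV^{k_1,k_2}(\delta)$, or $\tcV^k(\delta)$) unchanged or adds one element to it, and inserting a new element into a finite subset of the real line can only reduce the maximum gap between consecutive values (including the fixed endpoints $\{0,1\}$ or $\{-1+\delta,1\}$). Hence if the event $\Ev_1$ of Lemma~\ref{lemma:max_separation_hypercube} (respectively $\Ev_2$ of Lemma~\ref{lemma:cuboid_orderstat}) holds for some configuration of points, it continues to hold for every superset configuration.

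To exploit this, consider a Poisson process on $[0,1]^K$ of rate $n' := n - \lceil 3\sqrt{n\log n}\rceil$. A standard Chernoff-type bound for the Poisson distribution gives that the number of points $\tilde N$ generated by this process satisfies $\tilde N \leq n$ with probability at least $1-1/n^2$ for $n$ sufficiently large. Conditional on $\tilde N = m \leq n$, the Poisson points are i.i.d.\ uniform on $[0,1]^K$, so augmenting them with $n-m$ additional independent uniform points yields a sample of exactly $n$ i.i.d.\ uniform points, producing the desired coupling. Lemma~\ref{lemma:max_separation_hypercube} applied at rate $n'$ furnishes a function $f'(n',K) = O^*\!\big(1/(n')^{1/K}\big)$ such that the corresponding event holds with probability at least $1-1/n'$; similarly Lemma~\ref{lemma:cuboid_orderstat} at rate $n'$ yields $f'(n') = O^*(1/n')$. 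Since $n'/n \to 1$, both bounds remain $O^*(1/n^{1/K})$ and $O^*(1/n)$ respectively when re-expressed in terms of $n$.

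By the monotonicity observation above, the events from the Poisson model are preserved under the addition of the extra uniform points, so each event holds in the i.i.d.\ model on the intersection of $\{\tilde N \leq n\}$ with the corresponding Poisson event. A union bound then gives failure probability at most $1/n^2 + 1/n' \leq 2/n$ for large $n$; absorbing this factor-of-two into the definition of $f$, and handling the finitely many small values of $n$ by choosing $f(n,K)$ and $f(n)$ sufficiently large (the bounds of Lemmas~\ref{lemma:max_separation_hypercube} and \ref{lemma:cuboid_orderstat} are trivial when $f \geq 2$), yields the required $1-1/n$ probability. The argument has no real obstacle: the monotonicity of the relevant gap statistics under point insertion makes the Poissonization coupling entirely routine, and the factor $n'/n = 1-o(1)$ is absorbed into the $O^*$ notation without difficulty.
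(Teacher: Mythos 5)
Your approach is the same as the paper's: a Poissonization coupling driven by the monotonicity of the gap statistics $V^k$, $V^{k_1,k_2}(\delta)$, $\tV^k(\delta)$ under insertion of additional points. The only cosmetic difference is the Poisson rate: you choose $n' = n - \Theta(\sqrt{n\log n})$ so that $n'/n \to 1$, whereas the paper simply uses rate $n/2$. Both choices make $\{\tilde N \leq n\}$ hold with probability $\geq 1 - 1/n^2$, and both keep the relevant function $O^*(1/n^{1/K})$ (resp.\ $O^*(1/n)$) after re-expressing in $n$, so neither buys anything over the other.

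However, the final probability accounting is not correct as written. Your union bound yields failure probability $\leq 1/n^2 + 1/n'$, which exceeds $1/n$ since $n' < n$; you then assert this deficit can be fixed by ``absorbing the factor-of-two into the definition of $f$.'' That step is not a valid inference: the $1/n'$ term comes from the stated conclusion of Lemmas~\ref{lemma:max_separation_hypercube} and~\ref{lemma:cuboid_orderstat}, and merely enlarging $f$ does not automatically reduce that failure probability to below $1/n$ while preserving the $O^*$ rate --- you would need to go back inside those proofs to see how the probability depends on the choice of $f$, which you do not do. The clean repair, and what the paper does, is to observe that the \emph{internal} bounds in the proofs of Lemmas~\ref{lemma:max_separation_hypercube} and~\ref{lemma:cuboid_orderstat} actually establish failure probability $O(1/n^2)$ (e.g.\ $K^2/n^2$ in the former, $1/n^2$ in the latter); applied at rate $n'$ these are still $O(1/n^2)$, so the combined failure is $1/n^2 + O(1/(n')^2) = O(1/n^2) \leq 1/n$ for $n$ large, with small $n$ handled by taking $f$ large as you note. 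With that substitution the argument closes, but as presented the last step is a gap.
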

\begin{proof}
We use a standard coupling argument along with monotonicity of the considered random variables with respect to additional points.
Let $\cP$ be a rate $n/2$ Poisson process in $[0,1]^K$. The $N$ points are distributed i.i.d. uniform $[0,1]^K$ conditioned
on the value of $N$. Let $\Ev$ be the event $N\leq n$. Clearly, $\Ev$ occurs with probability at least $1-1/n^2$.
Let $\cU$ be the process consisting of $n$ points distributed i.i.d. in $[0,1]^K$. Conditioned
on $\Ev$, we can couple the process $\cP$ with the process $\cU$  such that for every
point in the Poisson process, there is an identically located point in $\cU$.

We now show how to establish Lemma \ref{lemma:cuboid_orderstat} for process $\cU$ using such a coupling.
Note that
$\max_{k \in \cK} \tV^k(\delta)$ is monotone non-increasing as we add more points. As such, an
upper bound on this quantity continues
to hold if more points are added. For instance, consider $\max_{k \in \cK} \tV^k(\delta)$. Let
$\Ev'$ be the event that
\begin{align*}
    \max_{k \in \cK} \tV^k(\delta) \leq f(n/2)/\delta^{K-1}\,
\end{align*}
under $\cP$. The proof of Lemma \ref{lemma:cuboid_orderstat} shows that $\prob(\Ev') \geq 1-(2/n)^2$.
By union bound on $\xoverline{\Ev}$ and $\xoverline{\Ev'}$, we deduce that $\prob(\Ev\cap \Ev') \geq 1-5/n^2 \geq 1-1/n$, for large enough $n$. We deduce,
using a coupling as described above,
that with probability at least $1-1/n$,  for process $\cU$ we have
\begin{align*}
    \max_{k \in \cK} \tV^k(\delta) \leq \tilde{f}(n)/\delta^{K-1}\, ,
\end{align*}
where $\tilde{f}(n)=f(n/2)$, for large enough $n$.
(For small values of $n$, we can simply choose $\tilde{f}(n)$ large enough to ensure that the bound holds
  with sufficient probability.)
  Thus we have shown that Lemma \ref{lemma:cuboid_orderstat} holds for process $\cU$.

Lemma \ref{lemma:max_separation_hypercube} can similarly be established for process $\cU$ using that
$$\max \left ( \max_{k \in \cK} V^k, \max_{(k_1, k_2)\in \cK^{(2)}} V^{k_1, k_2}(\delta) \right )$$
is monotone non-increasing as we add more points.
\end{proof}

\blankline

We now establish another result about $n$ points $(\eps_j)_{j=1}^n$ distributed i.i.d. uniformly in $[0,1]^K$. This result is key to the proof of the tightness of Theorem~\ref{thm:general_k_q} (Proposition~\ref{prop:lower_bd}).

For $\delta \in [0,1]$ let
\begin{align}\label{def:hR_k1k2}
  \hcR^{k_1, k_2}(\delta) = \{ x \in [0,1]^K: x^{k_1} \geq x^{k_2}-\delta\,; \, x^{k_1} \geq x^{k} \,\forall k \notin \{k_1,k_2\}, k \in \cK\}
\end{align}

Let $n^{k_1, k_2}(\delta)$ be the number of points in $\hcR^{k_1, k_2}(\delta)$.

\begin{lemma}
Let $\Ev_3$ be the event that there for all $k_1, k_2 \in \cK$ we have $n^{k_1,k_2} \geq 1+n/K$.
For $\delta = \delta(n) \geq 1/n^{0.49}$, we have that $\Ev_3$ occurs with high probability.
\label{lemma:Ev3}
\end{lemma}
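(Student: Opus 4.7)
My plan is to lower bound $\mathrm{vol}(\hcR^{k_1,k_2}(\delta))$ by $1/K + \Omega(\delta)$ (for $k_1\neq k_2$), so that $n^{k_1,k_2}(\delta)$ is a $\mathrm{Binomial}(n,p)$ random variable whose mean exceeds $n/K$ by $\Omega(n\delta)$. Since the hypothesis $\delta \geq 1/n^{0.49}$ gives $n\delta^2 \geq n^{0.02} \to \infty$, a standard Hoeffding bound then shows $n^{k_1,k_2}(\delta) \geq 1 + n/K$ for each such pair with high probability, and a union bound over the $O(K^2)$ ordered pairs finishes the argument. (For $k_1 = k_2$ the region reduces to $\cR^{k_1}$, whose expected count is exactly $n/K$; so the lemma must be read as restricting to $k_1\neq k_2$.)

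For the volume bound, I decompose $\hcR^{k_1,k_2}(\delta) = \cR^{k_1} \sqcup A$, where $\cR^{k_1}$ (cf.\ Eq.~\eqref{def:R_k}) has volume $1/K$, and $A$ is the ``extra'' piece $\{x \in [0,1]^K : x^{k_1} < x^{k_2} \leq x^{k_1}+\delta,\ x^{k_1} \geq x^k \text{ for } k \notin \{k_1,k_2\}\}$. Conditioning on $x^{k_1}=t$, the other $K-2$ coordinates in $\cK\setminus\{k_1,k_2\}$ each lie in $[0,t]$ and contribute $t^{K-2}$, while $x^{k_2}$ contributes $\min(\delta,1-t)$, giving
$$
\mathrm{vol}(A) \;=\; \int_0^1 \min(\delta, 1-t)\,t^{K-2}\,\mathrm{d}t \;\geq\; \frac{\delta\,(1-\delta)^{K-1}}{K-1},
$$
which is $\Omega(\delta)$ uniformly in $\delta \in [n^{-0.49}, 1/2]$; for $\delta > 1/2$ an $\Omega(1)$ lower bound follows by monotonicity of $\mathrm{vol}(A)$ in $\delta$ applied at $\delta = 1/2$. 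Hence $\mathrm{vol}(\hcR^{k_1,k_2}(\delta)) \geq 1/K + c_K \min(\delta,1/2)$ for a constant $c_K>0$ depending only on $K$.

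The concentration step is then routine. Hoeffding's inequality applied to $n^{k_1,k_2}(\delta)\sim\mathrm{Binomial}(n, \mathrm{vol}(\hcR^{k_1,k_2}(\delta)))$ gives
$$
\Pr\bigl[n^{k_1,k_2}(\delta) < \E[n^{k_1,k_2}(\delta)] - s\bigr] \;\leq\; \exp(-2s^2/n).
$$
Taking $s$ equal to half the $\Omega(n\delta)$ excess of the mean over $n/K$ (which is at least $1$ for large $n$ because $n\delta \geq n^{0.51}\to\infty$) pushes $\E - s$ above $1 + n/K$, and the bound becomes $\exp(-\Theta(n\delta^2)) \leq \exp(-\Theta(n^{0.02}))$. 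A final union bound over the fewer than $K^2$ pairs with $k_1\neq k_2$ yields $\Pr[\xoverline{\Ev_3}] = o(1)$, as required.

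The only mildly nontrivial step is the volume lower bound $\mathrm{vol}(A) = \Omega(\delta)$, which is where the crucial excess over $n/K$ originates; the exponent $0.49$ in the hypothesis is exactly dictated by the requirement $n\delta^2\to\infty$, which is what is needed for concentration to absorb the $\Omega(1)$ shift in the Binomial mean demanded by the ``$+1$'' on the right-hand side.
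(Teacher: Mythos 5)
Your proof is correct and takes essentially the same approach as the paper: lower bound the volume of $\hcR^{k_1,k_2}(\delta)$ by $1/K + \Omega(\delta)$, apply a binomial concentration inequality to each pair, and union bound. The paper computes the volume in closed form as $\tfrac{1}{K-1}\bigl(1-\tfrac{(1-\delta)^K}{K}\bigr) \geq \tfrac{1+\delta}{K}$ rather than via your decomposition $\hcR^{k_1,k_2}(\delta)=\cR^{k_1}\sqcup A$, and uses a Chernoff bound instead of Hoeffding; these are cosmetic differences, and your handling of the $\delta>1/2$ case and the ``$+1$'' margin is if anything slightly more careful than the paper's.
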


\begin{proof}
A short calculation shows that the volume of $\hcR^{k_1, k_2}(\delta)$
is
\begin{align}
  v=\;&\frac{1}{K-1}\left ( 1-\frac{(1-\delta)^K}{K}\right )\\
  \geq \; &\frac{1}{K}+\frac{\delta}{K-1}-\frac{\delta^2}{2}\\
  \geq \; & \frac{1+\delta}{K}
\end{align}
for $\delta \leq 2/(K(K-1))$. Now, the probability of $\eps_j \in \hcR^{k_1, k_2}(\delta)$ is exactly $v$. It follows that $n^{k_1, k_2}$ is distributed as $\textup{Binomial}(n,v)$. Notice $\E[n^{k_1, k_2}] = nv \geq n(1+\delta)/K$. We obtain
\begin{align}
\prob(n^{k_1, k_2} < 1+n/K) \leq \exp\!\big \{\!-\Omega\big(n \delta^2\big)\big\} = \exp\!\big\{\!- \Omega\big(n^{0.02}\big)\big\} = o(1)
\end{align}
using a standard Chernoff bound (e.g., see Durrett \cite{durrett2010probability}).  Using union bound over pairs $k_1, k_2$ we deduce that $\xoverline{\Ev_3}$ occurs with
probability $o(1)$, i.e., event $\Ev_3$ occurs with high probability.
\end{proof}

\setcounter{lemma}{0}
\setcounter{claim}{0}
\setcounter{remark}{0}

\section{Proof of Theorem~\ref{thm:general_k_q} upper bound}
\label{app:proofKQ}

We now present the complete proof of Theorem~\ref{thm:general_k_q}. We start by proving the lemmas stated in Section~\ref{sec:proofs}.


\begin{proof}[Proof of Lemma~\ref{lemma:type_adjacency_graph}]
Suppose not, and let $C$ be a connected component of $G(M)$ where all vertices are unmarked. Abusing notation, let $C_{\ml}$ (resp. $C_{\me}$) denote the types in $\typelab$ (resp. $\typeemp$) that are in $C$. By the definition of the marks, we know that all agents of types in $C_{\ml} \cup C_{\me}$  must be matched. Furthermore, by the definition  of $G(M)$, an agent whose type is in $C_{\ml}$ can only be matched to an agent whose type is in $C_{\me}$ and vice versa. Therefore, we must have that $\sum_{k \in C_{\ml}} n_k = \sum_{q \in C_{\me}} n_q$, which contradicts Assumption~\ref{assumpt:avoid_symmetry}.
\end{proof}

\tinyblank

\begin{proof}[Proof of Lemma~\ref{lemma:all_events_prob}]
By invoking Lemma~\ref{lemma:max_separation_hypercube}, Lemma~\ref{lemma:cuboid_orderstat} and Lemma~\ref{lemma:coupling}, for each $t$ we have that w.p. at least $1-\frac{2}{n_t}$ the event $\left( \Ev_1(t,\delta)\cap \Ev_2(t,\delta) \right)$ occurs. As the total number of types is upper bounded by $K+Q$, we apply an union bound to conclude that w.p. at least $1-\frac{2(K+Q)}{n^*}$, the event $\bigcap_{t\in  \typelab \cup \typeemp} \left( \Ev_1(t,\delta)\cap \Ev_2(t,\delta) \right)$ occurs.
\end{proof}
\tinyblank

Before moving on to the key lemmas, we introduce some definitions. Given a type $t\in \typelab \cup \typeemp$ we denote by $\nu(t)$ or simple $\nu$, the points in $t$. That is, for each agent $j$ of type $t$, we define $\nu_j$ as follows:

\[\nu_j = \left\{
  \begin{array}{l l}
   \epsilon_j & \quad \text{if $t\in \typeemp$}\\
   \eta_j & \quad \text{if $t\in \typelab$}
  \end{array} \right.\]

\blankline

For a fixed $t\in \typelab \cup \typeemp$ and $t' \in \vartheta(t)$,  let $\beta_{tt'}$ be defined as:

\[ \beta_{tt'} = \left\{
  \begin{array}{l l}
    - \alpha_{tt'} & \quad \text{if $t\in \typeemp$}\\
    \alpha_{tt'}-u(t,t') & \quad \text{if $t\in \typelab$}
  \end{array} \right.\]

Using the above notation, we can re-write the conditions in Proposition~\ref{prop:general_nec_cond} associated to a fixed type $t\in \typelab \cup \typeemp$ as follows:

\begin{enumerate}
\item[(ST)] For every $k, k' \in \vartheta(t)$:
$$ \min_{j \in t\cap M(k)}\nu^k_j - \nu^{k'}_j \geq \beta_{kt} - \beta_{k't} \geq  \max_{j \in t\cap M(k')}\nu^k_j - \nu^{k'}_j.$$
\item[(IM)]For every $k \in \vartheta(t)$:  $$\min_{j \in t\cap M(k)}\nu^k_j  \geq  \beta_{kt} \geq  \max_{j \in q \cap U}\nu^k_j.$$
\end{enumerate}

As all the $\nu$ variables are in $[0,1]$, then the above conditions can be interpreted as geometric conditions in the $[0,1]^{D(t)}$-hypercube.

The proof of Lemma~\ref{lemma:artificial_unmatched} is partitioned into two lemmas. Given a core solution $(M,\alpha)$, let the event $\EvD(t,\delta)$ be defined as:
\begin{align}\label{def:event_betas}
  \EvD(t,\delta)=\{\beta_{tz} \geq \delta \quad \forall z\in\vartheta(t)\}.
\end{align}
Lemma~\ref{lemma:unmatched_with_big_betas} below deals with $\EvD(t,\delta)$ whereas Lemma \ref{lemma:unmatched_with_small_betas} deals with
the complement $\xoverline{\EvD(t,\delta)}$. Together they imply Lemma~\ref{lemma:artificial_unmatched}.

\begin{lemma}\label{lemma:unmatched_with_big_betas}
Consider a core solution $(M,\alpha)$ and a type $t$. Let the events $\EvF_1(t)$, $\EvD(t,\delta)$ and $\Ev_2(t,\delta)$ be as defined by Eqs.~\eqref{def:event_marked}, ~\eqref{def:event_betas} and \eqref{def:event_cuboid} respectively. Under $\EvF_1(t) \cap \EvD(t,\delta) \cap \Ev_2(t,\delta)$, we  have $\max_{t'\in \vartheta(t)} \left( \alpha^{\max}_{t,t'} - \alpha^{\min}_{t,t'}\right) \leq f_2(n_t)/\delta^{D(t)-1}$, where $f_2$ is as defined in the statement of Lemma~\ref{lemma:cuboid_orderstat}.
\end{lemma}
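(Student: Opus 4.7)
My plan is to use the construction of $\tcR^k(t,\delta)$ to turn the (IM) stability inequality into a direct bound on the core range of the $\beta$'s. I will carry out the argument for $t \in \typeemp$; the $t \in \typelab$ case is symmetric. For $t \in \typeemp$ we have $\beta_{kt} = -\alpha_{kt}$, so bounding the core range of $\beta_{kt}$ is equivalent to bounding that of $\alpha_{kt}$. Fix $k \in \vartheta(t)$. The rewritten (IM) condition for type $t$ traps every core value of $\beta_{kt}$ inside the interval $[a^{\ast}, b^{\ast}]$, where
\begin{align*}
a^{\ast} := \max_{j \in t \cap U} \eps^k_j\,, \qquad b^{\ast} := \min_{j \in t \cap M(k)} \eps^k_j\,.
\end{align*}
Both endpoints lie in $[0,1]$ and are well-defined because $\EvF_1(t)$ forces $t \cap U \neq \emptyset$ and $k \in \vartheta(t)$ forces $t \cap M(k) \neq \emptyset$. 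Hence $\beta^{\max}_{kt} - \beta^{\min}_{kt} \leq b^{\ast} - a^{\ast}$, and the whole task reduces to showing $b^{\ast} - a^{\ast} \leq f_2(n_t)/\delta^{D(t)-1}$.

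The heart of the proof is a geometric observation: under $\EvD(t,\delta)$, any type-$t$ employer $j$ with $\eps_j \in \tcR^k(t,\delta)$ lies in $M(k) \cup U$ almost surely. Indeed, $\eps_j \in \tcR^k(t,\delta)$ means $\eps^{k'}_j \leq \delta$ for every $k' \neq k$; if $j$ were matched to some $k'' \in \vartheta(t) \setminus \{k\}$ the (IM) condition would require $\eps^{k''}_j \geq \beta_{k''t} \geq \delta$, so we would need $\eps^{k''}_j = \delta$ exactly, an event of probability zero under the uniform distribution. Matches to types outside $\vartheta(t)$ are ruled out by the definition of $\vartheta(t)$. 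This measure-zero argument is the only place where the hypothesis $\EvD(t,\delta)$ enters, and I expect it to be the main obstacle: it is where we have to join the geometric constraint $\eps^{k'}_j \leq \delta$ baked into $\tcR^k(t,\delta)$ with the stability lower bound $\beta_{k''t} \geq \delta$ furnished by $\EvD(t,\delta)$.

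From here the conclusion drops out quickly. I will observe that no value of $\tcV^k(t)$ lies in the open interval $(a^{\ast}, b^{\ast})$: such a value would come from an $\eps_j \in \tcR^k(t,\delta)$, which by the previous paragraph satisfies $j \in M(k) \cup U$, pushing its $k$-th coordinate into $[0, a^{\ast}] \cup [b^{\ast}, 1]$. Since $a^{\ast}, b^{\ast} \in [0,1]$, the augmentation points $\{0,1\}$ also lie outside $(a^{\ast}, b^{\ast})$, so the whole open interval sits inside a single gap between consecutive values of $\tcV^k(t) \cup \{0,1\}$. The corner cases --- $\tcV^k(t)$ entirely below $a^{\ast}$, entirely above $b^{\ast}$, or even empty --- are absorbed uniformly by this $\{0,1\}$ augmentation, which is exactly why the gap was defined that way. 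Hence $b^{\ast} - a^{\ast} \leq \tV^k(t,\delta) \leq f_2(n_t)/\delta^{D(t)-1}$ under $\Ev_2(t,\delta)$. Taking the maximum over $k \in \vartheta(t)$ and converting $\beta$ back to $\alpha$ yields the stated bound, completing the proof.
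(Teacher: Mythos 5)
Your proof is correct and follows essentially the same route as the paper's: you bound $\alpha^{\max}_{kt}-\alpha^{\min}_{kt}$ by $b^*-a^*$ via (IM), use $\EvD(t,\delta)$ together with the geometry of $\tcR^k(t,\delta)$ to force every point of $\tcR^k$ into $M(k)\cup U$, and then observe that the interval $(a^*,b^*)$ is contained in a single gap of $\tcV^k(t,\delta)\cup\{0,1\}$ so that $b^*-a^*\le\tV^k(t,\delta)$, after which $\Ev_2(t,\delta)$ finishes. Your write-up is in fact slightly more careful than the paper's --- it spells out the measure-zero resolution of $\eps^{k''}_j=\delta$ and explicitly notes that the $\{0,1\}$ augmentation absorbs the corner cases where $\tcR^k\cap M(k)$ or $\tcR^k\cap U$ is empty --- but these are presentational refinements of the same argument, not a different approach.
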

\proof Let $D=D(t)$. Fix $k\in \vartheta(t)$ and consider the orthotope $\tcR^k=\tcR^k(t,\delta)$ as defined by Eq.~\eqref{def:cuboid}. 
As $\EvD(t,\delta)$ occurs, $\beta_{tz} \geq 1/\delta$ for all $z\in\vartheta(t)$ and therefore $\tcR^k$ can only contain points corresponding to agents in $M(k)\cup U$. 
By using the notation introduced above, condition (IM) in Proposition~\ref{prop:general_nec_cond} implies: $\alpha^{\max}_{kt} - \alpha^{\min}_{kt} \leq \min_{j \in t\cap M(k)}\nu^k_j  - \max_{j \in q \cap U}\nu^k_j$.
However,
 $\min_{j \in t\cap M(k)}\nu^k_j  - \max_{j \in q \cap U}\nu^k_j \leq \min_{j \in \tcR^k \cap M(k)}\nu^k_j  - \max_{j \in \tcR^k \cap U}\nu^k_j \leq \tV^k(t,\delta)$, where $\tV^k(t,\delta)$ is as defined by Eq.~\eqref{eq:Vcubdef}. Therefore, for each $k\in\vartheta(t)$ we must have $\alpha^{\max}_{kt} - \alpha^{\min}_{kt} \leq \tV^k(t,\delta)$.
Finally, under $\Ev_2(t,\delta)$ we have $\max_{k \in \vartheta(t)}\tV^k(t,\delta) \leq  f_2(n_t)/\delta^{D-1}$, which completes the result.
\endproof

\tinyblank

\begin{lemma}\label{lemma:unmatched_with_small_betas}
Consider a core solution $(M,\alpha)$ and a type $t$. 
Let $\EvF_1(t)$ be the event defined in Eq.~\eqref{def:event_marked}. Let the event $\Ev_1(t,\delta)$ be as defined by Eq.~\eqref{def:event_triangle}, and let the event $\xoverline{\EvD(t,\delta)}$ denote the complement of the event defined by Eq.~\eqref{def:event_betas}. Under $\EvF_1(t) \cap \xoverline{\EvD(t,\delta)} \cap \Ev_1(t,\delta)$, we  have $\max_{t'\in \vartheta(t)} \left( \alpha^{\max}_{t,t'} - \alpha^{\min}_{t,t'}\right) \leq f_1(n_t,D(t))+ \delta$, where $f_1$ is as defined in the statement of Lemma~\ref{lemma:max_separation_hypercube}.
\end{lemma}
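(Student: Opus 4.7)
The plan is to exploit the specific core solution $\alpha^*$ furnished by $\overline{\EvD(t,\delta)}$ and combine two density arguments in the hypercube. Pick $z^* := \arg\min_{z \in \vartheta(t)} \beta^*_{tz}$; by $\overline{\EvD(t,\delta)}$ this satisfies $\beta^*_{tz^*} < \delta$, and is no larger than any $\beta^*_{tz}$. Since $\alpha$ and $\beta$ differ by a deterministic shift for each type pair, it suffices to bound $\max_{t' \in \vartheta(t)}(\beta^{\max}_{t,t'} - \beta^{\min}_{t,t'})$. The plan is to first bound the range of $\beta_{t,z^*}$ via the region $\cR^{z^*}(t)$ and the variable $V^{z^*}(t)$, then propagate to all $k \in \vartheta(t) \setminus \{z^*\}$ via condition (ST) and the region $\cR^{z^*,k}(t,\delta)$.

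For $\beta_{t,z^*}$, the key observation is that under $\alpha^*$, every $j \in \cR^{z^*}(t) \cap t$ is matched to $z^*$ if $\nu^{z^*}_j \geq \beta^*_{tz^*}$ and is unmatched otherwise. Indeed, $\nu^{z^*}_j \geq \nu^{k'}_j$ by the definition of $\cR^{z^*}(t)$ and $\beta^*_{tk'} \geq \beta^*_{tz^*}$ by the arg-min property, so $\nu^{z^*}_j - \beta^*_{tz^*} \geq \nu^{k'}_j - \beta^*_{tk'}$ for every $k' \in \vartheta(t)$, meaning $z^*$ dominates any other match option. Thus $\cV^{z^*}(t)$ splits cleanly at the threshold $\beta^*_{tz^*}$, and by $V^{z^*}(t) \leq f_1(n_t, D(t))$ from $\Ev_1(t,\delta)$, the smallest element of $\cV^{z^*}(t) \cup \{1\}$ exceeding $\beta^*_{tz^*}$ is at most $\beta^*_{tz^*} + f_1 < \delta + f_1$, and is realized by some $j \in M(z^*) \cap t$. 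Condition (IM) then forces $\beta^{\max}_{t,z^*} \leq \min_{j \in M(z^*) \cap t} \nu^{z^*}_j \leq \delta + f_1$, while $\beta^{\min}_{t,z^*} \geq 0$ trivially, giving $\beta^{\max}_{t,z^*} - \beta^{\min}_{t,z^*} \leq \delta + f_1$.

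For $k \in \vartheta(t) \setminus \{z^*\}$, I will show that under $\alpha^*$ every $j \in \cR^{z^*,k}(t,\delta) \cap t$ is matched to either $z^*$ or $k$: the condition $\nu^{z^*}_j \geq \delta > \beta^*_{tz^*}$ rules out unmatched, and for any $k'' \notin \{z^*,k\}$ the inequalities $\nu^{z^*}_j \geq \nu^{k''}_j$ (from $\cR^{z^*,k}(t,\delta)$) and $\beta^*_{tk''} \geq \beta^*_{tz^*}$ imply $\nu^{k''}_j - \beta^*_{tk''} \leq \nu^{z^*}_j - \beta^*_{tz^*}$, ruling out a match to $k''$ up to a measure-zero tie. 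By (ST) within $\alpha^*$, the values $\nu^k_j - \nu^{z^*}_j$ over this region split at the threshold $T := \beta^*_{tk} - \beta^*_{tz^*}$, with $M(k)$-values above $T$ and $M(z^*)$-values below. Applied to any core solution, (ST) bounds $\beta_{tk} - \beta_{tz^*} \in [L, U]$ with $L = \max_{j \in M(z^*) \cap t}(\nu^k_j - \nu^{z^*}_j)$ and $U = \min_{j \in M(k) \cap t}(\nu^k_j - \nu^{z^*}_j)$. Restricting the min and max to the subset $\cR^{z^*,k}(t,\delta)$ shrinks $U$ upward to some $U'$ and $L$ downward to some $L'$, and $U' - L'$ is precisely the single consecutive gap in $\cV^{z^*,k}(t,\delta)$ straddling $T$, hence at most $f_1$ by $\Ev_1(t,\delta)$. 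Thus $U - L \leq U' - L' \leq f_1$ and the range of $\beta_{tk} - \beta_{tz^*}$ is at most $f_1$.

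Combining the two bounds via the triangle inequality gives $\beta^{\max}_{t,k} - \beta^{\min}_{t,k} \leq (\delta + f_1) + f_1$ for every $k \in \vartheta(t)$; the extra factor of 2 is absorbed by replacing $f_1$ with a constant multiple of itself, which is still $O^*(1/n_t^{1/D(t)})$ and so does not change the definition of $\Ev_1(t,\delta)$. The main obstacles I anticipate are (i) the genericity claims that rule out measure-zero ties in matching (handled by continuity of the $\nu$-distributions), and (ii) the edge cases where $T$ is near the endpoints of $[-1, 1 - \delta]$ so that $M(k) \cap t \cap \cR^{z^*,k}(t,\delta)$ or its $M(z^*)$ analogue is empty; these are tamed by the endpoint conventions $\{-1 + \delta, 1\}$ built into the definition of $V^{k_1,k_2}$, which in such regimes force $\beta^*_{tk}$ to lie within $O(f_1)$ of an endpoint of $[0,1]$ and thereby constrain its range directly.
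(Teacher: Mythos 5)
Your proof follows essentially the same route as the paper's: you select the type $z^*$ minimizing $\beta^*_{tz}$ (the paper selects the type maximizing $\alpha_{\cdot,t}$, which is the same choice once you translate through the $\beta$/$\alpha$ sign convention), anchor the range of $\beta_{t,z^*}$ via the region $\cR^{z^*}$ and condition (IM), and then propagate to each other $k \in \vartheta(t)$ using condition (ST) and the region $\cR^{z^*,k}(t,\delta)$, so that the contribution from each pair of neighbors is controlled by a single consecutive gap in $\cV^{z^*,k}$. The only substantive difference is bookkeeping: you use the crude lower bound $\beta^{\min}_{t,z^*} \geq 0$ in the base step and so pick up the $\delta$ there, whereas the paper keeps the base step tight at $V^{z^*} \leq f_1$ (using $\beta^{\min}_{t,z^*} \geq \max_{j \in U \cap t} \nu^{z^*}_j$) and instead incurs the extra $\delta$ in the boundary case of the propagation step; the paper's own arithmetic also arrives at $2f_1 + \delta$ rather than the $f_1 + \delta$ stated in the lemma, so your acknowledged factor-of-2 slack is no worse than theirs and is harmless at the level of the $O^*$ bound. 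Your handling of the boundary case where $M(k) \cap \cR^{z^*,k}$ (or the $z^*$ analogue) is empty is only sketched, but the remedy you describe — reading the extremal value off the endpoint conventions $\{-1+\delta, 1\}$ built into $V^{k_1,k_2}$ — is precisely what the paper does, so the gap is expository rather than mathematical.
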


\proof
Suppose $t\in \typeemp$. Consider the unit hypercube in $\real^K$. For each $j\in \me$ such that $\tau(j)=t$, let $\epsilon_j \in [0,1]^K$ denote the vector of realizations of $\epsilon_j^k$ for every $k\in \typelab$.
By condition (ST) in Proposition~\ref{prop:general_nec_cond}, we can partition the $[0,1]^K$ hypercube into $|\vartheta(t)|$ regions such that all the points $\epsilon$ corresponding to agents matched to $k\in \vartheta(t)$ must be contained in the corresponding region. In particular, for each $k \in \vartheta(t)$, we define $Z(k) \subseteq [0,1]^K$ to be the region corresponding to type $k$, with $Z(k) = \cap_{k'\in \vartheta(t),~k' \neq k} \{ x \in [0,1]^K: x_k - x_{k'} \geq \alpha_{k't}-\alpha_{kt} \}$. Note that the region $Z(k)$ can only contain points corresponding to agents matched to $k$ or unmatched.

Let $k^*=\argmax_{k\in \typelab}\{\alpha_{tk^*}:~ k \in \vartheta(t)\}$, and let $\cR^{k^*}=\cR^{k^*}(t)$ be as defined by Eq.~\eqref{def:R_k}. By condition (ST) in Proposition~\ref{prop:general_nec_cond}, we have that for all $k\in \vartheta(t)$:

$$\min_{j \in t\cap M(k^*)}\epsilon^{k^*}_j - \epsilon^{k}_j \geq \alpha_{kt} - \alpha_{k^*t} \geq \max_{j \in q\cap M(k)}\epsilon^{k^*}_j - \epsilon^{k}_j.$$

As $\alpha_{kt} - \alpha_{k^*t}\leq 0$ for all $k \in \vartheta(t)$, we must have $\cR^{k^*} \subseteq Z(k^*)$. Let  $V^{k^*}=V^{k^*}(t)$ be as defined in Eq.~\eqref{eq:Vkdef}. We claim that $\alpha^{\max}_{k^*,t} - \alpha^{\min}_{k^*,t} \leq V^{k^*}$. To see why this holds, consider two separate cases. First, suppose there is at least one point corresponding to an unmatched agent in $\cR^{k^*}$. By condition (IM) in Proposition~\ref{prop:general_nec_cond}, we must have $\min_{j \in t\cap M(k^*)}\epsilon^{k^*}_j  \geq  - \alpha_{k^*t} \geq  \max_{j \in t \cap U}\epsilon^{k^*}_j$. Hence, $\alpha^{\max}_{k^*,t} - \alpha^{\min}_{k^*,t} \leq  \min_{j \in t\cap M(k^*)}\epsilon^{k^*}_j - \max_{j \in t \cap U}\epsilon^{k^*}_j \leq V^{k^*}$ as desired. For the second case, suppose that all points in $\cR^{k^*}$ correspond to matched agents. As $\max_{j \in t \cap U}\epsilon^{k^*}_j \geq 0$, we must have $\alpha^{\max}_{k^*,t} - \alpha^{\min}_{k^*,t} \leq  \min_{j \in t\cap M(k^*)}  \epsilon^{k^*}_j \leq \min_{j \in \cR^{k^*} } \epsilon^{k^*}_j \leq V^{k^*}$, as the difference between $0$ and the $\min_{j \in \cR^{k^*} } \epsilon^{k^*}_j$ is upper bounded by $V^{k^*}$. Therefore, we conclude $\alpha^{\max}_{k^*,t} - \alpha^{\min}_{k^*,t} \leq V^{k^*}$.

Next, we consider the bound for any arbitrary type $k \in \vartheta(t)$. By condition (ST) in Proposition~\ref{prop:general_nec_cond}, we have that for all $k\in \vartheta(t)$:

$$\alpha^{\max}_{k^*t} + \min_{j \in t\cap M(k^*)}\epsilon^{k^*}_j - \epsilon^{k}_j \geq \alpha_{kt} \geq \max_{j \in q\cap M(k)}\epsilon^{k^*}_j - \epsilon^{k}_j + \alpha^{\min}_{k^*t}.$$

Therefore,  $$\alpha^{\max}_{kt} -  \alpha^{\min}_{kt} \leq  \alpha^{\max}_{k^*t} - \alpha^{\min}_{k^*t} + \min_{j \in t\cap M(k^*)}\epsilon^{k^*}_j - \epsilon^{k}_j - \max_{j \in q\cap M(k)}\epsilon^{k^*}_j - \epsilon^{k}_j.$$

From our previous bound, we have that $\alpha^{\max}_{k^*t} - \alpha^{\min}_{k^*t} \leq V^{k^*}$. We now want an upper bound on  $\min_{j \in t\cap M(k^*)}\epsilon^{k^*}_j - \epsilon^{k}_j - \max_{j \in q\cap M(k)}\epsilon^{k^*}_j - \epsilon^{k}_j$.
Let $\cR^{k^*,k}=\cR^{k^*,k}(t,delta)$ and $V^{k^*, k}=V^{k^*,k}(t)$  be as defined by Eqs.~\eqref{def:R_k1_k2} and \eqref{eq:Vk1k2def}. We shall show that $\min_{j \in t\cap M(k^*)}\epsilon^{k^*}_j - \epsilon^{k}_j - \max_{j \in q\cap M(k)}\epsilon^{k^*}_j - \epsilon^{k}_j \leq V^{k^*, k} + \delta$. Recall that, under $\xoverline{\EvD(t,\delta)}$, we have $\delta \geq \alpha_{k^*t}$.

To that end, note that all points in $\cR^{k^*,k}$ must correspond to agents matched to $k^*$ or matched to $k$, as the region $\cR^{k^*,k}$  cannot contain unmatched without violating condition (IM).  Furthermore, as $\cR^{k^*} \subseteq Z(k^*)$ and $\cR^{k^*} \cap \cR^{k^*,k} \neq \emptyset$, at least one point in $\cR^{k^*,k}$ corresponds to an agent matched to $k^*$.
We now consider two separate cases, depending on whether $\cR^{k^*,k}$ contains a at least one point matched to $k$. First, suppose  $\cR^{k^*,k}$ contains a at least one point matched to $k$. Then, the bound trivially applies as $$\min_{j \in t\cap M(k^*)}\epsilon^{k^*}_j - \epsilon^{k}_j - \max_{j \in t\cap M(k)}\epsilon^{k^*}_j - \epsilon^{k}_j \leq \min_{j \in \cR^{k^*,k}\cap M(k^*)}\epsilon^{k^*}_j - \epsilon^{k}_j - \max_{j \in \cR^{k^*,k}\cap M(k)}\epsilon^{k^*}_j - \epsilon^{k}_j \leq V^{k^*, k}.$$

Otherwise, $\cR^{k^*,k}$ contains only points matched to $k^*$. In that case, $$\min_{j \in t\cap M(k^*)}\epsilon^{k^*}_j - \epsilon^{k}_j - \max_{j \in q\cap M(k)}\epsilon^{k^*}_j - \epsilon^{k}_j \leq \min_{j \in \cR^{k^*,k}}\epsilon^{k^*}_j - \epsilon^{k}_j - (1+\alpha_{k^*t}) \leq  V^{k^*, k} + \delta,$$

\noindent as desired. Overall, we have shown that:
$$\max_{k\in \vartheta(t)} \left( \alpha^{\max}_{tk} - \alpha^{\min}_{tk}\right) \leq \max\left( V^{k^*}, \max_{k\in \vartheta(t)}\left(V^{k^*}+ V^{k^*, k} + \delta \right) \right).$$

Under $\Ev_1(t,\delta)$ we have $\max\left( V^{k^*}, \max_{k} V^{k^*, k} \right)\leq f_1(n_t,K)$, implying $$\max\left( V^{k^*}, \max_{k\in \vartheta(t)}\left( V^{k^*}+ V^{k^*, k} + \delta \right)\right) \leq 2f_1(n_t,K)+\delta,$$

\noindent as desired.

To conclude, we briefly discuss the changes when $t\in \typelab$.  Consider the unit hypercube in $\real^Q$. For each $j\in \ml$ such that $\tau(j)=t$, let $\eta_j \in [0,1]^Q$ denote the vector of realizations of $\eta_j^q$ for every $q\in \typeemp$. For each $q \in \vartheta(t)$, we define $Z(q) \subseteq [0,1]^Q$ to be the region corresponding to type $q$. The main difference with the case in which $t\in \typeemp$ is that we need to define the regions $Z(q)$ in terms of the $\tilde{\eta}$ instead of $\eta$. To that end, let $\beta_{kq}=\alpha_{kq}-u(k,q)$.
By the (ST) condition in Proposition~\ref{prop:general_nec_cond}, we must have:
$$\min_{i \in t\cap M(q')} \tilde{\eta}^{q'}_i-\tilde{\eta}^{q}_i \geq \alpha_{tq'} - \alpha_{tq} \geq \max_{i \in t\cap M(q)} \tilde{\eta}^{q'}_i-\tilde{\eta}^{q}_i,$$
\noindent or equivalently,
$$\min_{i \in t\cap M(q')}{\eta}^{q'}_i-{\eta}^{q}_i \geq \beta_{tq'} - \beta_{tq} \geq \max_{i \in t\cap M(q)} {\eta}^{q'}_i-{\eta}^{q}_i.$$

By using $\beta$ instead of $\alpha$, the same geometric intuition as before applies. Then, we define $Z(q) = \cap_{q'\in \vartheta(t),~q' \neq q} \{ x \in [0,1]^Q: x_q - x_{q'} \geq \beta_{qt}-\beta_{q't} \}$.  To select $q^*$, we just select the one with smallest $\beta_{qt}$. The rest of the proof remains the same.
\endproof

\blankline

\begin{proof}[Proof of Lemma~\ref{lemma:artificial_unmatched}]
Lemma~\ref{lemma:artificial_unmatched} immediately follows from Lemmas~\ref{lemma:unmatched_with_big_betas} and \ref{lemma:unmatched_with_small_betas}.
\end{proof}

\blankline

\begin{proof}[Proof of Lemma~\ref{lemma:all_matched}]
Consider a core solution $(M,\alpha)$. Let $D=D(t)$. Fix a type $t^*\in \vartheta(t)$, and let $k^*=\argmax_{k \in \vartheta(t)} \beta_{tk}$. 
We start by showing that, under $\EvF_2(t) \cap \Ev_1(t,\delta)$, we must have $\alpha^{\max}_{tk^*} - \alpha^{\min}_{tk^*}  \leq \left(\alpha^{\max}_{t,t*}- \alpha^{\min}_{t,t*}\right) + f_1(n_t,D(t))+2\delta$. If $k^*=t^*$, the claim follows trivially. Otherwise, let $\cR^{k^*,t^*}=\cR^{k^*,t^*}(t,\delta)$ and $V^{k^*, t^*}=V^{k^*,t^*}(t,\delta)$  be as defined by Eqs.~\eqref{def:R_k1_k2} and \eqref{eq:Vk1k2def}. We show that $\min_{j \in t\cap M(k^*)}\nu^{k^*}_j - \nu^{t^*}_j - \max_{j \in q\cap M(t^*)}\nu^{k^*}_j - \nu^{t^*}_j \leq V^{k^*, t^*} + \delta$.

To that end, note that all points in $\cR^{k^*,t^*}$ must correspond to agents matched to $k^*$ or matched to $t^*$, as under $\EvF_2(t)$ all agents in $t$ are matched.  Furthermore, by the definition of $k^*$, $\cR^{k^*,t^*}$  must contain a point corresponding to an agent matched to $k^*$. We now consider two separate cases, depending on whether $\cR^{k^*,t^*}$ contains at least one point corresponding to an agent matched to $t^*$. First, suppose  $\cR^{k^*,t^*}$ contains at least one point corresponding to an agent matched to $t^*$. Then, $$\min_{j \in t\cap M(k^*)}\nu^{k^*}_j - \nu^{t^*}_j - \max_{j \in t\cap M(t^*)}\nu^{k^*}_j - \nu^{t^*}_j \leq \min_{j \in \cR^{k^*,t^*}\cap M(k^*)}\nu^{k^*}_j - \nu^{t^*}_j - \max_{j \in \cR^{k^*,t^*}\cap M(k)}\nu^{k^*}_j - \nu^{t^*}_j \leq V^{k^*, t^*}.$$

Otherwise, $\cR^{k^*,t^*}$ contains only points matched to $k^*$. In that case, $$\min_{j \in t\cap M(k^*)}\nu^{k^*}_j - \nu^{t^*}_j - \max_{j \in t\cap M(t^*)}\nu^{k^*}_j - \nu^{t^*}_j \leq \min_{j \in \cR^{k^*,t^*}}\nu^{k^*}_j - \nu^{t^*}_j - 1 \leq  V^{k^*, t^*} + \delta,$$

\noindent as desired. By condition (ST) in Proposition~\ref{prop:general_nec_cond}, we must have:

$$\alpha^{\max}_{tk^*} - \alpha^{\min}_{tk^*} \leq \alpha^{\max}_{tt^*} - \alpha^{\min}_{tt^*} +  \min_{j \in t\cap M(k^*)}\nu^{k^*}_j - \nu^{t^*}_j - \max_{j \in t\cap M(t^*)}\nu^{k^*}_j - \nu^{t^*}_j  \leq \alpha^{\max}_{tt^*} - \alpha^{\min}_{tt^*} +  V^{k^*, t^*}+ \delta$$

Next, consider an arbitrary $k \in \vartheta(t)$ with $k\neq t^*, k^*$.
By condition (ST) in Proposition~\ref{prop:general_nec_cond}, we must have:

$$\alpha^{\max}_{kt} -  \alpha^{\min}_{kt} \leq  \alpha^{\max}_{k^*t} - \alpha^{\min}_{k^*t} + \min_{j \in t\cap M(k^*)}\nu^{k^*}_j - \nu^{k}_j - \max_{j \in q\cap M(k)}\nu^{k^*}_j - \nu^{k}_j.$$

Let $\cR^{k^*,k}= \cR^{k^*,k}(t,\delta)$ and $V^{k^*, k}=V^{k^*, k}(t)$  be as defined by Eqs.~\eqref{def:R_k1_k2} and \eqref{eq:Vk1k2def}. By repeating the same arguments as before, we can show that $\min_{j \in t\cap M(k^*)}\nu^{k^*}_j - \nu^{k}_j - \max_{j \in q\cap M(k)}\nu^{k^*}_j - \nu^{k}_j \leq V^{k^*,k} + 2\delta $.
Hence,

$$\alpha^{\max}_{kt} -  \alpha^{\min}_{kt} \leq  \alpha^{\max}_{k^*t} - \alpha^{\min}_{k^*t} + V^{k^*,k} + \delta \leq \alpha^{\max}_{tt^*} - \alpha^{\min}_{tt^*} +  V^{k^*, t^*}+ V^{k^*,k} + 2\delta.$$

To conclude, note that
$$\max_{k \in \vartheta(t)} \left(\alpha^{\max}_{kt} -  \alpha^{\min}_{kt} \right) \leq \left(\alpha^{\max}_{tt^*} - \alpha^{\min}_{tt^*}\right) +  2\left(\max_{k \in \vartheta(t)} V^{k^*,k}\right) + 2\delta \leq \left(\alpha^{\max}_{tt^*} - \alpha^{\min}_{tt^*}\right) +  2f_1(n_t,D) + 2\delta, $$

\noindent where the last inequality follows from the fact that $\Ev_1(t,\delta)$ occurs by hypothesis.
\end{proof}

We can now proceed to the proof of the main theorem.

\begin{proof}[Proof of Theorem~\ref{thm:general_k_q}]
Let $n^*=\min_{t \in \typelab \cup \typeemp} n_t$. Under Assumption~\ref{assump:essential_assumpts}, we have that $n^*=\Theta(n)$. Let $\delta=1/{(n^*)}^{1/\max(K,Q)}$. For each $t\in \typelab \cup \typeemp$, let the events $\Ev_1(t,\delta)$ and  $\Ev_2(t,\delta)$ be as defined by Eqs.~\eqref{def:event_triangle} and \eqref{def:event_cuboid} respectively. We start by showing that, under $\bigcap_{t\in  \typelab \cup \typeemp} \left( \Ev_1(t,\delta)\cap \Ev_2(t,\delta) \right)$, we must have $\core \leq  O^*\left(\frac{1}{\sqrt[\max(K,Q)]{n}}\right)$.

To that end, construct the type-adjacency graph $G(M)$ as defined in Section~\ref{sec:proofs}. For each vertex $v$, we denote by $d(v)$ the minimum distance between $v$ and any marked vertex (that is, $d(v)=0$ if $v$ is marked, $d(v)=1$ if $v$ is unmarked and has a marked neighbour, and so on).  By Lemma~\ref{lemma:type_adjacency_graph}, we know that w.p.1, each connected component of $G(M)$ must contain at least one marked vertex, so $d(v)$ is well-defined for all $v$. Let $C_d=\{v\in C: d(v)=d\}$, that is $C_d$ is the set of vertices that are at distance $d$ from a marked vertex. We now show the result by induction in $d$. In particular, we show that, under $\bigcap_{t\in  \typelab \cup \typeemp} \left( \Ev_1(t,\delta)\cap \Ev_2(t,\delta) \right)$, for each $t\in C_d$ we have that $\max_{k \in \vartheta(t)} \left( \alpha^{\max}_{tk} -\alpha^{\min}_{tk} \right) \leq g_d(n^*,\max(K,Q))$ for some $g_d(n^*,\max(K,Q))=O^*(\frac{1}{n^{1/\max(K,Q)}})$.

%

We start by showing that the claim holds for the base case $d=0$. For each $t\in C_0$, either all agents in $t$ are unmatched or at least one agent is matched. In the former case, we can just ignore type $t$ as it will not contribute to the size of the core. In the latter, we note that w.p.1 the event $\EvF_1(t)$ as defined in the statement of Lemma~\ref{lemma:artificial_unmatched} must hold. Therefore, we can apply  Lemma~\ref{lemma:artificial_unmatched} to obtain $\max_{t'\in \vartheta(t)} \left( \alpha^{\max}_{t,t'} - \alpha^{\min}_{t,t'}\right) \leq \max\left(f_1(n_t,D(t))+ \delta, f_2(n_t)/\delta^{D(t)-1} \right)$, where $f_1$ and $f_2$ are as defined in the statement of the lemma.
To conclude the proof of the base case, let $$g_0(n^*,\max(K,Q))=\max\left(f_1(n^*,\max(K,Q))+\delta, f_2(n^*)/\delta^{\max(K,Q)-1}\right).$$ By the definition of $f_1$, $f_2$, and $\delta$, together with Assumption~\ref{assump:essential_assumpts}, we have $g_0(n^*,\max(K,Q))=O^*(\frac{1}{n^{1/\max(K,Q)}})$. Therefore, we have shown that, for every $t\in C_0$, 
we have $$\max_{k \in \vartheta(t)} \left( \alpha^{\max}_{tk} -\alpha^{\min}_{tk} \right) \leq g_0(n^*,\max(K,Q)).$$

Now suppose the result holds for all $d'\leq d$, we want to show it holds for $d+1$. Fix $t\in C_{d+1}$. By definition of $C_{d+1}$, we have that all agents in $t$ must be matched and therefore w.p.1, the event $\EvF_2(t)$ as defined in the statement of Lemma~\ref{lemma:all_matched} occurs. Moreover, there must exist a $t^*$ such that the vertex corresponding to $t^*$ is $C_{d}$ and $t^* \in \vartheta(t)$. By induction, 
we have that  $\left( \alpha^{\max}_{tt^*} -\alpha^{\min}_{tt^*} \right) \leq g_{d}(n^*,\max(K,Q))$ for $g_{d}(n^*,\max(K,Q))=O^*(\frac{1}{n^{1/\max(K,Q)}})$. Further, by Lemma~\ref{lemma:all_matched}, we know that under $\EvF_2(t) \cap \Ev_1(t,\delta)$, we have $\max_{t'\in \vartheta(t)} \left( \alpha^{\max}_{t,t'} - \alpha^{\min}_{t,t'}\right) \leq \left(\alpha^{\max}_{t,t*}- \alpha^{\min}_{t,t*}\right) + 2f_1(n_t,D(t))+2\delta$, where $\Ev_1(t,\delta)$ as defined by Eq.~\eqref{def:ev_1} and $f$ is as defined in the statement of Lemma~\ref{lemma:max_separation_hypercube}. 
Therefore, by letting $g_{d+1}(n^*,\max(K,Q))= g_{d}(n^*,\max(K,Q))+ 2f_1(n^*,\max(K,Q))+2\delta$, we have show that with probability at least $1-\frac{d+1}{n^*}$, we have  $\max_{k \in \vartheta(t)} \left( \alpha^{\max}_{tk} -\alpha^{\min}_{tk} \right) \leq g_{d+1}(n^*,\max(K,Q))$  with $g_{d+1}(n^*,\max(K,Q))=O^*(\frac{1}{n^{1/\max(K,Q)}})$.

Next, we note that $\max_v d(v)$ is upper bounded by $K+Q$. Hence, for every $t\in \typelab \cup \typeemp$, we have  $\max_{k \in \vartheta(t)} \left( \alpha^{\max}_{tk} -\alpha^{\min}_{tk} \right) \leq g_{K+Q}(n^*,\max(K,Q))$ for $g_{K+Q}(n^*,\max(K,Q))=O^*(\frac{1}{n^{1/\max(K,Q)}})$ and therefore

$$\max_{t\in \typelab \cup \typeemp} \max_{k \in \vartheta(t)} \left( \alpha^{\max}_{tk} -\alpha^{\min}_{tk} \right) \leq g_{K+Q}(n^*,\max(K,Q)).$$

To conclude, by Lemma~\ref{lemma:all_events_prob} we have that with probability at least $1-\frac{2(K+Q)}{n^*}$, the event $\bigcap_{t\in  \typelab \cup \typeemp} \left( \Ev_1(t,\delta)\cap \Ev_2(t,\delta) \right)$ occurs. In all other cases, we just use the fact that the size of the core is upper-bounded by a constant $C<\infty$. Hence,

\begin{eqnarray*}
\E[\core] & = & \dfrac{\sum_{(k,q)\in \typelab \times \typeemp} N(k,q) \left(\alpha^{\max}_{kq} - \alpha^{\min}_{kq}\right)}{\sum_{(k,q)\in \typelab \times \typeemp} N(k,q)} \\
& \leq & (K+Q) g_{K+Q}(n^*,\max(K,Q)) + C\frac{2(K+Q)}{n^*}\\
& = & O^* \left( \frac{1}{\sqrt[\max(K,Q)]{n} }\right)
\end{eqnarray*}
\noindent implying the main result for large enough $n$ (note that $\frac{2(K+Q)}{n^*}=\Theta^*(1/n)$).
\end{proof}

\setcounter{lemma}{0}
\setcounter{claim}{0}
\setcounter{remark}{0}

\section{Theorem \ref{thm:general_k_q} lower bound: Proof of Proposition \ref{prop:lower_bd}}
\label{app:lower_bound}



\begin{proof}[Proof of Proposition \ref{prop:lower_bd}]

\begin{claim}
  For this market, all labor agents of types different from $k_*$ will be matched in the core.
\end{claim}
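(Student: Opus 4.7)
The plan is to prove this by a simple swapping argument that exploits the sharp gap in match values between matches involving $k_*$ and matches not involving $k_*$. Since $\eps, \eta \in [0,1]$, any match $(i,j)$ with $\tau(i) = k_*$ has value $\Phi(i,j) = \eps^{k_*}_j + \eta^1_i \leq 2$, whereas any match $(i,j)$ with $\tau(i) \neq k_*$ has value $\Phi(i,j) = 3 + \eps^{\tau(i)}_j + \eta^1_i \geq 3$. Any core outcome must lie on a maximum weight matching $M$, so it suffices to show that under $M$ every worker of type $k \neq k_*$ is matched.

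I would argue this by contradiction. Suppose worker $i$ with $\tau(i) = k \neq k_*$ is unmatched under $M$. Note that since the number of employers is $(K-1)\tn+1$ and all match values are (almost surely) strictly positive, $M$ saturates every employer. A counting argument then forces at least two workers of type $k_*$ to be matched: there are only $(K-1)\tn$ workers of types different from $k_*$, of which at least one (namely $i$) is unmatched, so at most $(K-1)\tn - 1$ employers can be matched to workers of types $\neq k_*$, leaving at least two employers matched to type-$k_*$ workers. Pick such a pair $(i', j) \in M$ with $\tau(i') = k_*$.

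Consider the alternative matching $M' = (M \setminus \{(i',j)\}) \cup \{(i,j)\}$, which leaves $i'$ unmatched but matches $i$ to $j$. The change in total weight is
\[
\Phi(i,j) - \Phi(i',j) = \bigl(3 + \eps^k_j + \eta^1_i\bigr) - \bigl(\eps^{k_*}_j + \eta^1_{i'}\bigr) \geq 3 - 2 = 1 > 0,
\]
contradicting the fact that $M$ is a maximum weight matching. Hence no worker of type $k \neq k_*$ can be unmatched under $M$, and since the core matching is (almost surely) unique and equal to $M$, every such worker is matched in any core outcome.

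The argument is essentially routine; the only thing to be careful about is the counting step that guarantees the existence of a matched $k_*$-worker whose employer we can ``steal.'' This is immediate once one notes that $M$ saturates all employers, which itself follows from the strict positivity of all match values (holding almost surely). No heavy machinery is needed here.
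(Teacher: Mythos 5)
Your proof is correct and rests on the same core idea as the paper's: a swap argument exploiting the gap between $u(k_*,1)=0$ (match value $\le 2$) and $u(k,1)=3$ (match value $\ge 3$) for $k\ne k_*$. The paper's version is a bit more direct: it observes by simple pigeonhole (there are $(K-1)\tn+1$ employers but only $(K-1)\tn$ workers of types $\neq k_*$) that some employer $j$ is \emph{either} unmatched \emph{or} matched to a type-$k_*$ worker, and then swaps $j$ to $i$, which improves the weight in both cases. You instead first argue that $M$ saturates all employers (via positivity of all match values and the worker surplus), then count to find a type-$k_*$ pair $(i',j)$ to steal from. Both arguments are sound; yours just routes through the extra saturation step and also establishes ``at least two'' such employers when one would suffice. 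The paper's handling of the ``$j$ unmatched'' case in the same breath makes that extra step unnecessary, but the difference is cosmetic.
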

\begin{proof}
  We know that there is some employer $j$ who is either unmatched or matched to a labor agent $i'$
of type $k_*$. Consider any matching where a labor agent $i$ of type $k \neq k_*$ is unmatched.
Now $\Phi(i',j) = \eps_{i'}+\eta_j^{k_*} \leq 1+1 = 2$, whereas $\Phi(i,j) \geq u(k,1) = 3$, hence the weight of
such a matching can be increased by instead matching $j$ to $i$. It follows that in any maximum
weight matching, all labor agents with type different from $k_*$ are matched. Finally,
recall that every core outcome lives on a maximum weight matching, cf. Proposition~\ref{prop:general_nec_cond}
\end{proof}

Among agents $i \in k_*$, exactly one agent will be matched, specifically agent
$i_*=\arg \max_{i \in k_*} \eta_i$. Let $j_*$ be the agent matched to $i_*$ (break ties arbitrarily).
Recall that core solutions always live on a maximum weight matching, and in case of multiple
maximum weight matchings, the set of vectors $\alpha$ such that $(M, \alpha)$ is a core solution
is the same for any maximum weight matching $M$. This allows us to suppress the matching, and talk
about a vector $\alpha$ being in the core, cf. Proposition~\ref{prop:general_nec_cond}.
The (IM) condition in Proposition \ref{prop:general_nec_cond} for the pair of types $(k_*,1)$ are
\begin{align}
  \eta_{i_*} \geq \alpha_{k_*}\geq \max_{i \in k_* \backslash i_*} \eta_i \, ,
  \label{eq:alphakstar_bounds}
\end{align}
and the slack condition $\alpha_{k_*} \geq - \eps_{j_*}^{k_*}$.
The (IM) conditions for types $(k,1)$ for $k \neq k_*$ are
\begin{align}
  3+\min_{i \in k} \eta_{i_*} \geq \alpha_{k}\geq -\min_{j \in M(k)} \eps_{j}^k \, .
  \label{eq:alphak_bounds}
\end{align}

The stability conditions are
\begin{align}
  \min_{j \in M(k)} \epsilon^k_j - \epsilon^{k'}_j \geq \alpha_{k'} - \alpha_{k} \geq
  \max_{j \in M(k')} \epsilon^k_j - \epsilon^{k'}_j \, ,
\label{eq:ex_stab}
\end{align}
for all $k \neq k'$. It is easy to see that Eq.~\eqref{eq:ex_stab} with $k' = k_*$ implies
$\alpha_k \leq 2$ for all $k \neq k_*$. Hence, the upper bound in Eq.~\eqref{eq:alphak_bounds}
is slack. Consider the left stability inequality with $k'=k_*$. As Eq.\eqref{eq:alphakstar_bounds} implies $\alpha_{k_*} \geq 0$, we must have
\begin{align*}
  \alpha_k \geq - \min_{j \in M(k)} \eps_j^k - \eps_j^{k_*} \geq - \min_{j \in M(k)} \eps_j^k
\end{align*}
implying that the lower bound in \eqref{eq:alphak_bounds} is also slack. Thus a vector
$\alpha$ is in the core if and only if conditions \eqref{eq:alphakstar_bounds}
and \eqref{eq:ex_stab} are satisfied.

For simplicity, we start with the special case $K=2$, with the two types of labor
being $k$ and $k_*$.
To obtain intuition, notice that from
Eq.~\eqref{eq:alphakstar_bounds} we have
$\alpha_{k_*} \xrightarrow{\tn \rightarrow \infty} 1$ in probability,
and when we use this together with Eq.~\eqref{eq:ex_stab}
we obtain $\alpha_{k} \xrightarrow{\tn \rightarrow \infty} 2$ in probability. 
(We do not use these limits in our formal analysis below.)
Hence, we focus on Eq.~\eqref{eq:alphakstar_bounds} together with
\begin{align}
  \min_{j \neq j_*} \epsilon^k_j - \epsilon^{k_*}_j \geq \alpha_{k_*} - \alpha_{k} \geq
   \epsilon^k_{j_*} - \epsilon^{k_*}_{j_*} \, .
\label{eq:ex_stab2}
\end{align}
where $j_* = \arg \min_j \epsilon^k_j - \epsilon^{k_*}_j $. Now,
$X_j = \epsilon^k_j - \epsilon^{k_*}_j$ are distributed i.i.d. with density
$U[0,1] * U[-1, 0]$ which is
\begin{align}
  f(x)= \left \{
  \begin{array}{ll}
    1-|x| & \text{ for } |x| \leq 1\\
    0 & \text{otherwise.}
  \end{array} \right .
\end{align}
(Note that if we draw $\tn+1$ samples from this distribution, it
is not hard to see that $\E [(\min_{j \neq j_*} X_j) - X_{j_*}] = \Theta(1/\sqrt{\tn})$.)
We lower bound the expected core size as follows: Let $X_j = \epsilon^k_j - \epsilon^{k_*}_j$.
Let $\Ev$
be the event that exactly one of the $X_j$'s is in $[-1, -1+1/\sqrt{\tn}]$, and no $X_j$
is in $[-1+1/\sqrt{\tn}, -1 + 2/\sqrt{\tn}]$. Under $f$ the probability of being in
$[-1, -1+1/\sqrt{\tn}]$ is $1/(2\tn)$ and the probability of being in
$[-1+1/\sqrt{\tn}, -1 + 2/\sqrt{\tn}]$ is $3/(2\tn)$. It follows that
\begin{align}
  \prob(\Ev) = \binom{\tn+1}{1,0,\tn} \frac{1}{2n} \big ( 1-2/\tn\big )^{\tn} = \Omega(1) \, .
\label{eq:ex_Ev_prob_bd}
\end{align}

\begin{claim}\label{cl:LB_k=2}
Consider the case $K=2$. Under event $\Ev$, for any core vector $(\alpha_{k_*}, \alpha_k)$, for any
value $\alpha_k' \in [\alpha_{k_*} + 1-2/\sqrt{\tn}, \alpha_{k_*} + 1-1/\sqrt{\tn}]$, we have
that vector $(\alpha_{k_*}, \alpha_k')$ is in the core. In particular, $\core = \Omega(1/\sqrt{\tn})$.
\end{claim}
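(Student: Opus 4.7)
The plan is to directly verify that, for any $\alpha_k'$ in the stated interval, the perturbed vector $(\alpha_{k_*}, \alpha_k')$ still satisfies the core characterization derived earlier, namely the (IM) bounds in Eq.~\eqref{eq:alphakstar_bounds} together with the stability condition in Eq.~\eqref{eq:ex_stab2} (recall that the remaining (IM) conditions for type $k$ were already shown to be implied, so no separate check is needed). Since Eq.~\eqref{eq:alphakstar_bounds} constrains only $\alpha_{k_*}$, it is untouched by replacing $\alpha_k$ with $\alpha_k'$. The real content lies in re-verifying \eqref{eq:ex_stab2} for the perturbed vector, where the structural consequences of event $\Ev$ will be used.

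Under $\Ev$, exactly one of the $X_j = \eps_j^k - \eps_j^{k_*}$ lies in $[-1,-1+1/\sqrt{\tn}]$, and since $j_*$, the employer matched to $i_*$ in the (essentially unique) maximum weight matching, satisfies $j_* = \arg\min_j X_j$ (pairing $i_*$ with the employer of largest $\eps_j^{k_*}-\eps_j^k$ is exactly what maximizes match value, given that all type-$k$ workers are matched), this unique small-$X_j$ index must be $j_*$. Moreover, since $\Ev$ forbids any $X_j$ in $[-1+1/\sqrt{\tn},-1+2/\sqrt{\tn}]$, we obtain the two-sided bound $X_{j_*}\leq -1+1/\sqrt{\tn}$ and $\min_{j\neq j_*}X_j \geq -1+2/\sqrt{\tn}$.

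For any $\alpha_k' \in [\alpha_{k_*}+1-2/\sqrt{\tn},\,\alpha_{k_*}+1-1/\sqrt{\tn}]$, the quantity $\alpha_{k_*}-\alpha_k'$ lies in $[-1+1/\sqrt{\tn},\,-1+2/\sqrt{\tn}]$, so by the bounds above it satisfies $X_{j_*}\leq \alpha_{k_*}-\alpha_k'\leq \min_{j\neq j_*}X_j$, which is exactly \eqref{eq:ex_stab2}. Therefore $(\alpha_{k_*},\alpha_k')$ is in the core, establishing the first part of the claim.

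For the conclusion $\core=\Omega(1/\sqrt{\tn})$, the first part shows that on $\Ev$ the feasible range of $\alpha_k$ has length at least $1/\sqrt{\tn}$, hence $\alpha_k^{\max}-\alpha_k^{\min}\geq 1/\sqrt{\tn}$. Since $\tn$ of the $\tn+1$ matches in $M$ involve type-$k$ workers, the definition of $\core$ gives $\core \geq \frac{\tn}{\tn+1}\cdot \frac{1}{\sqrt{\tn}} = \Omega(1/\sqrt{\tn})$ on $\Ev$; combined with $\prob(\Ev)=\Omega(1)$ from \eqref{eq:ex_Ev_prob_bd}, this yields $\E[\core]=\Omega(1/\sqrt{\tn})$. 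The only subtle point in the whole argument is the identification $j_*=\arg\min_j X_j$ noted above; everything else is mechanical sandwiching.
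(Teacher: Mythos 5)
Your proof is correct and follows the same approach as the paper's: checking Eq.~\eqref{eq:ex_stab2} via the two-sided sandwich $X_{j_*} \leq -1+1/\sqrt{\tn} \leq \alpha_{k_*} - \alpha_k' \leq -1+2/\sqrt{\tn} \leq \min_{j\neq j_*} X_j$ that event $\Ev$ guarantees, then weighting by the $\tn$ type-$k$ matches to get $\core = \Omega(1/\sqrt{\tn})$. You fill in two details the paper leaves implicit — the justification that $j_* = \arg\min_j X_j$ via the maximum-weight property, and the explicit averaging over matches — but the argument is the same.
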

\begin{proof}
  Eq.~\eqref{eq:ex_stab2} is satisfied since event $\Ev$ holds.
Since, $\alpha_{k'}$
can take any value in an interval of length $1/\sqrt{\tn}$, it follows that
$\core = \Omega(1/\sqrt{\tn})$
under $\Ev$.
\end{proof}
Combining Claim~\ref{cl:LB_k=2} with Eq.~\eqref{eq:ex_Ev_prob_bd}, we obtain that $\E[\core] = \Omega(1/\sqrt{\tn})$ as desired.

We now construct a similar argument for $K>2$, with $\cK = \typelab \backslash \{k_*\}$ being the other labor types, all of whose agents are matched.
It again turns out that
$\alpha_{k_*} \xrightarrow{\tn \rightarrow \infty} 1$ in probability,
and when we use this together with Eq.~\eqref{eq:ex_stab}
we obtain $\alpha_{k} \xrightarrow{\tn \rightarrow \infty} 2 \,~ \forall k \in \cK$ in probability (but we do not prove or use these limits). 

Considering only the dimensions in $\cK$ (recall $|\cK|=K-1$ here) of each $\eps_j$, let $\Ev_3$ be the event as defined in Lemma \ref{lemma:Ev3} with $\delta=1/n^{0.51}$.

\begin{claim}\label{cl:alphabkuk_diff_bd}
 Let $\uk= \arg \min_{k \in \cK}\alpha_k$ and let $\bk=\arg \max_{k \in \cK}\alpha_k$. Under event $\Ev_3$, we claim that
  \begin{align}
    \alpha_{\bk}-\alpha_{\uk} \leq \delta
    \label{eq:alphabkuk_diff_bd}
  \end{align}
\end{claim}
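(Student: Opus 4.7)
The plan is to argue by contradiction: assume $\alpha_{\bk}-\alpha_{\uk} > \delta$ and derive a contradiction with $\Ev_3$. Recall that in this construction every employer is matched, with exactly one distinguished employer $j_*$ matched to the one matched worker of type $k_*$, while the remaining $(K-1)\tn$ employers are partitioned $\tn$ per type across $\cK$; in particular $|M(\bk) \cup \{j_*\}| \leq \tn+1$. On the other hand, applying Lemma~\ref{lemma:Ev3} to the $(K-1)$-dimensional marginal of $(\eps_j)_{j \in \me}$ on the coordinates in $\cK$ (with $|\me|=(K-1)\tn+1$ points), the region $\hcR^{\bk,\uk}(\delta)$ contains at least $1+|\me|/(K-1) = \tn+1+1/(K-1)$ employers, which, being an integer count, is at least $\tn+2$.

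Next I would show that, under the contradiction hypothesis, every employer $j$ with $\eps_j \in \hcR^{\bk,\uk}(\delta)$ must lie in $M(\bk) \cup \{j_*\}$, producing the contradiction. This needs two exclusions. First, $j \notin M(\uk)$: any $j \in M(\uk)$ satisfies $\eps^{\uk}_j - \eps^{\bk}_j \geq \alpha_{\bk} - \alpha_{\uk} > \delta$ by the stability condition~\eqref{eq:ex_stab} with $k=\uk$, $k'=\bk$, whereas the very definition of $\hcR^{\bk,\uk}(\delta)$ forces $\eps^{\uk}_j - \eps^{\bk}_j \leq \delta$. Second, $j \notin M(k)$ for any $k \in \cK \setminus \{\uk,\bk\}$: stability~\eqref{eq:ex_stab} with $k'=\bk$ yields $\eps^k_j - \eps^{\bk}_j \geq \alpha_{\bk} - \alpha_k \geq 0$, using $\alpha_k \leq \alpha_{\bk}$ by definition of $\bk$; but $\eps_j \in \hcR^{\bk,\uk}(\delta)$ gives $\eps^{\bk}_j \geq \eps^k_j$, and with probability one no two coordinates of the uniform vector $\eps_j$ coincide, so in fact $\eps^{\bk}_j > \eps^k_j$, contradicting $\eps^k_j \geq \eps^{\bk}_j$. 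Collecting, any such $j$ must belong to $M(\bk) \cup \{j_*\}$, and the cardinality bound $\tn+2 \leq \tn+1$ is the desired contradiction.

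The main conceptual point, and the reason the definition of $\hcR^{\bk,\uk}(\delta)$ was tailored as it was, is that stability forces each point of this region into $M(\bk) \cup \{j_*\}$ as soon as the gap $\alpha_{\bk}-\alpha_{\uk}$ exceeds $\delta$. The only technical obstacle is counting: the bound $1+|\me|/(K-1) = \tn+1+1/(K-1)$ from Lemma~\ref{lemma:Ev3} strictly exceeds the pigeonhole budget $\tn+1$ thanks to the extra ``$+1$'' baked into the lemma, and this is precisely the slack that closes the argument.
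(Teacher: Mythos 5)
Your proof is correct and takes essentially the same approach as the paper: contradiction via a counting argument showing that under $\alpha_{\bk}-\alpha_{\uk}>\delta$, stability forces every $j$ with $\eps_j^{\cK}\in\hcR^{\bk,\uk}(\delta)$ into $M(\bk)\cup\{j_*\}$, which is incompatible with the lower bound on $n^{\bk,\uk}$ guaranteed by $\Ev_3$. You simply spell out the exclusions $j\notin M(\uk)$ and $j\notin M(k)$ for $k\in\cK\setminus\{\uk,\bk\}$ in more detail than the paper, which asserts the containment in one sentence.
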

\proof
  From Proposition \ref{prop:general_nec_cond}, we know that the set of core $\alpha$'s is a linear polytope,
  hence it is immediate to see that the set of $\theta$'s is an interval.
  Let $\uk= \arg \min_{k \in \cK}\alpha_k$ and let $\bk=\arg \max_{k \in \cK}\alpha_k$. Under
  event $\Ev_3$, we claim that 
    $\alpha_{\bk}-\alpha_{\uk} \leq \delta$. 
  We can argue this
  by contradiction: Suppose $\alpha_{\bk}-\alpha_{\uk} > \delta$. One can see that all $j$'s
  such that $\eps_j^\cK \in \hcR^{\bk,\uk}(\delta)$, cf. \eqref{def:hR_k1k2}, will be matched to type $\bk$, with the possible exception
  of $j_*$. Thus, under $\Ev_3$, the number of employers matched to type $\bk$ is bounded below by
  \begin{align*}
    n^{\bk,\uk}-1 \geq ((K-1)\tn+1)/(K-1) > \tn \, ,
  \end{align*}
  which is a contradiction, implying \eqref{eq:alphabkuk_diff_bd}.
\endproof

The above claim bounds the maximum difference between $\alpha$'s corresponding to any pair of types in $\cK$. Intuitively, note that all types in $\cK$ have the same $u$ and therefore the same distribution for the $\theta$ variables of the agents in such type. Moreover, all types in  $\cK$ have the same number of agents. Hence, one would expect the $\alpha$'s to be equal. While true in the limit, for each finite $n$ we need to account for the stochastic fluctuations in given realization. Therefore, we can show that no pair of $\alpha$'s in $\cK$ can differ by more than $\delta$. The next claim follows immediately from Claim \ref{cl:alphabkuk_diff_bd}.

\begin{claim}\label{cl:maxkeps_diff_bd}
 Let $k \in \cK$ be an arbitrary type. Under event $\Ev_3$, we claim that
  \begin{align}
    \max_{k' \in \cK} \left( \epsilon^{k'}_j - \epsilon^k_j \right) \leq \delta \quad \forall j \in M(k)
    \label{eq:alphabkuk_diff_bd2}
  \end{align}
\end{claim}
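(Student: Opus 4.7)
My plan is to derive this directly from the stability conditions combined with the previous claim, with no additional probabilistic argument required beyond what is already encoded in $\Ev_3$.

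First, I would fix $k \in \cK$ and $j \in M(k)$, and consider any $k' \in \cK$. If $k' = k$ the inequality is trivial since the difference is $0$. Otherwise, I invoke the left inequality of the stability condition in Eq.~\eqref{eq:ex_stab} (applied with the roles of $k$ and $k'$ appropriately identified) which states
\[
\epsilon^{k'}_j - \epsilon^{k}_j \;\leq\; \alpha_{k} - \alpha_{k'} \qquad \text{for every } j \in M(k).
\]
This is simply a restatement of the non-blocking condition: agent $j$, currently matched to a worker of type $k$ at price $\alpha_k$, must not prefer being matched to a worker of type $k'$ at price $\alpha_{k'}$.

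Next, since both $k$ and $k'$ lie in $\cK$, Claim \ref{cl:alphabkuk_diff_bd} gives under $\Ev_3$ that
\[
\alpha_k - \alpha_{k'} \;\leq\; \alpha_{\bk} - \alpha_{\uk} \;\leq\; \delta.
\]
Chaining this with the displayed stability bound yields $\epsilon^{k'}_j - \epsilon^{k}_j \leq \delta$ for every $k' \in \cK$, and taking the maximum over $k' \in \cK$ delivers \eqref{eq:alphabkuk_diff_bd2}.

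There is no real obstacle here; the claim is essentially an unpacking of Claim \ref{cl:alphabkuk_diff_bd} through the stability inequality. The only point to be a bit careful about is the direction of the stability inequality and the fact that the max is taken over $k' \in \cK$ (not over all of $\typelab$), which is precisely the regime in which Claim \ref{cl:alphabkuk_diff_bd} applies. This is also why the type $k_*$ is excluded from the max: the corresponding bound $\alpha_k - \alpha_{k_*}$ cannot be controlled by $\delta$ (indeed, it is $\Theta(1)$ in the limit, as discussed in the preceding paragraph), so restricting to $\cK$ is essential.
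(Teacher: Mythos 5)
Your proof is correct and follows essentially the same route as the paper: apply the relevant inequality in Eq.~\eqref{eq:ex_stab} for $j\in M(k)$ to get $\epsilon^{k'}_j - \epsilon^k_j \leq \alpha_k - \alpha_{k'}$, then bound the right side by $\delta$ via Claim~\ref{cl:alphabkuk_diff_bd}. Your added remarks on the direction of the inequality and the restriction of the maximum to $\cK$ are accurate but do not change the argument.
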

\proof
By Claim~\ref{cl:alphabkuk_diff_bd}, we have that under $\Ev_3$, $|\alpha_k - \alpha_{k'}| \leq \delta$ for all $k'\in cK$. By the stability condition in Eq.~\eqref{eq:ex_stab}, we have $$\delta \geq \alpha_k - \alpha_{k'} \geq \epsilon^{k'}_j - \epsilon^k_j \quad \forall j \in M(k),~ \forall k' \in \cK.$$ \noindent Therefore, for every $j \in M(k)$ we must have $\delta \geq \max_{k'\in\cK}\epsilon^{k'}_j - \epsilon^k_j$ as desired.
\endproof

Next, we focus on the stability conditions involving type $k_*$. For each $k\in \cK$, the stability condition is:
\begin{align}
   \epsilon^{k_*}_{j_*}-\epsilon^k_{j_*}\geq \alpha_{k} - \alpha_{k_*} \geq
   \max_{j \in M(k)} \epsilon^{k_*}_j -\epsilon^k_j, .
\label{eq:ex_stabK}
\end{align}

\noindent where $j_*$ is the employer matched to $i_*$. For each $j\in \me$, let $X_j$ be defined as $X_j = (\max_{k \in \cK}\epsilon^k_j) - \epsilon^{k_*}_j$. The $X_j$ are distributed i.i.d. with cumulative distribution $F(-1+\theta) = \theta^K/K$ for $\theta \in [0,1]$ (we will not be concerned with the cumulative for positive values). Let $\Ev$
be the event that exactly one of the $X_j$'s is in $[-1, -1+1/\tn^{1/K}]$ (this will be $X_{j_*}$), and no $X_j$
is in $[-1+1/\tn^{1/K}, -1 + 2/\tn^{1/K}]$. Under cumulative $F$, the probability of being in
$[-1, -1+1/\tn^{1/K}]$ is $1/(K\tn)$ and the probability of being in
$[-1+1/\tn^{1/K}, -1 + 2/\tn^{1/K}]$ is $2^K/(K\tn)$. It follows that
\begin{align}
  \prob(\Ev) = \binom{\tn+1}{1,0,\tn} \frac{1}{Kn} \big ( 1-2^K/(K\tn)\big )^{\tn} = \Omega(1) \, .
\label{eq:ex_Ev_prob_bd_K}
\end{align}

Clearly, under $\Ev$, we must have $j_*=\arg \min_{j \in \me} X_j$. Keeping this in mind, we state and prove our last claim.

\begin{claim}
Suppose $\Ev_3 \cap \Ev$ occurs.
Take any core vector $(\alpha_{k_*}, (\alpha_k)_{k \in \cK})$.
Then
\begin{align}
  \{\theta \in \reals: (\alpha_{k_*}, (\alpha_k+\theta)_{k \in \cK}) \textup{ is in the core}\}
\end{align}
is an interval of length at least
$1/n^{1/K}-2\delta = \Omega(1/n^{1/K})$.  In particular,  $\core \geq \Omega(1/n^{1/K})$.
\end{claim}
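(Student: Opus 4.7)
The plan is to reduce the admissible set of $\theta$ to the intersection of intervals coming from the stability conditions between $k_*$ and each $k \in \cK$, then lower bound this intersection's length. A uniform shift of $(\alpha_k)_{k \in \cK}$ by $\theta$ preserves $\alpha_{k_*}$ and every difference $\alpha_k - \alpha_{k'}$ within $\cK$, so the (IM) conditions for $k_*$ and the pairwise stability conditions within $\cK$ are automatically satisfied. As argued earlier in the proof of Proposition~\ref{prop:lower_bd}, the (IM) upper bound for each $k \in \cK$ is slack by a constant, and the (IM) lower bound is implied by stability with $k_*$ together with $\alpha_{k_*} \geq 0$. Hence only the stability inequalities \eqref{eq:ex_stabK} are binding on the shift.

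Setting $A_k := \alpha_k - \alpha_{k_*}$, $U'_k := \epsilon^{k_*}_{j_*} - \epsilon^k_{j_*}$, and $L'_k := \max_{j \in M(k)}(\epsilon^{k_*}_j - \epsilon^k_j)$, the admissible set becomes $\bigcap_{k \in \cK}[L'_k - A_k,\; U'_k - A_k]$, whose length is at least $\min_k U'_k - \max_k L'_k - (\max_k A_k - \min_k A_k)$. Under $\Ev$, the bound $X_{j_*} \leq -1 + 1/\tn^{1/K}$ gives $U'_k \geq \epsilon^{k_*}_{j_*} - \max_{k' \in \cK}\epsilon^{k'}_{j_*} = -X_{j_*} \geq 1 - 1/\tn^{1/K}$ for every $k \in \cK$. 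For any $j \in M(k) \subseteq \me \setminus \{j_*\}$, $\Ev$ yields $X_j \geq -1 + 2/\tn^{1/K}$, and Claim~\ref{cl:maxkeps_diff_bd} then gives $\epsilon^{k_*}_j - \epsilon^k_j \leq -X_j + \delta \leq 1 - 2/\tn^{1/K} + \delta$, so $\max_k L'_k \leq 1 - 2/\tn^{1/K} + \delta$. Finally, Claim~\ref{cl:alphabkuk_diff_bd} directly yields $\max_k A_k - \min_k A_k \leq \delta$. Combining these three bounds, the interval has length at least $1/\tn^{1/K} - 2\delta \geq 1/n^{1/K} - 2\delta$, as claimed.

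The main obstacle is that the intersection of $K-1$ intervals could a priori be much shorter than any single interval if their shifts $A_k$ were spread out; Claim~\ref{cl:alphabkuk_diff_bd} is essential for handling this, clustering the $A_k$'s within $\delta$ so the intervals are nearly aligned. For the conclusion, since every $\theta$ in the interval produces a core vector, $\alpha^{\max}_k - \alpha^{\min}_k \geq 1/n^{1/K} - 2\delta$ for every $k \in \cK$. Because $(K-1)\tn$ of the $(K-1)\tn + 1$ matches involve a $\cK$-type labor agent, $\core \geq \frac{(K-1)\tn}{(K-1)\tn + 1}\bigl(1/n^{1/K} - 2\delta\bigr) = \Omega(1/n^{1/K})$ for $K \geq 2$ fixed and $\delta = n^{-0.51}$, since $\delta = o(1/n^{1/K})$.
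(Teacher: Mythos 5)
Your proof is correct and follows essentially the same approach as the paper: you reduce the admissible shifts $\theta$ to the constraints in Eq.~\eqref{eq:ex_stabK}, then bound the resulting interval's length using the same three ingredients (the gap condition from $\Ev$, Claim~\ref{cl:maxkeps_diff_bd}, and Claim~\ref{cl:alphabkuk_diff_bd}). The only presentational difference is that the paper exhibits explicit endpoints $\utheta,\btheta$ of a sub-interval of admissible $\theta$'s, whereas you directly characterize the admissible set as $\bigcap_{k\in\cK}[L'_k-A_k,\,U'_k-A_k]$ and lower-bound its length; these are equivalent, and your version also spells out the final step from per-type variation to the averaged quantity $\core$ more explicitly than the paper does.
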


\begin{proof}
 Define
  \begin{align*}
    \utheta &= 1- 2/\tn^{1/K} +\delta - \alpha_\uk+\alpha_{k_*}\\
    \btheta &= 1- 1/\tn^{1/K} - \alpha_\bk+\alpha_{k_*}
  \end{align*}
  We claim that, under $\Ev_3 \cap \Ev$, we have that
  $\alpha(\theta)=(\alpha_{k_*}, (\alpha_k+\theta)_{k \in \cK}) \textup{ is in the core}$ for
  all $\theta \in [\utheta, \btheta]$. To establish this, we need to show that
  conditions \eqref{eq:alphakstar_bounds} and \eqref{eq:ex_stab} are satisfied. Since
  $\alpha$ belongs to the core, we immediately infer that \eqref{eq:alphakstar_bounds}
  holds, and also \eqref{eq:ex_stab} when $k_* \notin \{k, k'\}$ by definition of $\alpha(\theta)$.
  That leaves us  with \eqref{eq:ex_stabK}.
  Now, for any $k \in \cK$ and $\theta \in [\utheta, \btheta]$ we have
  \begin{align*}
    \alpha_k(\theta) &= \alpha_k + \theta \leq \alpha_{\bk}+ \theta \leq \alpha_{\bk}+\btheta
    = 1-1/\tn^{1/K}+\alpha_{k_*} \leq \eps_{j_*}^{k_*}-\eps_{j_*}^k +\alpha_{k_*} \, ,
  \end{align*}
  where used the definitions of $\bk$ and $\btheta$, and the fact that $\Ev$ occurs (so
  $1-1/\tn^{1/K} \leq \eps_{j_*}^{k_*}-\eps_{j_*}^k$).
  This establishes the left inequality in \eqref{eq:ex_stabK}.
  Similarly, for any $k \in \cK$ we have
  \begin{align*}
    \alpha_k(\theta)\, &= \alpha_k + \theta \geq \alpha_{\uk}+ \theta \geq \alpha_{\uk}+\utheta
    = 1-2/\tn^{1/K}+ \delta + \alpha_{k_*} \\
    &\geq \epsilon^{k_*}_j -\max_{k' \in \cK}\eps_{j}^{k'} + \delta + \alpha_{k_*} \geq   \epsilon^{k_*}_j - \epsilon^{k}_j + \alpha_{k_*} \qquad \forall j \in M(k) \, ,
  \end{align*}
  where used the definitions of $\uk$ and $\utheta$ for the first two inqualities, and the fact that $\Ev$ occurs
  (so $1-2/\tn^{1/K}\geq \epsilon^{k_*}_j -\max_{k' \in \cK}\eps_{j}^{k'},~\forall j \in M(k)$). 
  Finally, the last inequality follows from $\Ev_3$ and 
  Claim~\ref{cl:maxkeps_diff_bd} (which implies 
  $-\max_{k' \in \cK}\eps_{j}^{k'} + \delta \geq - \epsilon^{k}_j$ for $j \in M(k)$).
  This establishes the right inequality in \eqref{eq:ex_stabK}. Thus, we have
  shown that $\alpha(\theta)$ is in the core for all $\theta \in [\utheta, \btheta]$.
  The length of this interval is $1/\tn^{1/K} - (\alpha_{\bk}-\alpha_{\uk}) - \delta \geq 1/\tn^{1/K}-2\delta = \Omega(1/\tn^{1/K})$,
  using \eqref{eq:alphabkuk_diff_bd}. Therefore, that $\E[\core]=\Omega(1/\tn^{1/K})$ under $\Ev_3 \cap \Ev$.
\end{proof}
Using Lemma \ref{lemma:Ev3} and Eq.~\eqref{eq:ex_Ev_prob_bd_K} we have
\begin{align*}
  \prob(\Ev_3 \cap \Ev) = \Omega(1) \, .
\end{align*}
 Combining with the claim above we obtain that $\E[\core] = \Omega(1/n^{1/K})$.
\end{proof}

\setcounter{lemma}{0}
\setcounter{claim}{0}
\setcounter{remark}{0}

\section{Proof of Theorem~\ref{thm:K-1-emp}}
\label{app:K-1}

We start by restating Theorem~\ref{thm:K-1-emp} and discussing the structure of the proof.

\begin{theorem*}[Restatement of Theorem~\ref{thm:K-1-emp}]
Consider the setting in which $K\geq2$, $Q=1$, $n_{\me} > n_{\ml}$ and let $m = n_{\me} - n_{\ml}$. In addition, suppose that $u(k,1)\geq 0$ for all $k\in \typelab$. Then, under Assumption~\ref{assump:essential_assumpts}, we have $\E[\core] \leq O^*\left(\frac{1}{n^{\frac{1}{K}}m^{\frac{K-1}{K}}}\right)$.
\end{theorem*}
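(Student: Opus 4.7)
Since $Q=1$ and $n_\me>n_\ml$, the unique maximum weight matching $M$ matches all workers, leaving exactly $m$ unmatched employers $U$. Writing $\alpha_k := \alpha_{k,1}$, my plan, following the outline just after the theorem statement, is to combine two bounds: an \emph{absolute variation} bound on a carefully chosen $\alpha_{k_*}$, derived from condition (IM) together with the unmatched employers, and a \emph{relative variation} bound on $\alpha_k-\alpha_{k_*}$ for $k\neq k_*$, derived from condition (ST). The key improvement over Theorem~\ref{thm:general_k_q} is that the $m$ unmatched employers must pack into a small low-productivity box in $[0,1]^K$, which permits choosing $\delta$ in Lemma~\ref{lemma:cuboid_orderstat} at the scale $\Theta^*((m/n)^{1/K})$ rather than the smaller scale $\Theta^*(1/n^{1/K})$ used in the proof of Theorem~\ref{thm:general_k_q}.

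\textbf{Absolute variation.} Every $j\in U$ must satisfy $\epsilon_j^{k'}+\alpha_{k'}\leq 0$ for all $k'$, so $U\subseteq B(\alpha):=\prod_{k'}[0,-\alpha_{k'}]$. Standard concentration combined with $|U|=m$ forces $\prod_{k'}(-\alpha_{k'})=\Theta^*(m/n)$, so by pigeonhole at least one type $k_*$ satisfies $-\alpha_{k_*}\geq (m/n)^{1/K}$. Setting $\delta=\Theta^*((m/n)^{1/K})$ and restricting attention to the region $\tcR^{k_*}(\delta)$ of Eq.~\eqref{def:cuboid}, Lemma~\ref{lemma:cuboid_orderstat} applied to the single employer type gives that on the high-probability event $\Ev_2(1,\delta)$ of Eq.~\eqref{def:event_cuboid}, the maximum gap between consecutive values of $\epsilon^{k_*}$ restricted to $\tcR^{k_*}(\delta)$ is at most
\[
f_2(n_\me)/\delta^{K-1}\;=\;O^*\!\left(\tfrac{1}{n^{1/K}\,m^{(K-1)/K}}\right).
\]
Adapting the argument in the proof of Lemma~\ref{lemma:unmatched_with_big_betas}, I argue that (after handling a ``contamination'' by points matched to other types $k''$ with small $-\alpha_{k''}$, cf.\ the obstacle paragraph below) points in $\tcR^{k_*}(\delta)$ with $\epsilon^{k_*}\leq-\alpha_{k_*}$ are unmatched while those with $\epsilon^{k_*}>-\alpha_{k_*}$ are matched to $k_*$; condition (IM) then yields $\alpha_{k_*}^{\max}-\alpha_{k_*}^{\min}\leq O^*(1/(n^{1/K}m^{(K-1)/K}))$.

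\textbf{Relative variation and combining.} For each $k\neq k_*$, condition (ST) bounds the variation of $\alpha_k-\alpha_{k_*}$ across the core by $\min_{j\in M(k_*)}(\epsilon_j^{k_*}-\epsilon_j^k)-\max_{j\in M(k)}(\epsilon_j^{k_*}-\epsilon_j^k)$. Under Assumption~\ref{assump:essential_assumpts}, $|M(k_*)|,|M(k)|\geq Cn$, and after marginalizing the remaining $K-2$ coordinates the density of each of $M(k_*)$ and $M(k)$ in the 1D projection onto $\epsilon^{k_*}-\epsilon^k$ is $\Theta(n)$ near the boundary value $\alpha_k-\alpha_{k_*}$. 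A standard order-statistic argument then gives a gap of $O^*(1/n)$. Combined with the absolute bound,
\[
\alpha_k^{\max}-\alpha_k^{\min}\;\leq\;O^*\!\left(\tfrac{1}{n^{1/K}m^{(K-1)/K}}\right)+O^*(1/n)\;=\;O^*\!\left(\tfrac{1}{n^{1/K}m^{(K-1)/K}}\right),
\]
using $m\leq n$. Averaging over $k$ weighted by $N(k,1)=n_k$ yields $\E[\core]\leq O^*(1/(n^{1/K}m^{(K-1)/K}))$.

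\textbf{Main obstacle.} The principal obstacle is verifying the clean separation of matched-to-$k_*$ and unmatched points within $\tcR^{k_*}(\delta)$ when some $-\alpha_{k''}$ for $k''\neq k_*$ is smaller than $\delta$, which the product constraint $\prod_{k'}(-\alpha_{k'})=\Theta^*(m/n)$ does not preclude on its own. I plan to handle this via a case analysis leveraging Assumption~\ref{assump:essential_assumpts}: the market-clearing condition $N(k'',1)=n_{k''}\geq Cn$ constrains the geometry of each matched-to-$k''$ region in $[0,1]^K$ to have area $\Theta(1)$, and thereby prevents any $-\alpha_{k''}$ from being much smaller than $(m/n)^{1/K}$ (up to polylog factors). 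A secondary issue is that $\alpha$ itself varies across the core; I plan to resolve this by choosing $\delta$ deterministically and taking union bounds as in Lemma~\ref{lemma:all_events_prob} so that the required events hold simultaneously with high probability for all relevant core vectors.
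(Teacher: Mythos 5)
The high-level decomposition (absolute bound on one $\alpha$-coordinate from condition (IM) and unmatched employers, relative bound on $\alpha_k-\alpha_{k_*}$ of order $O^*(1/n)$ from condition (ST), combine) matches the paper's strategy. The relative-variation sketch is basically right and corresponds to Lemmas~\ref{lem:no_unmatched_agent_below_delta_gen_k}--\ref{lem:bound_on_the_differences_gen_k} in the appendix. However, the absolute-variation argument has a genuine gap that is not repairable in the form you propose.

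The gap is exactly the obstacle you flag, and your proposed fix does not hold. You want to show that every coordinate $-\alpha_{k''}$ of the unmatched box $B(\alpha)=\prod_{k'}[0,-\alpha_{k'}]$ is $\Omega^*((m/n)^{1/K})$, so that $\tcR^{k_*}(\delta)$ with $\delta\sim(m/n)^{1/K}$ is free of points matched to types other than $k_*$. This is false under Assumption~\ref{assump:essential_assumpts}. Take $K=2$ and let $-\alpha_2=v$ be a constant close to (but bounded away from) $1$; then $-\alpha_1\approx (m/n)/v=\Theta(m/n)$, and the number of employers matched to type~$2$ is roughly $n_\me(1-v)^2/2$, which is at least $Cn$ as long as $C$ is small enough. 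So the elongated box is perfectly compatible with the assumptions, $-\alpha_1\ll(m/n)^{1/2}$, and your choice of $\delta$ gives contamination: points matched to type~$1$ with $\epsilon^1\in[-\alpha_1,\delta]$ sit inside $\tcR^{k_*}(\delta)$, breaking the step where you read off $Z_{k_*}-U_{k_*}$ from $\tV^{k_*}(\delta)$ and Lemma~\ref{lemma:cuboid_orderstat}. (In such an elongated configuration the true gap is actually even smaller --- but your argument doesn't see that.)

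The paper handles this without ever controlling individual coordinates of $B(\alpha)$. It observes that $\min_k(Z_k-U_k)$ is bounded by the expansion distance $D(R)$ of the tight orthotope $R$ around the $m$ unmatched points, and then proves (Lemma~\ref{lem:bound_on_D(R)_gen_k}) that conditioned on $V(R)\geq m/(4n_\me)$ and $|R|=m$, the swept volume when expanding $R$ by $\theta$ is at least $K\,V(R)^{(K-1)/K}\theta$ --- the AM--GM/isoperimetric fact that the cube minimizes surface area among origin-cornered orthotopes of given volume. This gives the $1/(n^{1/K}m^{(K-1)/K})$ rate uniformly over all orthotope shapes, including the elongated ones that defeat the fixed-$\delta$ slab argument. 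A union bound over a $1/n$-grid of candidate orthotopes handles the fact that $R$ is random, and Lemma~\ref{lem:bounds_on_size_of_orthotope} rules out small-volume $R$; the regime $m\lesssim\log n$ is treated separately via $\E[\min_k Z_k]$ (Lemma~\ref{lem:bd_of_diff_alpha_small_m_gen_k}). To fix your proof you would essentially need to replace the fixed-slab argument by this shape-adaptive orthotope argument; the pigeonhole bound on a single coordinate is not a substitute for it.
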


Note that Assumption \ref{assumpt:avoid_symmetry} is automatically satisfied under the hypotheses of the theorem.

The idea of the proof is as follows. First, we show a bound on the expectation of $\min_{k\in \typelab}\{ \alpha^{\max}_k - \alpha^{\min}_k\}$. In particular, we show that $\E\left[ \min_{k\in \typelab} \{ \alpha^{\max}_k - \alpha^{\min}_k\} \right] = O^*\left(\frac{1}{n^{\frac{1}{K}}m^{\frac{K-1}{K}}}\right)$. To do so, we note that by condition (IM) in Proposition~\ref{prop:general_nec_cond},  we must have 
$$\min_k \left(\alpha^{\max}_k - \alpha^{\min}_k \right) 
\leq  \min_{k\in \typelab} \left(\min_{j \in M(k)} \epsilon^k_j -\max_{j \in U} \epsilon^k_j \right).$$ 
\noindent Then, we consider two separate cases to prove the result, depending the size of the imbalance. When $m\leq \log(n)$, the result is shown in Lemma~\ref{lem:bd_of_diff_alpha_small_m_gen_k}, which we prove via an upper bound on  $\min_{k\in \typelab} \left(\min_{j \in M(k)} \epsilon^k_j\right)$. On the contrary, when $m\geq \log(n)$, the result is shown in Lemma~\ref{lem:bd_min_diff_alpha_gen_k}. The proof of Lemma~\ref{lem:bd_min_diff_alpha_gen_k} relies mainly on the geometry of a core solution which (roughly) allows us to first control the largest of the $\alpha$'s (all $\alpha$'s must be negative i in the core since some employers are unmatched, and we control, roughly, the least negative $\alpha$).

Next, we then show that, for every pair of types $k,q \in \typelab$ we must have
$$
\E\left[\min_{j \in M(k)} (\epsilon^k_j - \epsilon^q_j) - \max_{j \in M(q)} (\epsilon^k_j - \epsilon^q_j)\right] = O^*\left(\frac{1}{n}\right)\, .
$$
By Condition (ST) in Proposition~\ref{prop:general_nec_cond}, this implies that  for fixed $k,q\in \typelab$, the expected maximum variation in $\alpha_k - \alpha_q$ in the core is bounded by $O^*\left(\frac{1}{n}\right)$.

Finally, we use the bounds in the first two steps to argue that, for every type $k\in\mathcal{T}_{\ml} $, $$\E\left[\alpha^{\max}_k - \alpha^{\min}_k\}\right] =  O^*\left( \frac{1}{n^{\frac{1}{K}}m^{\frac{K-1}{K}}} \right),$$ which implies $\E[\core] = O^*\left( \frac{1}{n^{\frac{1}{K}}m^{\frac{K-1}{K}}} \right)$. This is done in the proof of Theorem~\ref{thm:K-1-emp}.

\blankline

We now show our bound on $\E\left[ \min_k\{ \alpha^{\max}_k - \alpha^{\min}_k\} \right]$. To that end, let $Z_k = \min_{j \in M(k)} \epsilon^k_j$ and $U_k=  \max_{j \in U} \epsilon^k_j$. By Condition (IM) in Proposition~\ref{prop:general_nec_cond}, $\E\left[ \min_k |\alpha^{\max}_k - \alpha^{\min}_k|\right] \leq  \E[\min_k\{Z_k- U_k\}]$, and therefore we will focus on bounding $\E[\min_k\{Z_k- U_k\}]$. As a reminder, we have defined $m= n_{\me}-n_{\ml}$ and  $\delta_n = \frac{\log(n)}{n^{\frac{1}{K}}m^{\frac{K-1}{K}}}$. Also, in all lemmas we are working under the assumptions of the theorem, that is, $K\geq 2, Q=1$, $n_{\me}>n_{\ml}$ and Assumption~\ref{assump:essential_assumpts}.

\blankline

\begin{lemma}\label{lem:bd_of_diff_alpha_small_m_gen_k} Suppose $m \leq 6K\log(n_{\me})$. Then, there exists a constant $C_3=C_3(K)<\infty$ such that $ \E\left[\min_k\{\alpha^{\max}_k- \alpha^{\min}_k\}\right]\leq 2C_3\frac{\log(n)}{n^{\frac{1}{K}}m^{\frac{K-1}{K}}}$.
\end{lemma}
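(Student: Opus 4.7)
The plan is to combine condition (IM) of Proposition~\ref{prop:general_nec_cond} with a short order-statistic calculation. Since $Q=1$ and $m>0$, (IM) reads $Z_k \geq -\alpha_k \geq U_k \geq 0$ for every $k \in \typelab$, where $Z_k = \min_{j \in M(k)} \epsilon_j^k$ and $U_k = \max_{j \in U} \epsilon_j^k$. Hence $-\alpha_k \in [U_k,Z_k]$ for any core solution, so $\alpha_k^{\max} - \alpha_k^{\min} \leq Z_k - U_k \leq Z_k$ for every $k$, giving $\min_k(\alpha_k^{\max}-\alpha_k^{\min}) \leq \min_k Z_k$. It therefore suffices to bound $\E[\min_k Z_k]$.

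For each employer $j$ set $M_j = \max_{k \in \typelab} \epsilon_j^k$; these are i.i.d.\ with CDF $F(x)=x^K$ on $[0,1]$. If $j$ is matched to type $k(j)$, then $Z_{k(j)} \leq \epsilon_j^{k(j)} \leq M_j$. By pigeonhole, among the $m+1$ employers with smallest $M_j$ values at least one is matched, so
\[
\min_{k \in \typelab} Z_k \;\leq\; M_{(m+1)},
\]
the $(m+1)$-th order statistic of $(M_j)_{j \in \me}$. Writing $M_{(m+1)} = W^{1/K}$ with $W$ the $(m+1)$-th order statistic of $n_\me$ i.i.d.\ uniform $[0,1]$ variables, and applying Jensen's inequality to the concave map $x\mapsto x^{1/K}$, gives
\[
\E[M_{(m+1)}] \;\leq\; (\E[W])^{1/K} \;=\; \left(\frac{m+1}{n_\me+1}\right)^{1/K}.
\]

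Assumption~\ref{assump:essential_assumpts} yields $n_\me = \Theta(n)$, so $\E[\min_k Z_k] \leq C_K (m/n)^{1/K}$ for a constant $C_K$ depending only on $K$ and the constant in that assumption. To match the claimed form I rewrite
\[
(m/n)^{1/K} \;=\; \frac{m}{\log n}\cdot\frac{\log n}{n^{1/K}\, m^{(K-1)/K}},
\]
and invoke the hypothesis $m \leq 6K\log(n_\me) \leq 6K\log(n)$ to bound $m/\log n$ by a constant depending only on $K$. Choosing $C_3(K)$ large enough then yields $\E[\min_k(\alpha_k^{\max}-\alpha_k^{\min})] \leq 2 C_3\,\log(n)/(n^{1/K} m^{(K-1)/K})$. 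The combinatorial observation (at least one of the $m+1$ smallest $M_j$'s is matched) and the Jensen bound are both elementary; the only minor subtlety is the final algebraic rearrangement that produces the particular denominator $n^{1/K} m^{(K-1)/K}$ and the reason this approach is tight only in the regime where $m/\log n = O(1)$.
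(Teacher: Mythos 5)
Your proof is correct and takes a genuinely different route from the paper's. Both begin the same way, using (IM) to reduce the problem to bounding $\E[\min_k Z_k]$ with $Z_k = \min_{j\in M(k)}\epsilon_j^k$. From there the paper argues via the corner of the hypercube: the event $\min_k Z_k \geq C_3\delta_n$ forces every employer $j$ with $\eps_j \in [0,C_3\delta_n]^K$ to be unmatched, and since only $m$ employers are unmatched, a Chernoff bound on the binomial count of points in that cube (which needs $m \leq 6K\log n$ to push the expected count above $m$ with large margin) shows this event has probability $O(1/n)$; the $\log$ factor in the final bound is an artifact of needing $n(C_3\delta_n)^K \gg \log n$ for the tail bound. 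You instead observe that for any matched $j$, $\min_k Z_k \leq \epsilon_j^{k(j)} \leq M_j := \max_k \epsilon_j^k$, and that among the $m+1$ employers with the smallest $M_j$ at least one must be matched, so $\min_k Z_k \leq M_{(m+1)}$; you then compute $\E[M_{(m+1)}]$ exactly via the $\mathrm{Beta}(m+1,n_\me-m)$ mean after the change of variables $W_j = M_j^K$, plus Jensen for $x\mapsto x^{1/K}$. Your intermediate bound $\E[\min_k Z_k] \leq ((m+1)/(n_\me+1))^{1/K} = O((m/n)^{1/K})$ is cleaner, has no spurious $\log$ factor, and holds without the hypothesis $m\le 6K\log n_\me$ --- that hypothesis is needed only to rewrite $(m/n)^{1/K}$ in the form $\log n/(n^{1/K}m^{(K-1)/K})$ appearing in the lemma statement, and your rearrangement of the exponents there is correct. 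This is a nice simplification; the only thing it does not give you (nor does the paper's version of this lemma) is the large-$m$ regime, which the paper handles separately in Lemma~\ref{lem:bd_min_diff_alpha_gen_k}.
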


\proof
Let $Z_k = \min_{j \in M(k)} \epsilon^k_j$, $U_k=  \max_{j \in U} \epsilon^k_j$ and $\delta_n=\frac{\log(n)}{n^{\frac{1}{K}}m^{\frac{K-1}{K}}}$.
By Condition (IM) in Proposition~\ref{prop:general_nec_cond}, $\E[\min_k\{\alpha^{\max}_k- \alpha^{\min}_k\}] \leq \E[\min_k\{Z_k- U_k\}]$. As $U_k$ is a non-negative  random variable, we have $\E[\min_k\{Z_k- U_k\}] \leq \E[\min_k\{Z_k\}]$.  Therefore,
$$\E[\min_k\{\alpha^{\max}_k- \alpha^{\min}_k\}] \leq  \E\left[\min_k\{Z_k- U_k\}\right] \leq   \E\left[\min_k\{Z_k\}\right] \leq  C_3\delta_n  + \Pr\left(\min_{k}Z_k \geq C_3\delta_n\right),$$
using $Z_k \leq 1$.

To finish the proof, it suffices to show that $\Pr\left(\min_{k}Z_k \geq C_3\delta_n\right) \leq C_3\delta_n$.
Hence, our next step is to bound $\Pr\left(\min_{k}Z_k \geq C_3\delta_n\right)$. Now $\min_{k}Z_k \geq C_3\delta_n$ implies that all $j$ such that $\eps_j \in \left[0, C_3\delta_n\right]^K$ are unmatched. But there are only $m$ unmatched employers. It follows that
\begin{eqnarray*}
\Pr\left(\min_{k}Z_k \geq C_3\delta_n\right)& \leq & \Pr\left(\textrm{at most $m$ points in the hypercube } \left[0, C_3\delta_n\right]^K\right)\\
\end{eqnarray*}

Let $X\sim\textrm{Bin}\left(n_{\me}, \left( C_3\delta_n\right)^K\right)$ be defined as the number of points, out of $n_{\me}$ in total, that fall in the hypercube $\left[0, C_3\delta_n \right]^K$. By assumption, $m \leq 6K\log(n) \Rightarrow (C_3\delta_n)^K \geq (C_3\log n/m)^K/n \geq 2^K \log n^K/n \geq 4(\log n)^2/n$ defining $C_3 \geq 12K$ and using $K\geq 2$. Further using $n \leq 2n_{\me}$ we
obtain $\E[X] = n_{\me} \left( C_3\delta_n\right)^K \geq (n/2) 4(\log n)^2/n = 2(\log n)^2$. It follows that
$$\Pr\left(\min_{k}Z_k  \geq C_3 \delta_n\right) \leq  \Pr\left(X \leq 6K\log(n)\right) \leq \exp(-\Omega((\log n)^2))\leq \frac{1}{n} \leq C_3\delta_n $$

\noindent where the second inequality was obtained by applying the Chernoff bound. Hence, we have shown that

$$\E[\min_k\{\alpha^{\max}_k- \alpha^{\min}_k\}] \leq   \E\left[\min_k\{Z_k\}\right] \leq  C_3\delta_n  + \Pr\left(\min_{k}Z_k \geq C_3\delta_n\right) \leq 2C_3\delta_n,$$

which completes the proof.
\endproof

\blankline

We now establish an upper bound for the case in which $m \geq 6K\log(n_{\me})$. For the following results up to Lemma~\ref{lem:bd_min_diff_alpha_gen_k} we shall assume $m \geq 6K\log(n_{\me})$.

Before we move on, we briefly give some geometric intuition regarding the problem. For each agent $j \in \me$, let $\epsilon_j=(\epsilon^1_j, \ldots, \epsilon^K_j)$ denote the profile of values assigned by the $K$ types of agents in $\ml$ to agent $j$. Given our stochastic assumptions, all points $\epsilon_j$ will be distributed in the $[0,1]^K$ hypercube. Using Proposition~\ref{prop:general_nec_cond}, we can partition the $[0,1]^K$-hypercube into $K+1$ disjoint regions: $K$ of them containing the $n_k$ points corresponding to agents matched to type $k$ ($1\leq k \leq K$), and one region containing all unmatched agents. Furthermore, the region containing the unmatched agents is an orthotope\footnote{An orthotope (also called a hyperrectangle or a box) is the generalization of a rectangle for higher dimensions} that has the origin as a vertex. This follows for the (IM) constraints in  Proposition~\ref{prop:general_nec_cond}.

To that end, let $\mathcal{O}$ be the set of $K$-orthotopes contained in $[0,1]^K$ that have the origin as a vertex.
Suppose $R$ is expanded by the same amount $\theta$ in each coordinate direction. Define $D(R)$ as the smallest value of $\theta$ such that an additional point $\eps_j$ is contained in the expanded orthotope. (If one of the side lengths becomes 1 before an additional point is reached, then define $D(R)=0$. This will never occur for $R$ that contains only the unmatched agents.)
As usual, let $Z_k = \min_{j \in M(k)} \epsilon^k_j$ and $U_k=  \max_{j \in U} \epsilon^k_j$. We want to show that  $\E\left[\min_k\{Z_k- U_k\}\right] \leq C_5\delta_n$, for some constant $C_5=C_5(K)<\infty$. To that end, note that $\min_k\{Z_k-U_k\}$ is equal to $D(R)$ for some orthotope $R\in \mathcal{O}$. In particular, $\min_k\{Z_k-U_k\}$ is equal to $D(R)$ when $R$ is the orthotope that ``tightly" contains all the $m$ points in $U$.


\blankline
For $R \in \mathcal{O}$, let $V(R)$ be defined as the volume of $R$. In addition, we define $|R|$ to be the number of points contained in $R$. We start by showing that, given that $m\geq 6K\log(n)$, an orthotope in $\mathcal{O}$  of volume less than $\frac{m}{4n_{\me}}$ in extremely unlikely to contain $m$ points.

\begin{lemma}\label{lem:bounds_on_size_of_orthotope}
Suppose $m\geq 6K\log(n)$. For $R \in \mathcal{O}$ such that $V(R) < \frac{m}{4n_{\me}}$, we have $\Pr\left(|R|= m\right) \leq  \frac{1}{n^{K+1}}$, where $V(R)$ denotes the volume and $|R|$ denotes the number of points in $R$.
\end{lemma}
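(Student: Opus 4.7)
The plan is to recognize that for a fixed orthotope $R \in \mathcal{O}$, the random variable $|R|$ counts the number of the $n_{\me}$ i.i.d. uniform points $\eps_j$ falling inside $R$, so $|R| \sim \textrm{Bin}(n_{\me}, V(R))$. The hypothesis $V(R) < m/(4n_{\me})$ rewrites as $\mu := \mathbb{E}[|R|] = n_{\me}V(R) < m/4$. Since $|R| = m$ implies $|R| \geq m$, and $m > \mu$, the statement reduces to an upper tail bound for a binomial, which is a routine Chernoff application.

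First I would apply the standard multiplicative Chernoff bound in the form $\Pr(|R| \geq a) \leq (e\mu/a)^a$ valid for $a \geq \mu$, with $a = m$. This yields
\begin{align*}
\Pr(|R| \geq m) \;\leq\; \left(\frac{e\mu}{m}\right)^{\!m} \;\leq\; \left(\frac{e}{4}\right)^{\!m},
\end{align*}
using $\mu < m/4$. Taking logarithms gives $\log \Pr(|R| \geq m) \leq -m\log(4/e)$. Because $\log(4/e) > 1/3$, and using the hypothesis $m \geq 6K\log n$, we obtain
\begin{align*}
\log \Pr(|R| \geq m) \;\leq\; -\tfrac{1}{3}\, m \;\leq\; -2K\log n,
\end{align*}
so $\Pr(|R| \geq m) \leq n^{-2K} \leq n^{-(K+1)}$ for $K \geq 1$, which is exactly the claim (since $\{|R| = m\} \subseteq \{|R| \geq m\}$).

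There is no real obstacle here; the lemma is essentially a one-line Chernoff estimate, and the choice of the constant $6K$ in the assumption on $m$ was made precisely so that the resulting exponential decay rate beats $n^{-(K+1)}$. The only thing to double-check when writing it up cleanly is the constant: either invoke $(e/4)^m$ directly, or use an additive-form Chernoff bound such as $\Pr(X \geq m) \leq \exp(-(m-\mu)^2/(2m))$, which with $m - \mu > 3m/4$ yields $\exp(-9m/32)$ and the same conclusion. The more delicate aspect of the overall argument (handling the fact that the true $R$ depends on the realization) is deferred to the subsequent union-bound step over a discretization of $\mathcal{O}$, and is not part of this lemma.
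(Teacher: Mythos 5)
Your proof is correct and takes essentially the same route as the paper: recognize $|R|$ as a binomial count with mean below $m/4$, apply a multiplicative Chernoff upper-tail bound, and use $m \geq 6K\log n$ to beat $n^{-(K+1)}$. The only cosmetic difference is that you bound $|R|$ directly via $\mu < m/4$, whereas the paper compares $|R|$ to the count in a larger orthotope of volume exactly $m/(4n_{\me})$ and invokes stochastic dominance; both are fine, and your constants (using $K \geq 1$ rather than the paper's $K \geq 2$) also check out.
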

\proof Let $X$ denote number of points in an orthotope in $\mathcal{O}$ of volume $\frac{m}{4n_{\me}}$. Then, $X\sim \textrm{Bin}\left(n_{\me}, \frac{m}{4n_{\me}} \right)$. We have $\mu=\E[X]=m/4$. Using a  Chernoff bound we have,
\begin{align*}
 \Pr(X\geq m) = \Pr (X \geq 4 \mu) \leq (e^3/4^4)^{m/4} \leq \exp(-m/4)
\end{align*}
Now $m/4 \geq 6K\log n/4 \geq (K+1)\log n$, using $K\geq 2 $. Substituting back
we obtain $\Pr(X\geq m)\leq \exp(-(K+1) \log n) = 1/n^{K+1}$. But $|X|$ stochastically dominates
$|R|$ since $V(R)< \frac{m}{4n_{\me}}$. The result follows.
\endproof

\blankline

Our next step will be to bound $\Pr\left(D(R)>C_4\delta\Mid E \right),$ for $R \in \mathcal{O}$ and some constant $C_4=C_4(K)<\infty$ where $E$ is the event defined as $E=\{ |R| =m, ~V(R) \geq \frac{m}{4n_{\me}}\}$. 

\blankline


\begin{lemma}\label{lem:bound_on_D(R)_gen_k}
There exists some constant $C_4=C_4(K)<\infty$ such that, for all $R \in \mathcal{O}$ with $V(R) \geq \frac{m}{4n_{\me}}$, we have that $P\left(D(R)>C_4\delta_n\Mid |R| =m\right) \leq \frac{1}{n^{K+1}}$, where $\delta_n = \frac{\log(n)}{n^{\frac{1}{K}}m^{\frac{K-1}{K}}}$.
\end{lemma}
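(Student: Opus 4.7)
The plan is to condition on $|R|=m$ and use that, conditionally, the remaining $n_{\me}-m$ points are i.i.d.\ uniform in $[0,1]^K\setminus R$. Writing $R_\theta$ for $R$ expanded by $\theta$ in each outward coordinate direction (capped at the cube boundary) and $S(\theta):=R_\theta\setminus R$ for the resulting ``shell'', the event $\{D(R)>\theta\}$ is precisely that no remaining point lies in $S(\theta)$, so with $|S(\theta)|$ denoting Lebesgue measure,
\[
\Pr\!\left(D(R)>\theta \,\Big|\, |R|=m\right) \;=\; \left(1-\frac{|S(\theta)|}{1-V(R)}\right)^{n_{\me}-m}
\;\leq\; \exp\!\left(-(n_{\me}-m)\,|S(\theta)|\right),
\]
using $1-V(R)\le 1$. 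The bulk of the argument reduces to lower-bounding $|S(\theta)|$ uniformly over the shape of $R$ at $\theta=C_4\delta_n$.

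To do so, I would parameterize $R=\prod_{i=1}^K[0,\ell_i]$ and split indices into $A=\{k:\ell_k\le 1-\theta\}$ and $B=\{1,\dots,K\}\setminus A$, noting that $\ell_k>1-\theta\ge 1/2$ for $k\in B$ whenever $\theta\le 1/2$ (which holds for $n$ large enough). If $A=\emptyset$, the expansion already fills the cube ($R_\theta=[0,1]^K$); any of the $n_{\me}-m\ge 1$ remaining points then lies in $S(\theta)$ with certainty, forcing $D(R)\le 1-\min_k\ell_k<\theta$, so the conditional probability is trivially $0$. If $A\ne\emptyset$, I would decompose $S(\theta)$ into disjoint slabs along the coordinate directions to obtain $|S(\theta)|\ge \theta\sum_{k\in A}\prod_{j\ne k}\ell_j$, and then apply AM--GM on this $|A|$-term sum while absorbing the factors indexed by $B$ via $\prod_{j\in B}\ell_j\ge (1/2)^{|B|}$. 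A short calculation should then yield
\[
|S(\theta)|\;\ge\;c(K)\,\theta\,V(R)^{(K-1)/K}
\]
for an explicit constant $c(K)>0$ depending only on $K$.

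Combining these with $V(R)\ge m/(4n_{\me})$, with $n_{\me}-m = n_{\ml}\ge Cn$ from Assumption~\ref{assump:essential_assumpts}, with $n_{\me}\le n$, and with $\theta=C_4\delta_n = C_4\log n/(n^{1/K}m^{(K-1)/K})$ should produce
\[
(n_{\me}-m)\,|S(\theta)| \;\ge\; \frac{C\,c(K)\,C_4}{4^{(K-1)/K}}\;\log n,
\]
so the conditional probability is at most $n^{-(K+1)}$ provided $C_4$ is a sufficiently large constant depending on $K$ alone. The step I expect to require the most care is the shell-volume lower bound in the mixed case $A,B\ne\emptyset$ (when several $\ell_k$ are close to $1$): the AM--GM inequality must be applied only to the directions in $A$, with the factors $\ell_j$ for $j\in B$ handled via the crude lower bound $\ell_j>1/2$. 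Isolating the degenerate subcase $A=\emptyset$ up front is what keeps the algebra tractable.
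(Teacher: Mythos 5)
Your proposal is correct and follows essentially the same strategy as the paper: condition on $|R|=m$, observe that $\{D(R)>\theta\}$ forces the shell $S(\theta)$ to contain none of the remaining $n_{\ml}=n_{\me}-m$ points, lower-bound the shell volume, and finish with an exponential/Chernoff-type tail bound. The only substantive difference is in how the shell-volume lower bound is established. The paper simply asserts that the volume of the swept region is at least that of the cube of side $\ell=(m/4n_{\me})^{1/K}$ expanded by $C_4\delta_n$, giving $p\ge K\ell^{K-1}C_4\delta_n$, without justifying that the cube is the minimizer among orthotopes of that volume or addressing boundary effects. You instead prove $|S(\theta)|\ge c(K)\,\theta\,V(R)^{(K-1)/K}$ explicitly via the disjoint-slab decomposition followed by AM--GM, and you handle the capped sides by partitioning coordinates into $A$ and $B$. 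That is a cleaner and more rigorous route to the same estimate. One small simplification you could make: the $A/B$ split is slightly more machinery than necessary, because the event $\{D(R)>\theta\}$ already implies $\max_i\ell_i<1-\theta$ (if any side reaches the boundary at $\theta'\le\theta$, then by the definition of $D(R)$ either $D(R)=0$ or $D(R)\le\theta'<\theta$), so under the event of interest no capping occurs, $B=\emptyset$, and plain AM--GM over all $K$ directions gives $|S(\theta)|\ge K\theta V(R)^{(K-1)/K}$ directly. Your version is nonetheless sound and the final estimate $(n_{\me}-m)|S(\theta)|\ge\Omega(C_4\log n)$ gives the claimed $n^{-(K+1)}$ bound for $C_4$ large enough, exactly as in the paper.
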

\proof
Conditioned on $|R|=m$, the remaining $n_{\ml}=n_{\me}-m$ points are distributed uniformly i.i.d. in the complementary region of volume $(1-V(R))$.

Let $F_{C_4\delta_n}$ denote the region swept when $R$ is expanded by $C_4\delta_n$
along each coordinate axis. Clearly, $D(R)>C_4\delta_n$ if and only if region $F_{C_4\delta_n}$  contains no points.

Let $X$ denote the number of points in $F_{C_4\delta_n}$, and let $p$ denote the volume of $F_{C_4\delta_n}$. Then, $X\sim \textrm{Bin}(n_{\ml},p/(1-V(R)))$ and hence stochastically dominates $\textrm{Bin}(n_{\ml},p))$. Note that such a volume $p$ is at least the volume obtained when expanding the hypercube of side $\ell=\sqrt[K]{\frac{m}{4n_{\me}}}$ by $C_4\delta_n$ along each direction and therefore, $p \geq K \ell^{(K-1)} C_4\delta_n$. Hence,
\begin{align*}
&P(D(R)>C_4\delta_n) = \Pr(X=0) \leq (1-p)^{\ml} \leq \exp \left \{ -\Omega(np)\right \}\\
\leq\; &\exp \{ -\Omega(n(\frac{m}{n})^{(K-1)/K}C_4\delta_n)\} = \exp \{ -\Omega(C_4 \log n)\} \leq
\frac{1}{n^{K+1}}\, ,
\end{align*}
for appropriate $C_4$, where we have used Assumption~\ref{assump:essential_assumpts}.


\endproof

\blankline

\begin{lemma}\label{lem:bd_min_diff_alpha_gen_k}
Suppose $m \geq 6K\log(n)$. Then, there exists a constant $C_5=C_5(K)<\infty$, such that $\E\left[\min_k\{ \alpha^{\max}_k - \alpha^{\min}_k\}\right] \leq C_5\frac{\log(n)}{n^{\frac{1}{K}}m^{\frac{K-1}{K}}}$.
\end{lemma}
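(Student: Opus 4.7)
The plan is to leverage Lemmas~\ref{lem:bounds_on_size_of_orthotope} and~\ref{lem:bound_on_D(R)_gen_k} via a union bound over all candidate ``tight orthotopes'' that could arise from the $m$ unmatched agents. First I would invoke Condition (IM) of Proposition~\ref{prop:general_nec_cond}, exactly as in the proof of Lemma~\ref{lem:bd_of_diff_alpha_small_m_gen_k}, to reduce the problem to controlling $\E[\min_k(Z_k - U_k)]$, where $Z_k = \min_{j \in M(k)} \epsilon^k_j$ and $U_k = \max_{j \in U} \epsilon^k_j$. Setting $R^* := [0, U_1]\times\cdots\times [0, U_K]$, I note that (IM) forces every matched $j \in M(k)$ to satisfy $\epsilon^k_j \geq -\alpha_{k,1} \geq U_k$, so $R^*$ contains exactly the $m$ unmatched points and the quantity $\min_k (Z_k - U_k)$ coincides with $D(R^*)$, as observed in the discussion preceding Lemma~\ref{lem:bounds_on_size_of_orthotope}. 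It therefore suffices to show $\E[D(R^*)] = O(\delta_n)$.

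The main obstacle is that $R^*$ is itself random, so neither Lemma~\ref{lem:bounds_on_size_of_orthotope} nor Lemma~\ref{lem:bound_on_D(R)_gen_k}, both stated for a deterministic orthotope, applies directly. My plan is to enumerate all possible realizations of $R^*$: each side length $U_k$ is attained by $\epsilon^k_{j_k}$ for some $j_k \in \me$, so $R^* = [0,\epsilon^1_{j_1}]\times\cdots\times[0,\epsilon^K_{j_K}]$ for some tuple $(j_1,\ldots,j_K)\in\me^K$, giving at most $n_{\me}^K$ candidates. For each fixed tuple I would condition on the positions of the $K$ ``boundary points'' $\epsilon_{j_1},\ldots,\epsilon_{j_K}$; this freezes the candidate orthotope into a deterministic element of $\mathcal{O}$ while leaving the remaining $n_{\me}-K$ points i.i.d.\ uniform in $[0,1]^K$, so Lemmas~\ref{lem:bounds_on_size_of_orthotope} and~\ref{lem:bound_on_D(R)_gen_k} apply (up to the inconsequential loss of $K$ points from the sample). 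Splitting on whether the candidate has volume less than $m/(4n_{\me})$ or not, and using $\{\text{candidate}=R^*\}\subseteq\{|\text{candidate}|=m\}$, one of the two lemmas bounds the conditional probability that the candidate equals $R^*$ and $D > C_4\delta_n$ by $1/n^{K+1}$; the same bound then holds unconditionally.

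Taking a union bound over the at most $n^K$ tuples yields $\Pr(D(R^*) > C_4\delta_n) \leq n^K \cdot n^{-(K+1)} = 1/n$. Since $D(R^*)\leq 1$ always and $m \leq n$ gives $\delta_n \geq \log(n)/n$, so $1/n = O(\delta_n)$, I obtain $\E[D(R^*)] \leq C_4\delta_n + \Pr(D(R^*) > C_4\delta_n) \leq C_5 \delta_n$ for an appropriate $C_5 = C_5(K)$. Combined with the opening (IM) reduction, this establishes the claimed bound on $\E[\min_k(\alpha^{\max}_k - \alpha^{\min}_k)]$.
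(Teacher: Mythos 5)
Your proof is correct and matches the paper's architecture at the level of ingredients---reduce to bounding $\E[D(R^*)]$ via condition (IM), then invoke Lemmas~\ref{lem:bounds_on_size_of_orthotope} and~\ref{lem:bound_on_D(R)_gen_k} through a union bound over roughly $n^K$ candidate orthotopes---but the device for taming the randomness of $R^*$ is different, and worth noting. The paper discretizes: it bounds $\min_k(Z_k-U_k) \le \max_{R\in\mathcal{R}_\Delta}D(R)+1/n$, where $\mathcal{R}_\Delta$ ranges over the $n^K$ deterministic orthotopes on a $1/n$-grid that contain $m$ points, and then union-bounds over those grid boxes. You instead parametrize $R^*$ by the $K$ employers whose $\eps$-coordinates attain its side lengths, condition on their positions so the candidate orthotope becomes a deterministic element of $\mathcal{O}$, and then apply the two lemmas to the remaining $n_{\me}-K$ i.i.d.\ points. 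Both schemes yield the same $n^K\cdot n^{-(K+1)}=1/n$ tail probability; your version trades the paper's explicit $1/n$ grid-rounding error for the (as you correctly flag, harmless) loss of $K$ sample points, which perturbs the binomial parameters in the two lemmas by a fixed constant $K$ while the exponents remain of order $m\geq 6K\log n$ and $n\delta_n=\Omega(\log n)$, so the $n^{-(K+1)}$ bounds survive. Two small remarks: (i) it would be cleaner to replace ``coincides with $D(R^*)$'' by the one-sided inequality $\min_k(Z_k-U_k)\leq D(R^*)$, which is what actually holds and is all that is needed---a matched $j\in M(k)$ that is close to $R^*$ in coordinate $k$ may be far in another coordinate, so expanding $R^*$ uniformly can take strictly longer than $\min_k(Z_k-U_k)$ to absorb a new point; the paper's own exposition has the same imprecision. (ii) Your union bound ranges over ordered $K$-tuples from $\me$ rather than just from $U$; this over-counts candidates but is of course fine for an upper bound.
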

\proof
Let $Z_k = \min_{j \in M(k)} \epsilon^k_j$, $U_k=  \max_{j \in U} \epsilon^k_j$ and $\delta_n=\frac{\log(n)}{n^{\frac{1}{K}}m^{\frac{K-1}{K}}}$. By Condition (IM) in Proposition~\ref{prop:general_nec_cond}, we know that $\alpha^{\max}_k - \alpha^{\min}_k \leq  Z_k-U_k$. Then, $$\E\left[\min_k\{ \alpha^{\max}_k - \alpha^{\min}_k\}\right]~ \leq \E\left[\min_k\{Z_k-U_k\} \right].$$
In addition, $\min_k\{Z_k-U_k\}$ is equal to $D(R)$ for some orthotope $R\in \mathcal{O}$. In particular, $\min_k\{Z_k-U_k\}$ is equal to $D(R)$ when $R$ is the orthotope that ``tightly" contains all the $m$ points in $U$. Define $\mathcal{R}=\{R \in \mathcal{O}:~|R|=m\}$. Then, $$\E\left[\min_k\{Z_k-U_k\} \right] ~\leq ~ \E\left[\max_{R \in \mathcal{R}} \left\lbrace\textrm{D(R)}\right\rbrace\right].$$

To bound $\E\left[\max_{R \in \mathcal{R}} \left\lbrace\textrm{D(R)}\right\rbrace\right]$, consider the grid that results from dividing each of the $K$ coordinate axes in the hypercube into intervals of length $1/n$. Let $\Delta$ denote that grid. Suppose we just consider orthotopes in the grid, that is, the orthotopes whose sides are multiples of $\frac{1}{n}$. Let $\mathcal{R}_\Delta = \{R\in \mathcal{R}:~ R \in \Delta\}$. Then,

$$\max_{R \in \mathcal{R}} \left\lbrace\textrm{D(R)}\right\rbrace \leq \max_{R \in \mathcal{R}_\Delta} \left\lbrace\textrm{D(R)}\right\rbrace + \frac{1}{n},$$

\noindent and,
$$\E\left[\max_{R \in \mathcal{R}} \left\lbrace\textrm{D(R)}\right\rbrace\right] \leq \E\left[\max_{R \in \mathcal{R}_\Delta} \left\lbrace\textrm{D(R)}\right\rbrace\right] + \frac{1}{n}.$$

Hence, we just need a bound for $\E\left[\max_{R \in \mathcal{R}_\Delta} \left\lbrace\textrm{D(R)}\right\rbrace\right]$. Let $V_*= \frac{m}{4n}$. Note that $\textrm{D(R)} \leq 1$ for all $R\in \mathcal{O}$ and therefore,
\begin{eqnarray*}\E\left[\max_{R \in \mathcal{R}_\Delta} \left\lbrace\textrm{D(R)}\right\rbrace\right] & \leq & \E\left[\max_{R \in \mathcal{R'}_\Delta} \left\lbrace\textrm{D(R)}\right\rbrace\right] +  \Pr\left(\min_{R \in \mathcal{R}_\Delta}V(R) < V_*\right)
\end{eqnarray*}
where $\mathcal{R'}_\Delta = \{R \in \mathcal{R}_\Delta: V(R)\geq V_* \}$.
Now, by union bound
\begin{align*}
  \Pr\left(\min_{R \in \mathcal{R}_\Delta}V(R) < V_*\right) \leq \sum_{R \in \Delta: V(R)< V_*} \Pr (|R|=m) \leq n^K \cdot 1/n^{K+1} = 1/n \, .
\end{align*}
using $|\{R \in \Delta: V(R)< V_*\}| \leq |\{R \in \Delta\}| = n^{K}$ and Lemma~\ref{lem:bounds_on_size_of_orthotope}.


Further,
\begin{align*}
  \E\left[\max_{R \in \mathcal{R'}_\Delta} \left\lbrace\textrm{D(R)}\right\rbrace\right]
 \leq  \E\left[\max_{R \in \Delta: V(R) \geq V_*} \left\lbrace\textrm{D(R)} \ind(|R|=m)\right\rbrace\right]
\end{align*}
Now, %
\begin{align*}
  &\Pr\left [\max_{R \in \Delta: V(R) \geq V_*} \left\lbrace\textrm{D(R)} \ind(|R|=m) \right \rbrace >C_4\delta_n \right  ]\\
\leq \; &\sum_{R \in \Delta: V(R) \geq V_*} \Pr(|R|=m) \Pr [\textrm{D(R)}>C_4 \delta_n| |R|=m]\\
\leq \; &\sum_{R \in \Delta: V(R) \geq V_*} 1 \, \cdot \, 1/n^{K+1} \leq n^K/n^{K+1} = 1/n
\end{align*}
using a union bound and Lemma~\ref{lem:bound_on_D(R)_gen_k} to bound the probability of $D(R) \geq C_4\delta_n$.
It follows that
\begin{align*}
  \E\left[\max_{R \in \mathcal{R'}_\Delta} \left\lbrace\textrm{D(R)}\right\rbrace\right]
& \leq &  1 \, \cdot \, \Pr\left [\max_{R \in \Delta: V(R) \geq V_*} \left\lbrace\textrm{D(R)} \ind(|R|=m) \right \rbrace  >C_4\delta_n \right ] + C_4 \delta_n
= 1/n + C_4 \delta_n
\end{align*}

Substituting the individual bounds back, we obtain
$$
\E\left[\max_{R \in \mathcal{R}} \left\lbrace\textrm{D(R)}\right\rbrace\right] =  C_4 \delta_n + 2/n \leq C_5 \delta_n\, .
$$
 defining $C_5 = C_4+2$ and using $1/n\leq \delta_n$.

Overall,
$$\E\left[\min_k\{ \alpha^{\max}_k - \alpha^{\min}_k\}\right]~ \leq~ \E\left[\min_k\{Z_k-U_k\} \right] ~\leq ~ \E\left[\max_{R \in \mathcal{R}} \left\lbrace\textrm{D(R)}\right\rbrace\right]\leq C_5 \delta_n\,$$ as claimed.
\endproof

\blankline

We now proceed to show that, for every pair of types $k,q \in \mathcal{T}_{\ml}$ we have $$\E\left[\min_{j \in M(k)} (\epsilon^k_j - \epsilon^q_j) - \max_{j \in M(q)} (\epsilon^k_j - \epsilon^q_j)\right] \leq C_2\frac{\log(n_{\me})}{n_{\me}}.$$
\noindent for appropriate $C_2=C_2(K)< \infty$. 
This result is shown in Lemma~\ref{lem:bound_on_the_differences_gen_k}. Along the way, we establish a couple of intermediate results. 
\blankline

Let $Z_k = \min_{j \in M(k)} \epsilon^k_j$ and $U_k = \max_{j \in U} \epsilon^k_j$. Note that $Z_k$ is an upper bound for $-\alpha_k$. By the definition of $Z_k$, all the points corresponding agents in $M(k)$ must be contained in the orthotope $[1-Z_k,1]\times [0,1]^{K-1}$. The following proposition establishes that $Z_k$ cannot be arbitrarily close to $1$.

\blankline

\begin{lemma}\label{lem:no_unmatched_agent_below_delta_gen_k} Given a constant $c \in \real$, let the event $E_{c}$ be defined as $E_{c}=\{\max_k\min_{j \in M(k)} \epsilon^k_j \leq 1-c\}$. Then, there exist constants $\theta=\theta(K)>0$ and $C_6=C_6(K)>0$ such that, for large enough $n$, $E_{\theta}$ occurs with probability at least $1-\exp\left(- C_6 n\right)$.
\end{lemma}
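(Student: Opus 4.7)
The plan is to exploit a simple counting argument. Since $u(k,1)\geq 0$ and all productivity terms $\eta^{1}_i,\epsilon^{k}_j \in [0,1]$, every potential match has non-negative value. Labor being the short side ($n_{\ml}<n_{\me}$), every labor agent is matched in any maximum weight matching, so $|M(k)| = n_k$ for every $k \in \typelab$.

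The key observation is the following: if $\min_{j\in M(k)}\epsilon^{k}_j > 1-\theta$, then \emph{all} $n_k$ employers assigned to type $k$ have their $k$-th productivity coordinate exceeding $1-\theta$. Defining $N_k := |\{j \in \me : \epsilon^{k}_j \geq 1-\theta\}|$, the occurrence of the ``bad'' event $\overline{E_\theta}$ for type $k$ forces $N_k \geq n_k$. Crucially, $N_k \sim \textrm{Bin}(n_{\me},\theta)$, a distribution that depends only on the $\epsilon$'s for the coordinate $k$ and is independent of the realized matching $M$.

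I would take $\theta = C/2$, where $C$ is the constant from Assumption~\ref{assump:essential_assumpts}. Since $n_{\me}\leq n$ and $n_k \geq Cn$, we have $\E[N_k] = \theta\, n_{\me} \leq Cn/2 \leq n_k/2$, so $n_k \geq 2\E[N_k]$. A standard multiplicative Chernoff bound (e.g. Durrett~\cite{durrett2010probability}) then gives
$$
\Pr[N_k \geq n_k] \;\leq\; \Pr[N_k \geq 2\E[N_k]] \;\leq\; \exp(-\E[N_k]/3) \;\leq\; \exp(-C^2 n/12),
$$
using also $n_{\me}\geq Cn$ to lower-bound $\E[N_k]$. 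A union bound over the (at most $K$) labor types then yields
$$
\Pr\big(\xoverline{E_\theta}\big) \;\leq\; K\exp(-C^2 n/12) \;\leq\; \exp(-C_6\, n)
$$
for an appropriate $C_6 = C_6(K) > 0$ and all sufficiently large $n$, proving the lemma.

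I anticipate no substantial obstacle: this is essentially a tail bound. The only subtlety is the identity $|M(k)|=n_k$, which requires verifying that every labor agent is matched in the maximum weight matching; this is immediate from non-negativity of $u(\cdot,\cdot)$ together with non-negativity of the $\eta$ and $\epsilon$ terms. A minor remark is that the result holds for \emph{any} maximum weight matching one might pick (no uniqueness is needed), since the bound on $N_k$ is matching-independent.
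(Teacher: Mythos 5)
Your proof is correct and follows essentially the same route as the paper: both count the employers $j$ with $\epsilon_j^k \geq 1-\theta$ (a Binomial with mean $\theta n_{\me}$), observe that the bad event $\min_{j\in M(k)}\epsilon_j^k > 1-\theta$ forces this count to be at least $|M(k)| = n_k = \Omega(n)$, and finish with a Chernoff bound plus a union bound over $k$. The only cosmetic differences are in how $\theta$ is chosen from the constants in Assumption~\ref{assump:essential_assumpts}, and that you spell out the (correct, implicitly used) fact that all labor agents are matched, which relies on $u(k,1)\geq 0$.
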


\proof
Let $Z_k = \min_{j \in M(k)} \epsilon^k_j$. The proof follows from the previous observation that all the points corresponding agents in $M(k)$ must be contained in the orthotope of volume $(1-Z_k)$. Let $C<\infty$ be such that $\dfrac{n_{\me}}{n_{\ml}} \leq C$. By Assumption~\ref{assump:essential_assumpts}, such a $C$ must exist. Furthermore, by Assumption~\ref{assump:essential_assumpts}, there must exists $C_K \in \reals$ such that $n_k \geq C_K n$ for all $k \in \typelab$.  Let $n_{\me}$ be the total number of points in the cube $[0,1]^K$. Let $X$ denote the number of points out of the $n_{\me}$ ones that fall in the rectangle defined by $[1-\theta, 1][0,1]^{K-1}$. Then, $X\sim \textrm{Bin}(n_{\me}, \theta)$. Suppose we set $\theta< \frac{C_K}{2C}$.
Then, for large enough $n$ and appropriate $C_6 > 0$ we have

$$\Pr(Z_k > 1-\theta) \leq \Pr(X \geq C_Kn_{\ml}) \leq \Pr\left(X \geq \frac{C_Kn_{\me}}{C}\right) \leq \exp\left(- 2C_6 n\right) \leq (1/K) \exp\left(- C_6 n\right)  $$

\noindent where we have used a Chernoff bound, $2n_{\me}\geq n$, and $\exp(-C_6 n) \leq (1/K)$ for large enough $n$. The result follows from a union bound over possible $k$.
\endproof

\blankline

\begin{remark} \label{rem:Gkq}
Let $\theta$, $E_{\theta}$ and $C_6$ be as defined in the statement of Lemma~\ref{lem:no_unmatched_agent_below_delta_gen_k}. Define $G_{k,q}$  as  $$G_{k,q}=\left\lbrace x \in [0,1]^K:~ \left(x_ k \geq 1-\frac{\theta}{2} \text{ or } x_ q\geq 1-\frac{\theta}{2}\right) \textrm{ and } x_r<\frac{\theta}{2}\textrm{ for all } 1 \leq r \leq K,~ r\neq k,q\right\rbrace.$$ Under event $E_\theta$, we must have $G_{k,q} \subseteq M(k)\cup M(q)$.
\end{remark}

The above remark follows from Lemma~\ref{lem:no_unmatched_agent_below_delta_gen_k} and the definition of $G_{k,q}$. If $j: \eps_j  \in G_{k,q}$ were matched to a type $k' \notin \{k,q\}$, that will contradict maximality of the matching as, by swapping the matches of $j': j' \in M(k), \eps_{j'}^k=Z_k$ and $j$, the overall weight of the matching strictly increases. A similar argument rules out $j$ being unmatched.

%
%

\begin{lemma}\label{lem:max_sep_in_Gkq}
Let $G_{k,q}$ be as in Remark~\ref{rem:Gkq}, and let $\theta$ be as defined in Lemma~\ref{lem:no_unmatched_agent_below_delta_gen_k}. Define $G'_{k,q}$ as follows:
$$G'_{k,q} = G_{k,q} \cap  \{x \in [0,1]^K, |x_k-x_q| \leq 1-\theta\}$$

Let $\mathcal{V}^{kq} = \{x:~ x=\epsilon^k_j - \epsilon^q_j,~ \epsilon_j\in G'_{k,q}\}$, and let $$V^{kq}=\max(\text{Difference between consecutive values in } \cV^{kq} \cup \{-1+\theta,1-\theta\}).$$ Then, there exists a function $f(n)=O^*(1/n)$ such that $\Pr\left(\xoverline{\Ev^{kq}}  \right) \leq 1/n$ where $\Ev^{kq}$ is the event that ${V}^{kq} \leq f(n)$.
\end{lemma}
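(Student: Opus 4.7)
The plan is to follow the template of Lemmas~\ref{lemma:max_separation_hypercube} and \ref{lemma:cuboid_orderstat}: reduce via the coupling argument of Lemma~\ref{lemma:coupling} to a rate-$n$ Poisson process in $[0,1]^K$, project the points lying in $G'_{k,q}$ onto the axis $v = x^k - x^q$ to obtain a one-dimensional Poisson process on $[-1+\theta,1-\theta]$, and then show that every sub-interval of length $\Delta = O(\log n/n)$ along this axis contains at least one projected point with high probability. On this event, every gap between consecutive values in $\cV^{kq}\cup\{-1+\theta,1-\theta\}$ is bounded by $2\Delta = O^*(1/n)$, which is the required conclusion.

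The key geometric ingredient is a uniform positive lower bound on the $(K-1)$-dimensional cross-sectional volume of $G'_{k,q}$ at value $v = x^k - x^q$. A short computation gives $(\theta/2)^{K-1}$ uniformly for $v \in [-1+\theta,1-\theta]$: the $K-2$ coordinates outside $\{k,q\}$ each range over $[0,\theta/2]$, contributing a factor $(\theta/2)^{K-2}$, while the $(x^k,x^q)$ cross-section at fixed $v$ is a segment of length $\theta/2$, as the requirement $\max(x^k,x^q)\geq 1-\theta/2$ combined with $x^k,x^q \in [0,1]$ pins down an interval of exactly this length (provided we stay away from the degenerate regime near $|v|=1$). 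This is precisely why the statement restricts from $G_{k,q}$ to $G'_{k,q}$ by imposing $|x^k - x^q|\leq 1-\theta$: without it the cross-section degenerates like $1-|v|$ near $|v|=1$ and the density tends to $0$. The projected process on $[-1+\theta,1-\theta]$ is therefore one-dimensional Poisson with constant rate $\lambda = n(\theta/2)^{K-1} = \Theta(n)$.

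Given this, the remainder is a routine Poisson-tail and union-bound argument that mirrors the proof of Lemma~\ref{lemma:cuboid_orderstat}. I would partition $[-1+\theta,1-\theta]$ into $O(n/\log n)$ sub-intervals of length $\Delta = C\log n/n$, choosing $C=C(K)$ large enough that $\lambda\Delta \geq 3\log n$. The number of projected points in each sub-interval is Poisson with parameter at least $3\log n$, so a standard Poisson tail bound gives emptiness probability at most $1/n^3$ per sub-interval, and a union bound over the $O(n/\log n)$ sub-intervals shows that all are simultaneously non-empty with probability at least $1 - 1/n^2$. On this event, $V^{kq} \leq 2\Delta$, defining $f(n)=2C\log n/n = O^*(1/n)$. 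Finally, the coupling in Lemma~\ref{lemma:coupling} transfers the bound from the Poisson model to the i.i.d.\ uniform model with a constant factor loss in probability and in $f(n)$, using the monotonicity of $V^{kq}$ under point additions (adding points can only split gaps, never enlarge them). I do not anticipate any substantive obstacle; the only place requiring care is the geometric calculation of the cross-sectional volume and its uniformity in $v$, which is exactly what the definition of $G'_{k,q}$ was engineered to ensure.
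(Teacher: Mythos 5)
Your proposal is correct and follows exactly the route the paper indicates but omits: the authors state that the proof ``is similar to (and much simpler than) that leading to Lemma~\ref{lemma:max_separation_hypercube}'' and that the projected values behave like $\Theta(n)$ uniformly distributed points on $[-1+\theta,1-\theta]$, which is precisely what your cross-sectional computation establishes (the slice at fixed $v=x^k-x^q$ has $(K-1)$-dimensional measure exactly $(\theta/2)^{K-1}$ for all $|v|\le 1-\theta$, uniform in $v$), after which the Poisson-tile/union-bound argument and the coupling of Lemma~\ref{lemma:coupling} finish it off. No gaps; this is the paper's argument with the details filled in.
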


The proof of Lemma~\ref{lem:max_sep_in_Gkq} is omitted as the required analysis is similar to (and much simpler than) that leading to Lemma~\ref{lemma:max_separation_hypercube}. Essentially, $V^{kq}$ consists of values taken by $\Theta(n)$ points distributed uniformly and independently in $[-1+\theta, 1-\theta]$, so, with high probability, no two consecutive values are separated by more than $f(n)=O(\log n / n)$.

\blankline

In the next lemma we bound the difference between every pair of $\alpha$'s.

\begin{lemma}\label{lem:bound_on_the_differences_gen_k} 
Consider types $k,q\in \typelab$ and let $f$ be as defined in the statement of Lemma~\ref{lem:max_sep_in_Gkq}.
Under event $E_{\theta}\cap \Ev^{kq}$, in every stable solution we must have that $(\alpha_q^{\max} - \alpha_q^{\min}) \leq 2f(n) + (\alpha_k^{\max} - \alpha_k^{\min})$.
\end{lemma}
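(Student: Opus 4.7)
The plan is to reduce the claim to a bound on $L_{kq}-U_{kq}$, where
$L_{kq}:=\min_{j\in M(k)}(\eps_j^k-\eps_j^q)$ and $U_{kq}:=\max_{j\in M(q)}(\eps_j^k-\eps_j^q)$,
and then to extract that bound from the ``no points in the gap'' structure that $E_\theta\cap\Ev^{kq}$ forces on $\mathcal{V}^{kq}$. First I would specialize Proposition~\ref{prop:general_nec_cond}(ST) to $Q=1$: with only one employer type we may write $\alpha_{k,1}\equiv\alpha_k$, the $\tilde\eta$-terms cancel, and for every core vector $\alpha$ we obtain
\[
L_{kq}\;\geq\;\alpha_q-\alpha_k\;\geq\;U_{kq}.
\]
Picking core vectors $\bar\alpha,\underline\alpha$ realizing $\alpha_q^{\max}$ and $\alpha_q^{\min}$ respectively, the identity $\bar\alpha_q-\underline\alpha_q=[(\bar\alpha_q-\bar\alpha_k)-(\underline\alpha_q-\underline\alpha_k)]+[\bar\alpha_k-\underline\alpha_k]$ combined with $\bar\alpha_k-\underline\alpha_k\leq\alpha_k^{\max}-\alpha_k^{\min}$ yields
\[
\alpha_q^{\max}-\alpha_q^{\min}\;\leq\;(\alpha_k^{\max}-\alpha_k^{\min})\,+\,(L_{kq}-U_{kq}).
\]
Hence it suffices to show $L_{kq}-U_{kq}\leq 2f(n)$ under $E_\theta\cap\Ev^{kq}$.

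Next I would extract this gap bound by combining a threshold argument with the gap-control afforded by $\Ev^{kq}$. The stability inequalities above give a threshold structure: setting $X_j:=\eps_j^k-\eps_j^q$, we have $X_j\geq L_{kq}$ for every $j\in M(k)$ and $X_j\leq U_{kq}$ for every $j\in M(q)$, so any $j\in M(k)\cup M(q)$ necessarily has $X_j\notin (U_{kq},L_{kq})$. By Remark~\ref{rem:Gkq}, under $E_\theta$ every $j$ with $\eps_j\in G'_{k,q}\subseteq G_{k,q}$ lies in $M(k)\cup M(q)$, which forces $\mathcal{V}^{kq}$ to be disjoint from $(U_{kq},L_{kq})$. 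A short auxiliary check---again using $E_\theta$ to pick an employer in $M(q)$ with $\eps_j^q\leq 1-\theta$, respectively one in $M(k)$ with $\eps_j^k\leq 1-\theta$---shows $U_{kq}\geq -1+\theta$ and $L_{kq}\leq 1-\theta$. Consequently $[U_{kq},L_{kq}]$ is contained in a single consecutive gap of $\mathcal{V}^{kq}\cup\{-1+\theta,1-\theta\}$, which under $\Ev^{kq}$ has length at most $V^{kq}\leq f(n)$; this yields $L_{kq}-U_{kq}\leq f(n)\leq 2f(n)$ as required.

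The main obstacle---technically small but conceptually central---is verifying that both $U_{kq}$ and $L_{kq}$ lie in $[-1+\theta,1-\theta]$, which is precisely why $E_\theta$ is essential: without it $L_{kq}$ could lie outside the truncated interval used in the definition of $V^{kq}$ and the gap bound would fail. The corner cases in which $\{j\in M(k):\eps_j\in G'_{k,q}\}$ or $\{j\in M(q):\eps_j\in G'_{k,q}\}$ is empty require no separate treatment, since the definition of $V^{kq}$ already adjoins the endpoints $\{-1+\theta,1-\theta\}$ to $\mathcal{V}^{kq}$ before taking the maximum consecutive difference.
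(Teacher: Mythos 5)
Your proof is correct, and it follows the same strategy as the paper: use Remark~\ref{rem:Gkq} to place the points in $G'_{k,q}$ inside $M(k)\cup M(q)$, and then use the gap statistic $V^{kq}$ of Lemma~\ref{lem:max_sep_in_Gkq} to bound the slack in (ST). The one place where you diverge is welcome: the paper splits into the case where $G'_{k,q}$ meets both $M(k)$ and $M(q)$ and the degenerate case where it meets only one, handling the latter via the (IM) bound $\alpha_q-\alpha_k\in[-1+\theta,1-\theta]$. You instead verify directly (via $E_\theta$ and $\eps_j^k,\eps_j^q\geq 0$) that $U_{kq}\geq -1+\theta$ and $L_{kq}\leq 1-\theta$, and then observe that the open interval $(U_{kq},L_{kq})$ is disjoint from $\cV^{kq}\cup\{-1+\theta,1-\theta\}$, so $L_{kq}-U_{kq}$ is bounded by a single consecutive gap, i.e.\ by $V^{kq}$. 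This unifies the two cases and in fact delivers $f(n)$ rather than the paper's $2f(n)$ (both suffice for the lemma as stated). The decomposition $\alpha_q^{\max}-\alpha_q^{\min}\leq(\alpha_k^{\max}-\alpha_k^{\min})+(L_{kq}-U_{kq})$ is also a more transparent way to get from the slack $L_{kq}-U_{kq}$ to the displayed conclusion than the paper's implicit ``range of $\alpha_q-\alpha_k$ plus range of $\alpha_k$'' step. No gaps.
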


\begin{proof}
We claim that under $E_\theta$, we must have $\alpha_q- \alpha_k$ varies within a range of no more than $V^{kq}$ within the core,
where $V^{kq}$ is as defined in the statement of Lemma~\ref{lem:max_sep_in_Gkq}.
By Remark~\ref{rem:Gkq}, under event $E_{\theta}$ we must have $G'_{kq}\subset M(k)\cup M(q)$, where $G'_{kq}$ is as defined in the statement of Lemma~\ref{lem:max_sep_in_Gkq}. Suppose that $G'_{kq}$ contains at least one vertex matched to type $k$ and one to type $q$. Then, by Condition~(ST) in Proposition~\ref{prop:general_nec_cond} we must have:
\begin{eqnarray*}
(\alpha_q - \alpha_k)^{\text{max}} - (\alpha_q - \alpha_k)^{\text{min}}  & \leq & \min_{j \in M(k)}\{ \epsilon^{k}_j - \epsilon^q_j \} - \max_{j \in M(q)}\{ \epsilon^{k}_j - \epsilon^q_ j\} \\
& \leq & \min_{j \in M(k)\cap G'_{k,q}} \{ \epsilon^{k}_j - \epsilon^q_j \} - \max_{j \in M(q) \cap G'_{k,q}}\{ \epsilon^{k}_j - \epsilon^q_j\} \\
& \leq & V^{kq}
\end{eqnarray*}
Next, consider the case in which all vertices in $G'_{kq}$ are matched to type $k$ (the analogous argument follows if they are all matched to type $q$). Under event $E_{\theta}$, by Condition~(IM) in Proposition~\ref{prop:general_nec_cond} we must have $0 \leq - \alpha_k \leq 1-\theta$ and $0 \leq -\alpha_q \leq 1-\theta$. Therefore, $\alpha_q-\alpha_k \in [-1+\theta, 1-\theta]$. In addition, by Condition~(ST) in Proposition~\ref{prop:general_nec_cond} we must have $\alpha_q - \alpha_k \leq \min_{j \in M(k)}\{ \epsilon^{k}_j - \epsilon^q_j \}$. However,
\begin{eqnarray*}
(\alpha_q - \alpha_k)^{\text{max}} - (\alpha_q - \alpha_k)^{\text{min}}& \leq & \min_{j \in M(k)}\{ \epsilon^{k}_j - \epsilon^q_j \} -(- 1 + \theta)\\
& \leq & \min_{j \in M(k)\cap G'_{k,q}}\{ \epsilon^{k}_j - \epsilon^q_j\} - (-1 + \theta) \\
& = & \min_{j \in G'_{k,q}}\{ \epsilon^{k}_j - \epsilon^q_j\} - (- 1 + \theta) \\
& \leq & V^{kq}
\end{eqnarray*}

 It follows that $(\alpha_q^{\max} - \alpha_q^{\min}) \leq 2 V^{kq} + (\alpha_k^{\max} - \alpha_k^{\min})$. By definition, under $\Ev^{kq}$ we have $V^{kq} \leq f(n)$, which completes the proof.
\end{proof}

\blankline

Finally, we complete the last step of the proof by showing the main theorem.

\proof[Proof of Theorem \ref{thm:K-1-emp}]
By definition,  $\core = \sum_{k=1}^K \dfrac{N(k)|\alpha^{\max}_{k} - \alpha^{\min}_{k}|}{n_{\ml}}$, where $N(k)$ is defined to be the number of agents of type $k$ that are matched. For a given instance, let $k^*= \textrm{argmin}_k{\{\alpha^{\max}_k - \alpha^{\min}_k\}}$.
Let $\Ev = E_\theta \cap (\cap_{k,q}\Ev^{k,q})$.
Note that using Lemmas \ref{rem:Gkq} and \ref{lem:max_sep_in_Gkq} and a union bound, we obtain that $$
\Pr ( \overline{\Ev}  ) \leq \Pr(\overline{E_\theta})+\sum_{k,q \in \cK: k\neq q} \Pr(\overline{\Ev^{kq}}) = O(1/n)\, .
$$
By Lemma~\ref{lem:bound_on_the_differences_gen_k}, under $\Ev$, for every $k \in \typelab$ we have $$\alpha_k^{\max} - \alpha_k^{\min} \leq 2f(n)+ \alpha^{\max}_{k^*} - \alpha^{\min}_{k^*}.$$
Therefore,
\begin{eqnarray*}
\E[\core] & \leq & \E\left[ \alpha^{\max}_{k^*} - \alpha^{\min}_{k^*} \right] + 2f(n) + \Pr ( \xoverline{\Ev}  ) \cdot  O(1)\\
        & \leq &  O\left ( \frac{\log(n)}{n^{\frac{1}{K}}m^{\frac{K-1}{K}}}\right ) + O^*(1/n) + O(1/n) \\
        & = & O^*\left(\frac{1}{n^{\frac{1}{K}}m^{\frac{K-1}{K}}} \right)
\end{eqnarray*}
\noindent where the first inequality follows from the above together with using the upperbound of $O(1)$ for the core size; the second inequality is obtained by  using the bound on $ \E\left[ \alpha^{\max}_{k^*} - \alpha^{\min}_{k^*} \right]$ from Lemma~\ref{lem:bd_of_diff_alpha_small_m_gen_k} for $m \leq 6K\log(n)$  and Lemma~\ref{lem:bd_min_diff_alpha_gen_k} for  $m\geq 6K \log(n)$, as well as the definition of $f(n)$ and $\Pr ( \overline{\Ev}  ) = O(1/n)$  shown above.
\endproof

\end{document}